\tikzstyle{vertex}=[circle, draw, inner sep=0pt, minimum size=4pt]
\tikzset{main node/.style={circle,fill=blue!20,draw,minimum size=1cm,inner sep=0pt},
            }
\newtheorem{theorem}{Theorem}[section]
\newtheorem{lemma}[theorem]{Lemma}
\newtheorem{pro}[theorem]{Proposition}
\newtheorem*{conj*}{Conjecture}
\newtheorem{remark}[theorem]{Remark}
\theoremstyle{definition}
\newtheorem{example}[theorem]{Example}
\theoremstyle{remark}
\numberwithin{equation}{section}
\begin{document}
\pagestyle{plain}

\def\beq{\begin{equation}}
\def\eeq{\end{equation}}
\def\eps{\epsilon}
\def\laa{\langle}
\def\raa{\rangle}
\def\qed{\begin{flushright} $\square$ \end{flushright}}
\def\qee{\begin{flushright} $\Diamond$ \end{flushright}}
\def\ov{\overline}
\def\bma{\begin{bmatrix}}
\def\ema{\end{bmatrix}}

\def\ora{\overrightarrow}

\def\bma{\begin{bmatrix}}
\def\ema{\end{bmatrix}}
\def\bex{\begin{example}}
\def\eex{\end{example}}
\def\beq{\begin{equation}}
\def\eeq{\end{equation}}
\def\eps{\epsilon}
\def\laa{\langle}
\def\raa{\rangle}
\def\qed{\begin{flushright} $\square$ \end{flushright}}
\def\qee{\begin{flushright} $\Diamond$ \end{flushright}}
\def\ov{\overline}

\author[ag]{F. A. Gr\"unbaum}
\address{Department of Mathematics, University of California, Berkeley, CA 94720, USA}
\email{grunbaum@math.berkeley.edu}

\author[cfelipe]{C. F. Lardizabal}
\address{Instituto de Matem\'atica e Estat\'istica, Universidade Federal do Rio Grande do Sul, Porto Alegre, RS  91509-900 Brazil.}
\email{cfelipe@mat.ufrgs.br}

\author[lv]{L. Vel\'azquez}
\address{Departamento de Matem\'atica Aplicada \& IUMA, Universidad de Zaragoza, Mar\'ia de Luna 3,
50018 Zaragoza, Spain.}
\email{velazque@unizar.es}

\date{\today}

\title{Quantum Markov chains: recurrence, Schur functions and splitting rules}

\begin{abstract}
In this work we study the recurrence problem for quantum Markov chains, {\color{black}{ which are}} quantum versions of classical Markov chains introduced by S.~Gudder and described in terms of completely positive maps. A notion of monitored recurrence for quantum Markov chains is examined in association with Schur functions, which codify information on the first return to some given state or subspace. Such objects possess important factorization and decomposition properties which allow us to obtain probabilistic results based solely on those parts of the graph where the dynamics takes place, the so-called splitting rules. These rules also yield an alternative to the folding trick to transform a doubly infinite system into a semi-infinite one which doubles the number of internal degrees of freedom. The generalization of Schur functions --so-called FR-functions-- to the general context of closed operators in Banach spaces is the key for the present applications to open quantum systems. An important class of examples included in this setting are the open quantum random walks, as described by S.~Attal et al., but we will state results in terms of general completely positive trace preserving maps. We also take the opportunity to discuss basic results on recurrence of finite dimensional iterated quantum channels and quantum versions of Kac's Lemma, in close association with recent results on the subject.
\end{abstract}

\setcounter{MaxMatrixCols}{15}

\maketitle

\tableofcontents

\newpage

\section{Introduction}

The focus of this work is on the study of recurrence --i.e. return properties-- for quantum Markov chains \cite{gudder}, {\color{black} which are} quantum versions of classical Markov chains described in terms of completely positive (CP) and trace preserving (TP) maps \cite{benatti,paulsen}. CPTP maps, also called quantum channels, are central in the theory of open quantum systems \cite{wolf}. It turns out that CP maps are not only just an ingredient of quantum Markov chains, but quantum Markov chains can be viewed as the result of iterating CPTP maps (see Subsect.~\ref{ssec:pre}). Due to this reason, we will present the recurrence problem for general iterated CPTP maps, since they constitute a simpler and at the same time a wider setting to study return properties.

In quantum theory there are in principle several nonequivalent notions of recurrence and in this work we follow a recent trend of considering a so-called monitored recurrence for discrete-time quantum systems \cite{ambainis,bourg,cgl,gvfr,werner,brun,ls2015,gawron,sinkovicz,sinkovicz2}: the return properties are defined through a process in which the system is monitored after each time step via a projective measurement. This measurement only checks whether the system returns to a required state --or subspace of states-- or not, so that quantum coherence is only partially destroyed and non-classical effects arise \cite{werner,bourg}.

As in P\'olya's classical recurrence theory, generating functions have proven to be an invaluable tool in quantum recurrence. The generating function approach links recurrence to well known objects in classical analysis known as Schur functions \cite{werner,bourg}. Schur functions are the analytic maps of the open unit disk into its closure and, just as their matrix and operator valued versions, they are central elements in harmonic analysis providing fruitful connections with other areas of mathematics and its applications. 

An extension of the notion of Schur function known as first-return function (FR-function for short) has revealed to be the appropriate formalism unifying the generating function approach to recurrence in classical Markov chains, unitary quantum walks and iterated quantum channels \cite{gvfr}. FR-functions are operator valued functions {\color{black} which can be written as} 
\beq \label{eq:FR}
f(z)=PT(1-zQT)^{-1}P,
\qquad Q=I-P,
\eeq
where $T$ is a linear operator on a Banach space and $P$ is a projection onto a closed subspace. When $T$ is an evolution operator and the complementary projection $Q$ is interpreted as an operator conditioning on the event ``no return", then the FR-function $f$ codifies the information about the return properties of the subspace where $P$ projects \cite{werner,bourg,gvfr} (see also Sect.~\ref{sec:REC}). 

The first link between recurrence and Schur functions appeared in \cite{werner}, where the FR-functions encoding the return properties of a state subject to a unitary discrete evolution were identified as Schur functions. This result was extended in \cite{bourg} to the return properties of a subspace, in which case the corresponding FR-functions become matrix valued Schur functions. The identification of the FR-functions for classical Markov chains as Schur functions only came later \cite{gvfr}, and required the extension of the previous results on unitaries in Hilbert spaces to operators in Banach spaces defined by stochastic matrices. 

The generalization of FR-functions to the setting of operators in Banach spaces  also paved the way to the discovery of the relation between FR-functions and recurrence in iterated quantum channels \cite{gvfr}. However, no interpretation as Schur functions is known so far. As a first original result of this paper we will prove that the FR-functions related to recurrence in iterated quantum channels and quantum Markov chains are again Schur functions built out of operators in Banach spaces, defined in this case by CPTP maps. The proof of this result relies on a non-trivial property of CPTP maps (see Theorem~\ref{thm:SCHUR}). 

FR-functions satisfy splitting properties which extend useful factorizations of Schur functions of interest for the study of recurrence in unitary quantum walks (UQWs) \cite{cgvww}. They provide recurrence splitting rules which reduce the return properties of a large system to those of some subsystems \cite{gvfr}. Apart from their role as a divide and conquer technique, these splitting properties have striking consequences, such as the invariance of the return probability under certain local perturbations \cite{gvfr} (see also the comments in \cite[Sect.~6]{werner} and \cite[Sect.~4]{bourg}). The possibility of exploiting these splitting methods for the analysis of recurrence in iterated quantum channels was pointed out in \cite{gvfr}, where such splitting techniques were merely illustrated by an example which deals with a special instance of iterated quantum channels, known as open quantum walks (OQWs){, introduced for the first time in \cite{attal}. Although a particular case of quantum Markov chains, OQWs were the main motivation behind several results in recent years regarding the dynamical and statistical properties of dissipative quantum walks on graphs, see \cite{spsurvey} for a recent survey on the subject. Regarding hitting probabilities and recurrence in the setting of OQWs, see \cite{bardet,cgl,ls2015,gawron}.}

This work provides a detailed study of the consequences of the splitting properties for FR-functions regarding recurrence in quantum Markov chains. Two kinds of FR-function splittings, related to factorizations and decompositions into sums of the underlying operator, yield two types of splitting rules for quantum Markov chains and, thus, for the particular case of classical Markov chains. As a consequence of these splitting rules we will see that, similarly to UQWs, the return probability is also invariant under certain local perturbations.

Other novel contributions of this paper deal with generalizations of known results on recurrence in classical Markov chains and UQWs. Different quantum generalizations of Kac's lemma for classical Markov chains are commented in Subsect.~\ref{ssec:kac}. Besides, Theorem~\ref{thm:rec-fi} constitutes the quantum version of a well known result for classical Markov chains, namely, that finiteness and irreducibility imply the positive recurrence of every state, i.e. every state returns to itself with probability one and in a finite expected return time. A similar result holds for finite-dimensional unitary evolutions, which in addition present the striking particularity of exhibiting integer valued expected times for the return to any state \cite{werner}. These results has been generalized to the return to a subspace in \cite{bourg}, while \cite{sinkovicz} proves that they hold for the return to a state in iterated quantum channels whenever they are unital. We extend this property of unital quantum channels to the return to a subspace in Theorem~\ref{thm:rec-ui}, following the ideas developed in \cite{bourg} for UQWs.

The structure of the present paper is as follows: the rest of this introduction is devoted to a summary on CP maps, quantum Markov chains and OQWs, which clarifies the relationships among them. In Section~\ref{sec:REC} we introduce the notion of monitored recurrence for iterated CPTP maps, formulating it in terms of FR-functions and Schur functions. For quantum channels acting on finite dimensional Hilbert spaces we prove that irreducibility implies positive recurrence of every subspace, while unitality adds to this the quantization of the average over the subspace of the mean return time --it is a multiple of the inverse of the dimension of the subspace. Also, different quantum versions of Kac's Lemma are discussed. The recurrence splitting rules for quantum Markov chains arising from the splitting properties of FR-functions are analyzed in Section~\ref{sec:SPLIT}, exploiting them to conclude the independence of the return probability with respect to the details of the evolution in certain subsystems. These splitting rules are originated by similar splittings of the underlying evolution operator. Section~\ref{sec:CHAR} provides practical ways to identify these recurrence splitting rules by characterizing the corresponding operator splittings. Along these sections, the ideas and results are illustrated with finite-dimensional examples. The practical application of these methods to infinite-dimensional examples, such as quantum Markov chains on the line or the half-line, require additional techniques which are introduced and illustrated in Section~\ref{sec:1D}.

\subsection{Preliminaries: CP maps, TOMs and OQWs}
\label{ssec:pre}
 
Let $\mathcal{H}$ be a separable Hilbert space with inner product $\langle\,\cdot\,|\,\cdot\,\rangle$, whose closed subspaces will be referred to  as subspaces for short. The superscript ${}^*$ will denote the adjoint operator. The Banach algebra $\mathcal{B}(\mathcal{H})$ of bounded linear operators on $\mathcal{H}$ is the topological dual of its ideal $\mathcal{I}(\mathcal{H})$ of trace-class operators with trace norm
$$
\|\rho\|_1=\operatorname{Tr}(|\rho|),
\qquad 
|\rho|=\sqrt{\rho^*\rho},
$$
through the duality \cite[Lec. 6]{attal_lec}
\beq \label{eq:dual}
\langle \rho,X \rangle = \operatorname{Tr}(\rho X),
\qquad
\rho\in\mathcal{I}(\mathcal{H}),
\qquad 
X\in\mathcal{B}(\mathcal{H}).
\eeq
If $\dim\mathcal{H}=k<\infty$, then $\mathcal{B}(\mathcal{H})=\mathcal{I}(\mathcal{H})$ is identified with the set of square matrices of order $k$, denoted $M_k(\mathbb{C})$. The duality \eqref{eq:dual} yields a useful characterization of the positivity of an operator $\rho\in\mathcal{I}(\mathcal{H})$,
\beq \label{eq:pos-dual}
\rho\in\mathcal{I}(\mathcal{H}): 
\quad
\rho\ge0 \; \Leftrightarrow \; \operatorname{Tr}(\rho X)\ge0,
\quad
\forall X\in\mathcal{B}(\mathcal{H}), 
\quad 
X\ge0,
\eeq
and similarly for the positivity of $X\in\mathcal{B}(\mathcal{H})$.
Given a linear map $\Phi$ on $\mathcal{I}(\mathcal{H})$, its dual $\Phi^*$ on $\mathcal{B}(\mathcal{H})$ is defined by
$$
\langle \Phi(\rho),X \rangle = \langle \rho,\Phi^*(X) \rangle,
\qquad
\rho\in\mathcal{I}(\mathcal{H}),
\qquad 
X\in\mathcal{B}(\mathcal{H}).
$$ 
We refer the reader to \cite{attal_lec,benatti,bhatia,nielsen,wolf} for the definition of completely positive (CP) maps. For our purposes it suffices to recall that a CP map $\Phi$ on $\mathcal{I}(\mathcal{H})$ and its dual $\Phi^*$ on $\mathcal{B}(\mathcal{H})$ can be written {\color{black}{ in Kraus form \cite{kraus}}}
\begin{equation} \label{krausform}
\Phi(\rho)=\sum_i B_i\rho B_i^*, \qquad\qquad \Phi^*(X)=\sum_i B_i^* X B_i,
\end{equation}
where the set $\{B_i\}\subset\mathcal{B}(\mathcal{H})$ is countable. When this set is countably infinite $\sum_i B_i^*B_i$ converges strongly in $\mathcal{B}(\mathcal{H})$, so that the first sum in \eqref{krausform} converges with respect to the trace norm and the second one also converges strongly, both sums being unconditional \cite{kraus}. We refer to $\{B_i\}$ as a set of {\bf Kraus operators} for $\Phi$, or as an {\bf unravelling} of $\Phi$. Sometimes we will write in condensed form $\Phi = \sum_i B_i \cdot B_i^*$.

We will {\color{black} often} restrict the action of $\Phi$ to the subset of {\bf states}, represented by the {\bf density operators},
\beq 
\mathcal{D}(\mathcal{H}) =
\{\rho\in \mathcal{I}(\mathcal{H}): \rho\geq 0,\; \operatorname{Tr}(\rho)=1\},
\eeq
where $\rho\geq 0$ means that the operator is positive semidefinite (positive, for short). 

This restriction determines the whole linear map $\Phi$ on $\mathcal{I}(\mathcal{H})$ due to the positive decomposition of a trace-class operator, which comes from combining the standard decomposition {\color{black} in terms of} the self-adjoint real and imaginary parts with the decomposition of a self-adjoint trace-class operator $\rho=\rho_+-\rho_-$ {\color{black} in terms of} positive ones $\rho_\pm=\frac{1}{2}(|\rho|\pm \rho)$. 
A state is {\bf pure} if it is a projection $\rho_\psi:=|\psi\rangle\langle\psi|$ for some unit vector $\psi\in\mathcal{H}$, so eventually we will identify the set $\mathcal{D}_p(\mathcal{H})$ of pure states with the set of unit vectors of $\mathcal{H}$ up to phases. A state represented by a strictly positive density $\rho>0$ is called {\bf faithful}. A quantum channel is a CP map $\Phi$ on $\mathcal{I}(\mathcal{H})$ which is {\bf trace-preserving} (CPTP), i.e. $\operatorname{Tr}(\Phi(\rho))=\operatorname{Tr}(\rho)$ for all $\rho\in\mathcal{I}(\mathcal{H})$, a condition which, in view of \eqref{eq:pos-dual}, is equivalent to $\sum_i B_i^*B_i=I$. We say that $\Phi$ is {\bf unital} if $\Phi(I)=I$, i.e. $\sum_i B_iB_i^*=I$, but we should bear in mind that $\Phi(I)$ could make no sense if $\dim\mathcal{H}=\infty$. 

For any CP map $\Phi$ on $\mathcal{I}(\mathcal{H})$, the norm $\|\Phi\|=\sup_{\rho\in\mathcal{I}(\mathcal{H})\setminus\{0\}}\|\Phi(\rho)\|_1/\|\rho\|_1$ may be obtained as \cite{kraus} 
\beq \label{eq:CPnorm}
\|\Phi\|=\sup_{\rho\in\mathcal{D}(\mathcal{H})}\operatorname{Tr}(\Phi(\rho)),
\eeq 
the supremum being a maximum if $\dim\mathcal{H}<\infty$. Hence, $\|\Phi\|=1$ if $\Phi$ is CPTP. Norm convergence of CP maps results too restrictive, thus the convergence of a sequence of CP maps $\Phi_n$ will be understood in the strong sense with respect to the trace norm, i.e. as the convergence of $\Phi_n(\rho)$ for any $\rho\in\mathcal{I}(\mathcal{H})$, which only needs to be checked for $\rho\ge0$ due to the positive decomposition of trace-class operators. In consequence, the convergence of a series $\sum_n\Phi_n$ of CP maps is characterized by the convergence of $\sum_n\operatorname{Tr}(\Phi_n(\rho))$ for any $\rho\ge0$. The latter is a series of non-negative terms, thus convergent series of CP maps are unconditionally convergent and infinite sums of CP maps with respect to different indices may be exchanged. We will implicitly use these results in what follows.

The vector representation $vec(A)$ of $A\in M_k(\mathbb{C})$, given by stacking together its rows, will be a useful tool. For instance,
\begin{equation}
A = \begin{bmatrix} a_{11} & a_{12} \\ a_{21} & a_{22} \end{bmatrix}
\quad\equiv\quad
vec(A):=\begin{bmatrix} a_{11} \\ a_{12} \\ a_{21} \\ a_{22}\end{bmatrix}.
\end{equation}
The $vec$ mapping satisfies $vec(AXB^T)=(A\otimes B)\,vec(X)$ {\cite{hj2}} for any square matrices $A, B, X$, with $\otimes$ denoting the Kronecker product. In particular, $vec(BXB^*)=vec(BX\ov{B}^T)=(B\otimes \ov{B})\,vec(X)$,
from which we can obtain the \textbf{matrix representation} $\widehat\Phi$ for a CP map ({\ref{krausform}})  when the underlying Hilbert space $\mathcal{H}$ is finite-dimensional:
\begin{equation}\label{matrep}
\widehat\Phi = \sum_{i} \lceil B_{i} \rceil, 
\qquad \lceil B \rceil := B \otimes \ov{B}.
\end{equation}
Here the operators $B_i$ are identified with some matrix representation. We have that $\lceil B \rceil^* = \lceil B^*\rceil $, where $B^*$ denotes the hermitian transpose of a matrix $B$.

\medskip

Following S. Gudder \cite{gudder}, a {\bf quantum Markov chain} on a graph with a finite or infinite set $V$ of vertices and a Hilbert space $\mathcal{H}$ of internal degrees of freedom may be described in terms of CP maps. The state of the system is described by a column vector
$$
\rho = \begin{bmatrix} \rho_0 \\ \rho_1 \\ \rho_2 \\ \vdots \end{bmatrix},
\qquad \rho_i\in\mathcal{I}(\mathcal{H}), 
\qquad \rho_i\ge0, 
\qquad \sum_{i\in V}\operatorname{Tr}(\rho_i)=1.
$$
After one time step, the system evolves to the state $\mathcal{E}(\rho)$ given by $\mathcal{E}(\rho)_i=\sum_{j\in V}\mathcal{E}_i^j(\rho_j)$, where 
$$
\mathcal{E} =
\begin{bmatrix}
\\[-10pt]
\mathcal{E}_0^0 & \mathcal{E}_0^1 & \mathcal{E}_0^2 & \dots
\\[2pt]
\mathcal{E}_1^0 & \mathcal{E}_1^1 & \mathcal{E}_1^2 & \dots
\\[2pt]
\mathcal{E}_2^0 & \mathcal{E}_2^1 & \mathcal{E}_2^2 & \dots
\\
\dots & \dots & \dots & \dots 
\end{bmatrix}
$$ 
is a {\bf Transition Operator Matrix} (TOM) \cite{gudder}, i.e. such that $\mathcal{E}_i^j$ are CP maps on $\mathcal{I}(\mathcal{H})$ and the column sums $\sum_{i\in V} \mathcal{E}_i^j$ are trace preserving (again, the summations are assumed to converge in the strong operator topology). 

Using an auxiliary Hilbert space $\mathcal{S}$ with an orthonormal basis $\{|i\rangle\}_{i\in V}$ in one-to-one correspondence with the set $V$ of vertices, we can see $\mathcal{E}$ equivalently as the positive operator
\beq \label{eq:TOM}
\mathcal{E}(\rho)=\sum_{i\in V} 
\bigg(\sum_{j\in V}\mathcal{E}_i^j(\rho_j)\bigg) \otimes |i\rangle\langle i|,
\eeq
acting on the Banach subspace of $\mathcal{I}(\mathcal{H}\otimes\mathcal{S})$ given by
$$
\mathcal{I}_\mathcal{S}(\mathcal{H}) :=
\{\rho\in\mathcal{I}(\mathcal{H}\otimes\mathcal{S}) : 
\textstyle \rho=\sum_{i\in V}\rho_i\otimes|i\rangle\langle i|\} =
\bigoplus_{i\in V} \mathcal{I}(\mathcal{H})\otimes|i\rangle\langle i|.
$$
For every $\rho\in\mathcal{I}_\mathcal{S}(\mathcal{H})$ we have that $\operatorname{Tr}(\rho)=\sum_{i\in V}\operatorname{Tr}(\rho_i)$ and the trace preserving condition for the column sums becomes $\operatorname{Tr}(\mathcal{E}(\rho))=\operatorname{Tr}(\rho)$. Then, the set of states is
$$
\mathcal{D}_\mathcal{S}(\mathcal{H}) = 
\{\rho\in\mathcal{I}_\mathcal{S}(\mathcal{H}):\rho\ge0,\;\operatorname{Tr}\rho=1\},
$$
and we will refer to any of its elements as a {\bf TOM density}. For convenience, we will usually refer to the subspace $\mathcal{H}\otimes|i\rangle \subset \mathcal{H}\otimes\mathcal{S}$ as site $|i\rangle$. The dynamics obtained by iterating $\mathcal{E}$ will be called a {\color{black} \bf{quantum Markov chain on $\mathcal{H}\otimes\mathcal{S}$}}.

Any TOM \eqref{eq:TOM} can be alternatively seen as a CPTP map on the whole space $\mathcal{I}(\mathcal{H}\otimes\mathcal{S})$ of the form
\beq \label{eq:TOM2}
\Phi = \sum_{i,j\in V} \mathcal{E}_i^j \otimes 
|i\rangle \langle j| \cdot |j\rangle \langle i|.
\eeq
Given any element $\rho=\sum_{i,j\in V}\rho_{ij}\otimes|i\rangle\langle j|\in\mathcal{I}(\mathcal{H}\otimes\mathcal{S})$, 
$$
\Phi(\rho) = \mathcal{E}(\tilde\rho),
\qquad
\tilde\rho = \sum_{i\in V} \rho_{ii}\otimes|i\rangle\langle i|, 
$$
i.e., $\Phi=\mathcal{E}\mathcal{T}$, where $\mathcal{T}\colon\mathcal{I}(\mathcal{H}\otimes\mathcal{S})\to\mathcal{I}_\mathcal{S}(\mathcal{H})$ is the operator $\mathcal{T}(\rho)=\tilde\rho$ which extracts from every element of $\mathcal{I}(\mathcal{H}\otimes\mathcal{S})$ its block diagonal part. This shows that the iterations of $\Phi$ are naturally restricted to $\mathcal{I}_\mathcal{S}(\mathcal{H})$ because $\Phi(\mathcal{I}(\mathcal{H}\otimes\mathcal{S}))\subset\mathcal{I}_\mathcal{S}(\mathcal{H})$, so that the walker moves in $\mathcal{I}_\mathcal{S}(\mathcal{H})$ after the first step. Since $\Phi$ acts as $\mathcal{E}$ on $\mathcal{I}_\mathcal{S}(\mathcal{H})$, these iterations reproduce the quantum Markov chain driven by \eqref{eq:TOM}. Nevertheless, working with the CPTP map $\Phi$ on $\mathcal{I}(\mathcal{H}\otimes\mathcal{S})$ has some advantages. On one hand this allows us to directly extend results on CPTP maps to TOMs. On the other hand, this facilitates the combination of TOMs with projective measurements, something crucial for the monitored notion of recurrence that we will discuss later on. 

If the system is in a state $\rho\in\mathcal{D}_\mathcal{S}(\mathcal{H})$, Born's rule states that for a minimal measurement keeping track only of whether the system is in a subspace $\mathcal{K}\subset\mathcal{H}\otimes\mathcal{S}$ or not, the probability of a positive outcome is $\operatorname{Tr}(P\rho P)$, where $P$ is the orthogonal projection of $\mathcal{H}\otimes\mathcal{S}$ onto $\mathcal{K}$. After such a positive outcome the system collapses to the state $P\rho P/\operatorname{Tr}(P\rho P)\in\mathcal{D}(\mathcal{H}\otimes\mathcal{S})$, but this state does not necessarily fall on $\mathcal{I}_\mathcal{S}(\mathcal{H})$, making impossible to iterate a quantum Markov chain governed by a TOM \eqref{eq:TOM} after a measurement. This drawback disappears {\color{black} if we consider measurements that simply check the situation at every site}, which amounts to dealing only with what we will call {\bf admissible subspaces} of pure states, given by
\beq \label{eq:adm}
\mathcal{K} = \bigoplus_{i\in V}\mathcal{H}_i\otimes|i\rangle,
\qquad
\mathcal{H}_i\subset\mathcal{H}.
\eeq 
Then, $P=\oplus_{i\in V}P_i\otimes|i\rangle\langle i|$, where $P_i$ is the orthogonal projection of $\mathcal{H}$ onto the subspace $\mathcal{H}_i$. In this case, $P(\sum_{i\in V}\rho_i\otimes|i\rangle\langle i|)P=\sum_{i\in V}P_i\rho_iP_i\otimes|i\rangle\langle i|$, so that the measurement of the observable $P$ keeps $\mathcal{D}_\mathcal{S}(\mathcal{H})$ invariant, permitting subsequent iterations of the TOM. Nevertheless, looking at a quantum Markov chain as the iterations of a CPTP map \eqref{eq:TOM2} one is free to combine it with projective measurements on any subspace $\mathcal{K}\subset\mathcal{H}\otimes\mathcal{S}$ because there is no restriction on the densities $\rho\in\mathcal{H}\otimes\mathcal{S}$ where such a CPTP map may act.
 
If $\dim\mathcal{H}=1$, TOMs become column stochastic matrices, so that classical Markov chains are particular cases of  quantum Markov chains, which for $\dim\mathcal{H}=2$ are called (dissipative) quantum walks of one qubit. We can generate TOMs, for instance, starting from quantum channels: if $\Phi_i$ are quantum channels on $\mathcal{H}$, then so is any convex combination $\sum_i p_i\Phi_i$, $p_i\ge0$, $\sum_i p_i=1$. Therefore, if $\Phi_{ij}$ are quantum channels on $\mathcal{H}$ for $i,j\in V$, we have that $[p_{ij}\Phi_{ij}]_{i,j\in V}$ is a TOM on a graph with a set $V$ of vertices, provided that $p_{ij}\ge0$ and $\sum_{i\in V} p_{ij}=1$.

\medskip

For another class of examples, consider the case for which 
\beq\label{def_oqw}
\mathcal{E}_i^j(\rho)=B_i^j\rho {B_i^j}^*,
\qquad
B_i^j\in \mathcal{B}(\mathcal{H}),
\qquad
\sum_{k \in V} {B_k^j}^*B_k^j=I,
\qquad
\forall i,j \in V.
\eeq
The summation above must be understood in the strong sense, and the corresponding identity is the trace preserving condition for the columns of the TOM $\mathcal{E}$. We will say that $B_i^j$ is the {\bf effect} matrix of transitioning from vertex $j$ to vertex $i$. TOMs for which $\mathcal{E}_i^j$ can be written in the form (\ref{def_oqw}) are called {\bf Open Quantum Random Walks} (OQWs), following the terminology established by S. Attal et al. \cite{attal}. In this setting, a TOM density will be alternatively called an {\bf OQW density}.
Explicitly, OQWs are TOMs of the form
\beq\label{eq:OQW}
\mathcal{E}(\rho) =
\sum_{i\in V}\Big(\sum_{j\in V} 
B_i^j\rho_j {B_i^j}^*\Big)\otimes |i\rangle\langle i|,
\eeq 
and, as any TOM, they may be alternatively seen as CPTP maps on $\mathcal{I}(\mathcal{H}\otimes\mathcal{S})$. The Kraus decomposition of the CPTP map corresponding to an OQW \eqref{eq:OQW} is given by \cite{attal}
\beq\label{eq:OQW-CPTP}
\Phi(\rho) = \sum_{i,j\in V} M_j^i \rho {M_j^i}^*,
\qquad
M_i^j = B_i^j \otimes |i\rangle\langle j|,
\qquad
\rho\in\mathcal{I}(\mathcal{H}\otimes\mathcal{S}).
\eeq 

The vector and matrix representation of states and CP maps may be easily adapted to OQWs. In fact, since any element of $\mathcal{I}_\mathcal{S}(\mathcal{H})$ is block diagonal, when $\dim\mathcal{H}<\infty$, it may be represented by combining the vector representations of the finite diagonal blocks,
$$
\rho=\sum_{i\in V} \rho_i\otimes|i\rangle\langle i| 
\quad\equiv\quad
\overrightarrow{\rho}:=\begin{bmatrix} vec(\rho_1) \\ vec(\rho_2) \\ \vdots \end{bmatrix}.
$$
Then, the OQW \eqref{eq:OQW} admits the {\bf block matrix representation}
\beq\label{eq:mrOQW}
\overrightarrow{\mathcal{E}(\rho)} = 
\widehat{\mathcal{E}}\,\overrightarrow{\rho}, 
\qquad 
\widehat{\mathcal{E}} = 
\begin{bmatrix} 
\lceil B_1^1 \rceil & \lceil B_1^2 \rceil & \cdots
\\[2pt] 
\lceil B_2^1 \rceil & \lceil B_2^2 \rceil & \cdots 
\\
\vdots & \vdots 
\end{bmatrix}.
\eeq

OQWs are not just one more instance of TOMs, but the key example, since the Kraus decomposition of the CP maps $\mathcal{E}_i^j$ of a TOM $\mathcal{E}=[\mathcal{E}_i^j]_{i,j\in V}$ expresses them as sums of terms like \eqref{def_oqw}. This means that many {\color{black} constructions} developed originally for OQWs may be easily translated to general TOMs. For instance, from \eqref{eq:OQW-CPTP} we find that the Kraus decomposition $\mathcal{E}_i^j(\rho)=\sum_kB_{i,k}^j\rho B_{i,k}^{j*}$ of the CP maps $\mathcal{E}_i^j$ yields the one of the CPTP representation \eqref{eq:TOM2} of $\mathcal{E}$,
$$
\Phi(\rho) = \sum_{i,j\in V} \sum_k M_{i,k}^j \rho M_{i,k}^{j*},
\qquad
M_{i,k}^j = B_{i,k}^j \otimes |i\rangle\langle j|.
$$
Also, if $\dim\mathcal{H}<\infty$, the block matrix representation \eqref{eq:mrOQW} extends to TOMs as $\widehat{\mathcal{E}}:=[\widehat{\mathcal{E}}_i^j]_{i,j\in V}$ in terms of the matrix representations $\widehat{\mathcal{E}}_i^j=\sum_k\lceil B_{i,k}^j \rceil$ of the CP maps $\mathcal{E}_i^j$.

\medskip

Finally, we recall the important notion of irreducibility of OQWs, following \cite{carbone2,carbone1}.
We say that a positive map $\Phi$ is {\bf irreducible} if the only orthogonal projections $P$ reducing $\Phi$, i.e. such that $\Phi(P \mathcal{I}(\mathcal{H})P)\subset P \mathcal{I}(\mathcal{H})P$, are $P=0$ and $P=I$. 

\medskip

Let $\Phi$ be a CPTP map, so that it can be written as $\Phi(X)=\sum_{i\in V} B_i\rho B_i^*$. Let $\mathbb{C}[B]$ denote the algebra (not the $*$-algebra) generated by the $B_i$, $i\in V$. Then irreducibility is equivalent to any of the following \cite[Lemma 3.7]{carbone1}:
\begin{enumerate}
\item For any $\phi\in\mathcal{H}\setminus\{0\}$, the set $\mathbb{C}[B]\phi$ is dense in $\mathcal{
H}$.
\item For any $\phi,\psi\in\mathcal{H}\setminus\{0\}$, there are $i_1,\dots,i_k\in V$ such that $\langle\psi|B_{i_k}\!\cdots B_{i_1}\phi\rangle\neq 0$.
\end{enumerate}

\medskip

Below we state the basic result on irreducibility which is used in this work.

\begin{theorem}\label{cp_irr} Let $\Phi$ be a CP map on $\mathcal{I}(\mathcal{H})$.
\begin{itemize} 
\item[a)] \cite[Thm. 3.14]{carbone1} If $\Phi$ is irreducible and has an invariant state, then it is unique and faithful. 
\item[b)] \cite[ Remark 3.6]{carbone2} If $\Phi$ admits a unique invariant state and such state is faithful, then $\Phi$ is irreducible.
\end{itemize}
\end{theorem}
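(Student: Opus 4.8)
The plan is to make the \emph{support projection} of an invariant positive operator the central tool, since the Kraus form \eqref{krausform} converts invariance into an algebraic constraint on the operators $B_i$. For part (a), let $\rho$ be an invariant state, let $Q$ be the orthogonal projection onto $\ker\rho$ and $P=I-Q$ its support projection. Applying $Q(\cdot)Q$ to $\rho=\sum_i B_i\rho B_i^*$ and using $Q\rho=\rho Q=0$ gives $\sum_i (QB_i\rho^{1/2})(QB_i\rho^{1/2})^*=0$; taking traces, every summand of this convergent series of positive operators must vanish, so $QB_i\rho^{1/2}=0$ and hence $QB_iP=0$ for all $i$ (the range of $\rho^{1/2}$ being dense in $P\mathcal{H}$). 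This says precisely that each $B_i$ leaves $P\mathcal{H}$ invariant, which is equivalent to $\Phi(P\mathcal{I}(\mathcal{H})P)\subset P\mathcal{I}(\mathcal{H})P$, i.e. $P$ reduces $\Phi$. Since $\rho\neq0$ forces $P\neq0$, irreducibility yields $P=I$, so $\ker\rho=\{0\}$ and $\rho$ is faithful.

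For uniqueness I would compare two invariant states $\rho_1,\rho_2$ through the positive pencil $\omega_t=\rho_1-t\rho_2$. Let $t_0$ be the largest $t\ge0$ with $\omega_t\ge0$, and set $\omega:=\omega_{t_0}\ge0$. This $\omega$ is again invariant by linearity; it is nonzero, for $\omega=0$ would give $\rho_1=t_0\rho_2$ and equal traces would force $\rho_1=\rho_2$; and by maximality of $t_0$ it is \emph{not} faithful, since the pencil acquires a nontrivial kernel exactly at $t_0$ (in finite dimensions the smallest generalized eigenvalue reaches $0$). Applying the support-projection argument of the first paragraph to $\omega$ then produces a reducing projection that is neither $0$ nor $I$, contradicting irreducibility. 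Hence the invariant state is unique.

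Part (b) I would prove by contraposition: assume $\Phi$ is reducible, with a reducing projection $P$, $0\neq P\neq I$. As above, $\Phi(P\mathcal{I}(\mathcal{H})P)\subset P\mathcal{I}(\mathcal{H})P$ is equivalent to $QB_iP=0$, so the compressions $A_i:=PB_iP$ are Kraus operators of a CP map $\Phi_P$ on $\mathcal{I}(P\mathcal{H})$. The identity $QB_iP=0$ also gives $PB_i^*QB_iP=0$, whence $\sum_i A_i^*A_i=P\big(\sum_i B_i^*B_i\big)P$; when $\Phi$ is trace preserving this equals $P$, so $\Phi_P$ is itself a quantum channel. A finite-dimensional CPTP map always admits an invariant state $\sigma$, and embedding $\sigma$ back into $\mathcal{I}(\mathcal{H})$ gives an invariant state of $\Phi$ supported on $P\mathcal{H}\subsetneq\mathcal{H}$, hence not faithful. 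Together with the faithful invariant state guaranteed by hypothesis this violates uniqueness, so $\Phi$ must be irreducible.

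The main obstacle is the passage beyond finite dimensions and beyond the trace-preserving case: the attainment of the extremal parameter $t_0$ in the pencil and the existence of the invariant state $\sigma$ for the restricted map $\Phi_P$ both rely on compactness and on controlling $P\Phi^*(I)P$ when $\Phi$ is merely CP, which is exactly where the careful spectral analysis of \cite{carbone1,carbone2} is required. In finite dimensions with quantum channels, the setting of our applications, the arguments sketched above are complete.
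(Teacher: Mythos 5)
The paper does not actually prove this theorem: both parts are imported from Carbone--Pautrat with explicit citations, so there is no in-paper argument to compare yours against. Judged on its own, your proof is the standard support-projection argument and is sound where you claim it is. The computation $Q\Phi(\rho)Q=0\Rightarrow QB_i\rho^{1/2}=0\Rightarrow QB_iP=0$ is correct even in infinite dimensions (the series of positive operators converges in trace norm, so each trace vanishes, and $\operatorname{ran}\rho^{1/2}$ is dense in $P\mathcal{H}$), and the equivalence of $QB_iP=0$ with $P$ reducing $\Phi$ matches the paper's definition of irreducibility. The pencil argument for uniqueness and the contrapositive argument for (b) are also logically correct as written.

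The one genuine limitation is the mismatch between what you prove and what the theorem asserts: the statement is for CP maps on $\mathcal{I}(\mathcal{H})$ with $\mathcal{H}$ a general separable Hilbert space, and the two steps you flag are exactly where your argument stops being a proof. First, in the uniqueness step the attainment of $t_0$ is actually fine in any dimension (the positive cone is trace-norm closed and $t\mapsto\omega_t$ is continuous), but the claim that $\omega_{t_0}$ must fail to be faithful is false in infinite dimensions: if $\omega_{t_0}$ and $\rho_2$ both have spectrum accumulating at $0$ at different rates, $t_0$ can be maximal while $\omega_{t_0}$ remains injective, so no reducing projection is produced. Second, in part (b) the compressed map $\Phi_P$ need not admit an invariant state when $\dim P\mathcal{H}=\infty$ or when $\Phi$ is merely CP rather than trace preserving, so the contradiction with uniqueness is not forced. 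Since the paper only ever invokes Theorem~\ref{cp_irr} for finite-dimensional quantum channels and TOMs (Theorem~\ref{thm:rec-fi} and the examples), your argument covers every use the paper makes of the result; but as a proof of the theorem in the stated generality it is incomplete, which is presumably why the paper defers to \cite{carbone1,carbone2}, where the infinite-dimensional versions are handled with a more careful analysis of subharmonic projections and of the peripheral spectrum.
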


When $\dim\mathcal{H}<\infty$, every positive trace preserving map on $\mathcal{I}(\mathcal{H})$ has an invariant state, thus in the finite-dimensional case the irreducibility of a CPTP map is characterized by the existence of a faithful unique invariant state. If $\dim\mathcal{H}=\infty$, the presence of an invariant state for a CPTP map is guaranteed whenever a --not necessarily positive-- fixed point in $\mathcal{I}(\mathcal{H})$ exists, as follows from \cite[Thm. 4.1]{schrader}.

The notion of irreducibility, as well as the above results, are not only applicable to quantum channels, but also to TOMs once they are understood as CPTP maps. 



\section{Recurrence for quantum Markov chains and Schur functions}
\label{sec:REC}

Among the different notions of recurrence appearing in the quantum literature, we will refer here to a recent one based on a {\bf monitoring} process, {\color{black} developed for unitary quantum walks \cite{ambainis,bourg,werner,brun}}, and later extended to open quantum walks \cite{bardet,cgl,gvfr,ls2015}. Consider a discrete time evolution given by iterating a quantum channel $\Phi$ on a Hilbert space $\mathcal{H}$. Given a subspace $\mathcal{H}_0\subset\mathcal{H}$,  we will identify $\mathcal{I}(\mathcal{H}_0)$ with the subspace constituted by those operators $\rho\in\mathcal{I}(\mathcal{H})$ with $\operatorname{ran}\rho\subset\mathcal{H}_0$ and $\ker\rho\supset\mathcal{H}_0^\bot$. We are interested in the probability of eventually finding the system in $\mathcal{H}_0$ when starting the evolution at a state $\rho\in\mathcal{D}(\mathcal{H}_0)$, i.e. a state $\rho\in\mathcal{D}(\mathcal{H})$ whose support $\operatorname{supp}\rho=(\ker\rho)^\bot=\overline{\operatorname{ran}}\,\rho$ lies on $\mathcal{H}_0$. In the spirit of P\'olya's recurrence theory for classical random walks, we check after every step whether the system is found in $\mathcal{H}_0$ or not. According to Born's rule, this amounts to perform in between any two steps a projective measurement corresponding to one of the projections 
\beq\label{eq:PQ}
\mathbb{P}(\rho):=P\rho P,
\qquad
\mathbb{Q}(\rho):=Q\rho Q,
\qquad
\rho\in\mathcal{I}(\mathcal{H}),
\eeq
where $P$ is the orthogonal projection of $\mathcal{H}$ onto $\mathcal{H}_0$ and $Q=I-P$. The operators $\mathbb{P}=P\cdot P$ and $\mathbb{Q}=Q \cdot Q$ are the natural projections of $\mathcal{I}(\mathcal{H})$ onto $\mathcal{I}(\mathcal{H}_0)$ and $\mathcal{I}(\mathcal{H}_0^\perp)$, respectively, related to the following orthogonal decomposition with respect to the inner product $\langle X,Y \rangle=\operatorname{Tr}(X^*Y)$,
$$
\mathcal{I}(\mathcal{H}) = 
\mathcal{I}(\mathcal{H}_0) \oplus
\mathcal{I}(\mathcal{H}_0,\mathcal{H}_0^\bot) \oplus
\mathcal{I}(\mathcal{H}_0^\bot),
$$
where $\mathcal{I}(\mathcal{H}_0,\mathcal{H}_0^\bot)$ stands for the subspace of operators of $\mathcal{I}(\mathcal{H})$ mapping $\mathcal{H}_0$ into $\mathcal{H}_0^\bot$ and vice versa.

The projections $\mathbb{P}$ and $\mathbb{Q}$ condition on the events ``return to $\mathcal{H}_0$'' and ``no return to $\mathcal{H}_0$'', respectively. {\color{black} By starting at a state $\rho_0=\rho\in\mathcal{D}(\mathcal{H}_0)$ and monitoring after each step we obtain a sequence of states $\rho_n\in\mathcal{D}(\mathcal{H})$. A negative outcome for the measurement of the return to $\mathcal{H}_0$ during the first $n-1$ steps } results in the states
$$ 
\rho_k = 
\frac{\mathbb{Q}\Phi(\rho_{k-1})}{\operatorname{Tr}(\mathbb{Q}\Phi(\rho_{k-1}))},
\qquad k=1,2,\dots,n-1,
$$
and each of these negative outcomes holds with probability $\operatorname{Tr}(\mathbb{Q}\Phi(\rho_{k-1}))$. Therefore, the probability of returning to $\mathcal{H}_0$ for the first time in the $n$-th step is given by
$$
\pi_n(\rho\to\mathcal{H}_0) = 
\operatorname{Tr}(\mathbb{Q}\Phi(\rho_0)) \, 
\operatorname{Tr}(\mathbb{Q}\Phi(\rho_1)) \, \cdots \,
\operatorname{Tr}(\mathbb{Q}\Phi(\rho_{n-2})) \,
\operatorname{Tr}(\mathbb{P}\Phi(\rho_{n-1}))  = 
\operatorname{Tr}(\mathbb{P}\Phi(\mathbb{Q}\Phi)^{n-1}(\rho)),
$$ 
so that the {\bf return probability}, i.e. the probability of eventually returning to $\mathcal{H}_0$ regardless of the time step at which this return occurs, becomes
$$
\pi(\rho\to\mathcal{H}_0) = 
\sum_{n\geq 1}\pi_n(\rho\to\mathcal{H}_0). 
$$
When this probability is 1, the {\bf expected return time} is
$$
\tau(\rho\to\mathcal{H}_0) = 
\sum_{n\geq 1}n\pi_n(\rho\to\mathcal{H}_0),
$$ 
otherwise it is defined to be infinity, which corresponds to considering the {\color{black} non-return probability} $1-\pi(\rho\to\mathcal{H}_0)$ as the probability of returning in an infinite time. For convenience, when $\rho=\rho_\psi$ is a pure state we will denote by $\pi(\psi\to\mathcal{H}_0)$ and $\tau(\psi\to\mathcal{H}_0)$ the return probability and expected return time, respectively. The recurrence of a pure state $\rho_\psi$ corresponds to choosing $\mathcal{H}_0=\operatorname{span}\{\psi\}$, in which case we will simplify the notation to $\pi(\psi\to\psi)$ and $\tau(\psi\to\psi)$.

{\color{black} When dealing with the above notions, it is often advantageous to make use of generating functions.} Following \cite{gvfr}, we introduce the function
\beq\label{frfcn}
\mathbb{F}(z) = 
\mathbb{P}\Phi(I-z\mathbb{Q}\Phi)^{-1}\mathbb{P} = 
\sum_{n\geq 1}z^{n-1}\mathbb{A}_n,
\qquad 
\mathbb{A}_n = \mathbb{P}\Phi(\mathbb{Q}\Phi)^{n-1}\mathbb{P},
\eeq
which is analytic on the open unit disk $\mathbb{D}:=\{z\in\mathbb{C}:|z|<1\}$ (see Theorem~\ref{thm:SCHUR}). Then,
\begin{equation} \label{eq:ptf}
\begin{gathered}
\pi_n(\rho\to\mathcal{H}_0) = 
\operatorname{Tr}(\mathbb{A}_n(\rho)),
\qquad\quad 
\pi(\rho\to\mathcal{H}_0) = 
\sum_{n\geq 1} \operatorname{Tr}(\mathbb{A}_n(\rho)) = 
\lim_{x\uparrow 1} \operatorname{Tr}(\mathbb{F}(x)(\rho)),
\\
\tau(\rho\to\mathcal{H}_0) =  
\sum_{n\geq 1} n\operatorname{Tr}(\mathbb{A}_n(\rho)) = 
\lim_{x\uparrow 1}\frac{d}{dx}x\operatorname{Tr}(\mathbb{F}(x)(\rho)) =
1+\lim_{x\uparrow 1}\frac{d}{dx}\operatorname{Tr}(\mathbb{F}(x)(\rho)),
\quad \textrm{ if } \pi(\rho\to\mathcal{H}_0)=1.
\end{gathered}
\end{equation}
Note that in the case $\pi(\rho\to\mathcal{H}_0)<1$ we have by definition $\tau(\rho\to\mathcal{H}_0)=\infty$ but $\sum_{n\geq 1} n\operatorname{Tr}(\mathbb{A}_n(\rho))$ could be finite. Other recurrence notions can be handled with the generating function $\mathbb{F}$. For instance, the probability of landing on a pure state $\psi\in\mathcal{H}_0$ when returning to $\mathcal{H}_0$ starting at the density $\rho\in\mathcal{I}(\mathcal{H}_0)$, reads as
\begin{equation} \label{eq:pf}
\pi(\rho\stackrel{\mathcal{H}_0}{\to}\psi) = 
\sum_{n\geq 1}\pi_n(\rho\stackrel{\mathcal{H}_0}{\to}\psi) = 
\sum_{n\geq 1}
\operatorname{Tr}(|\psi\rangle\langle\psi|\Phi(\mathbb{Q}\Phi)^{n-1}(\rho)) =
\sum_{n\geq 1}\langle\psi|\mathbb{A}_n(\rho)\psi\rangle =
\lim_{x\uparrow 1}\langle\psi|\mathbb{F}(x)(\rho)\psi\rangle.
\end{equation}

The coefficients $\mathbb{A}_n$ are CP maps on $\operatorname{ran}(\mathbb{P})=\mathcal{I}(\mathcal{H}_0)$, thus $\mathbb{F}$ maps the open unit disk into operators on $\mathcal{I}(\mathcal{H}_0)$. Since the projection $\mathbb{P}+\mathbb{Q}$ is not the identity in general, $\mathbb{F}$ is not an FR-function \eqref{eq:FR}. Nevertheless, $\mathbb{F}$ may be seen as a projection $\mathbb{F} = \mathbb{P}f\mathbb{P}$ of the FR-function 
\beq\label{ftrue}
f(z) = (I-\mathbb{Q})\Phi(I-z\mathbb{Q}\Phi)^{-1}(I-\mathbb{Q}) =
\sum_{n\geq 1} a_nz^{n-1},
\qquad
a_n = (I-\mathbb{Q})\Phi(\mathbb{Q}\Phi)^{n-1}(I-\mathbb{Q}),
\eeq
taking values in a larger space, but satisfying the splitting properties typical of such functions \cite[Sect.~6]{gvfr}. The coefficients $a_n$ are now CP maps on $\operatorname{ran}(I-\mathbb{Q})=\mathcal{I}(\mathcal{H}_0)\oplus\mathcal{I}(\mathcal{H}_0,\mathcal{H}_0^\bot)$, hence the values of $f$ are operators on $\mathcal{I}(\mathcal{H}_0)\oplus\mathcal{I}(\mathcal{H}_0,\mathcal{H}_0^\bot)$. Since $I-\mathbb{Q}-\mathbb{P}$ is a projection onto $\mathcal{I}(\mathcal{H}_0,\mathcal{H}_0^\bot)$ and $\langle\psi|\rho\psi\rangle=0$ for all $\rho\in\mathcal{I}(\mathcal{H}_0,\mathcal{H}_0^\bot)$ and $\psi\in\mathcal{H}_0$, the function $\mathbb{F}$ may be substituted by $f$ in the expressions \eqref{eq:ptf} and \eqref{eq:pf}.  

The above discussion may be directly extended to quantum Markov chains, understood as iterated CPTP maps on $\mathcal{I}(\mathcal{H}\otimes\mathcal{S})$ with the form \eqref{eq:TOM2}, where $\mathcal{S}$ stands for the space of sites of a graph. However, if we consider a quantum Markov chain as an iteration of a TOM \eqref{eq:TOM} on $\mathcal{I}_\mathcal{S}(\mathcal{H})$, the monitoring process requires the return subspace $\mathcal{K}\subset\mathcal{H}\otimes\mathcal{S}$ to be an admissible subspace \eqref{eq:adm}, so that the iteration of the TOM is possible after each measurement. If $\mathcal{K}=\oplus_{i\in V}\mathcal{H}_i\otimes|i\rangle$, it makes sense to ask about the return probability $\pi(\rho\to\mathcal{K})$ and the expected return time $\tau(\rho\to\mathcal{K})$ of a state $\rho$ lying on the Banach subspace of $\mathcal{I}_\mathcal{S}(\mathcal{H})$ given by
$$
\mathcal{I}_\mathcal{S}(\{\mathcal{H}_i\}_{i\in V}) = \textstyle
\{\sum_{i\in V}\rho_i\otimes|i\rangle\langle i|\in\mathcal{I}(\mathcal{H}\otimes\mathcal{S}):
\rho_i\in\mathcal{I}(\mathcal{H}_i)\} =
\bigoplus_{i\in V} \mathcal{I}(\mathcal{H}_i)\otimes|i\rangle\langle i|.
$$
A particularly simple situation arises when considering the return properties of a sum of sites $\oplus_k \mathcal{H}\otimes|k\rangle$ since in this case $\mathbb{P}+\mathbb{Q}=I$ because no element of $\mathcal{I}_\mathcal{S}(\mathcal{H})$ exchanges different sites, hence $f=\mathbb{F}$. 

In what follows we will prove different results for quantum channels, with the understanding that they hold also for TOMs since they can be viewed as CPTP maps.  


The work \cite{gvfr} makes it clear that FR-functions constitute natural generalizations of the so-called {\bf Schur functions}, which are contractive operator valued analytic maps on the open unit disk. Actually, we are going to see that \eqref{ftrue} is a true Schur function taking values in operators on the Banach subspace $\operatorname{ran}(I-\mathbb{Q})\subset\mathcal{I}(\mathcal{H})$. 

\begin{theorem} \label{thm:SCHUR}
Let $\Phi$ be a quantum channel on a Hilbert space $\mathcal{H}$, and $\mathbb{Q}=Q\cdot Q$ the projection \eqref{eq:PQ} on $\mathcal{I}(\mathcal{H})$ corresponding to an orthogonal projection $Q$ on $\mathcal{H}$. Then, the function $f$ taking values in operators on $\operatorname{ran}(I-\mathbb{Q})$ defined by \eqref{ftrue} is a Schur function, i.e. it is analytic on the open unit disk, where it is contractive with respect to the trace norm $\|\cdot\|_1$,
\beq\label{schurineq}
\|f(z)(\rho)\|_1 \leq \|\rho\|_1,
\qquad
\forall\rho\in\operatorname{ran}(I-\mathbb{Q}),
\qquad 
\forall z\in\mathbb{D}.
\eeq
\end{theorem}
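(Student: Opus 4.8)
The plan is to split the argument into the two defining properties of a Schur function---analyticity on $\mathbb{D}$ and contractivity for the trace norm---treating the contractivity as the genuine difficulty.

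First I would settle analyticity. Since $\Phi$ is CPTP we have $\|\Phi\|=1$ by \eqref{eq:CPnorm}, and $\mathbb{Q}=Q\cdot Q$ is a trace-norm contraction because $\|Q\sigma Q\|_1\le\|\sigma\|_1$ for the orthogonal projection $Q$; hence $\|\mathbb{Q}\Phi\|\le1$. Therefore, for $|z|<1$ the Neumann series $(I-z\mathbb{Q}\Phi)^{-1}=\sum_{n\ge0}z^n(\mathbb{Q}\Phi)^n$ converges in operator norm, which shows at once that $f$ is analytic on $\mathbb{D}$ and that there it coincides with the power series \eqref{ftrue} with coefficients $a_n=(I-\mathbb{Q})\Phi(\mathbb{Q}\Phi)^{n-1}(I-\mathbb{Q})$.

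For contractivity I would first collapse the resolvent into a single application of $\Phi$ by means of a feedback identity. Fix $z\in\mathbb{D}$ and $\rho\in\operatorname{ran}(I-\mathbb{Q})$, and set $w=(I-z\mathbb{Q}\Phi)^{-1}\rho$, so that $w=\rho+z\mathbb{Q}\Phi w$. Applying $I-\mathbb{Q}$ and $\mathbb{Q}$ to this relation, and using $(I-\mathbb{Q})\mathbb{Q}=0$ together with $\mathbb{Q}\rho=0$, yields $(I-\mathbb{Q})w=\rho$ and $\mathbb{Q}w=z\mathbb{Q}\Phi w$; hence $\Phi w=f(z)(\rho)+\mathbb{Q}\Phi w$ with $f(z)(\rho)=(I-\mathbb{Q})\Phi w$ and $\|\mathbb{Q}\Phi w\|_1=|z|^{-1}\|\mathbb{Q}w\|_1$. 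In this way the claim \eqref{schurineq} is reduced to the single-step estimate $\|(I-\mathbb{Q})\Phi w\|_1\le\|(I-\mathbb{Q})w\|_1$ for the particular $w$ above, subject to the constraint $\mathbb{Q}w=z\mathbb{Q}\Phi w$ with $|z|<1$.

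The hard part is precisely this single-step estimate, and it is where a non-trivial property of CPTP maps is unavoidable: although $\Phi$ is a trace-norm contraction, the projection $I-\mathbb{Q}$ is \emph{not}, so one cannot simply discard the $\mathbb{Q}$-components from $\|\Phi w\|_1\le\|w\|_1$. My strategy would be to encode the complete positivity of $\Phi$ through a Kraus--Stinespring dilation $\Phi=\operatorname{Tr}_{\mathcal{E}}(V\cdot V^{*})$ with an isometry $V$ satisfying $V^{*}V=I$ (cf. \eqref{krausform}), use it to linearize each monitored step as an isometry followed by the block projection and a partial trace, and thereby reduce the estimate to the contractivity of the isometric first-return map---the mechanism underlying the identification of FR-functions with Schur functions for unitary evolutions in \cite{werner}. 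The main obstacle I anticipate is to quantify, using only complete positivity, how $\Phi$ redistributes trace norm between $\operatorname{ran}(I-\mathbb{Q})$ and $\operatorname{ran}\mathbb{Q}$---equivalently, how much block coherence $\Phi$ may create---and to show that the surplus $|z|^{-1}\|\mathbb{Q}w\|_1$ removed from $\Phi w$ is enough to offset the failure of $I-\mathbb{Q}$ to be a contraction, the strict inequality $|z|<1$ being what makes the balance favourable. Isolating this redistribution property as a standalone lemma about CPTP maps is the step I expect to consume most of the work.
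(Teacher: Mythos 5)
Your treatment of analyticity is fine and matches the paper's. The contractivity argument, however, is not a proof but an announced plan, and the step you defer is precisely the entire content of the theorem. Your feedback identity is algebraically correct ($(I-\mathbb{Q})w=\rho$, $\mathbb{Q}w=z\mathbb{Q}\Phi w$), but it reduces \eqref{schurineq} to the bound $\|(I-\mathbb{Q})\Phi w\|_1\le\|(I-\mathbb{Q})w\|_1$ for an operator $w=(I-z\mathbb{Q}\Phi)^{-1}\rho$ which, for complex $z$, is in general neither positive nor self-adjoint even when $\rho$ is a state; this makes the reduction harder, not easier, and the Stinespring route you sketch does not obviously survive it. The mechanism from \cite{werner} that you invoke lives in the Hilbert-space geometry of unitary evolutions, where orthogonal projections are contractions for the relevant norm; here the relevant norm is the trace norm on $\mathcal{I}(\mathcal{H})$, $I-\mathbb{Q}$ is not a contraction (as you yourself note), and the partial trace over the dilation space does not commute with taking $|\cdot|$, so "contractivity of the isometric first-return map" is not a statement you can import. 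You have correctly located the difficulty but not supplied the idea that resolves it.

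The paper's resolution is different and more elementary than what you propose. One first bounds $\|f(z)(\rho)\|_1\le\sum_{n\ge1}|z|^{n-1}\|a_n(\rho)\|_1$ term by term (absorbing the phase of $z$ at the outset, which is why the non-positivity of $w$ never arises), and then for positive $\rho$ uses the telescoping identity $a_n(\rho)=\Phi(\mathbb{Q}\Phi)^{n-1}(\rho)-(\mathbb{Q}\Phi)^n(\rho)$ together with trace preservation of $\Phi$ and positivity of $\mathbb{Q}$, $I-\mathbb{Q}$ to get $\sum_{n=1}^N\|a_n(\rho)\|_1=\operatorname{Tr}(\rho)-\operatorname{Tr}((\mathbb{Q}\Phi)^N(\rho))\le\operatorname{Tr}(\rho)$. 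The passage to arbitrary trace-class $\rho$ is the genuinely non-trivial ingredient: it uses the 2-positivity Cauchy--Schwarz inequality of \cite[Lemma 4.2]{schrader}, $\|a_n(\rho)\|_1\le\|a_n(|\rho|)\|_1^{1/2}\|a_n(U|\rho|U^*)\|_1^{1/2}$ with $\rho=U|\rho|$ the polar decomposition, followed by Cauchy--Schwarz on the sum over $n$. This is the "redistribution" lemma you anticipated needing; without it (or an equivalent), your argument does not close.
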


{\bf Proof.}
From \eqref{eq:CPnorm}, we know that $\|\Phi\|=\|\mathbb{Q}\|=1$, which implies the analyticity of $f$ given by \eqref{ftrue}. The proof of the contractivity involves an argument similar to that of \cite[Prop. 10.1]{gvfr}. Given an arbitrary $\rho\in\operatorname{ran}(I-\mathbb{Q})$, we have that
\beq\label{eq:normfa}
\|f(z)(\rho)\|_1 \leq
\sum_{n\geq 1}|z|^{n-1} \|a_n(\rho)\|_1 \leq
\sum_{n\geq 1}\|a_n(\rho)\|_1,
\qquad
z\in\mathbb{D}.
\eeq
Rewriting 
\beq\label{eq:ar}
a_n(\rho) = (I-\mathbb{Q})\Phi(\mathbb{Q}\Phi)^{n-1}(\rho) =
\Phi(\mathbb{Q}\Phi)^{n-1}(\rho)-(\mathbb{Q}\Phi)^{n}(\rho)
\eeq
and using that $\mathbb{Q}$, $I-\mathbb{Q}$ and $\Phi$ are positive, the latter being trace preserving, yields 
\beq\label{eq:sumtra}
\sum_{n=1}^N \|a_n(\rho)\|_1 =
\sum_{n=1}^N \operatorname{Tr}(a_n(\rho)) = 
\operatorname{Tr}(\rho)-\operatorname{Tr}((\mathbb{Q}\Phi)^{N}(\rho)) \leq 
\operatorname{Tr}(\rho),
\qquad \rho\ge0.
\eeq
From \eqref{eq:normfa} and \eqref{eq:sumtra} we conclude that
$$
\|f(z)(\rho)\|_1 \leq \operatorname{Tr}(\rho) = \|\rho\|_1,
\qquad z\in\mathbb{D},
\qquad \rho\ge0.
$$

The inequality for an arbitrary $\rho\in\mathcal{I}(\mathcal{H})$ follows from the observation that, as a composition of CP maps, $a_n$ is also a CP map. Then, \cite[Lemma 4.2]{schrader} implies that 
\beq \label{eq:2pos}
\|a_n(\rho)\|_1 \le \|a_n(|\rho|)\|_1^{1/2} \|a_n(U|\rho|U^*)\|_1^{1/2},
\eeq
where $U$ is the partial isometry such that $\ker(U)=\ker(\rho)$ appearing in the polar decomposition $\rho=U|\rho|$. Combined with \eqref{eq:normfa} and \eqref{eq:sumtra}, the above inequality gives 
$$
\begin{aligned}
\|f(z)(\rho)\|_1 
& \leq \sum_{n\ge1} \|a_n(|\rho|)\|_1^{1/2} \|a_n(U|\rho|U^*)\|_1^{1/2} 
\le \bigg(\sum_{n\ge1}\|a_n(|\rho|)\|_1\bigg)^{1/2} 
\bigg(\sum_{n\ge1}\|a_n(U|\rho|U^*)\|_1\bigg)^{1/2} 
\\
& \le \operatorname{Tr}(|\rho|)^{1/2} \, \operatorname{Tr}(U|\rho|U^*)^{1/2} 
= \operatorname{Tr}(|\rho|) = \|\rho\|_1, 
\qquad z\in\mathbb{D},
\end{aligned}
$$
where we have used that $U^*\rho=|\rho|$.
\qed

\begin{remark} \label{SCHUR-GEN}
The above proposition actually holds for any 2-positive trace preserving map $\Phi$ because then $a_n$ are also 2-positive and the inequality \eqref{eq:2pos} is valid for any such a map \cite[Lemma 4.2]{schrader}. Besides, $\Phi$ does not need to be trace preserving, but it suffices to be trace non-increasing, i.e. $\operatorname{Tr}(\Phi(\rho))\le\operatorname{Tr}(\rho)$ for $\rho\ge0$.  
\end{remark}

Note that $\mathbb{F}=\mathbb{P}f\mathbb{P}$ is also a Schur function, but with values in the reduced Banach space $\operatorname{ran}(\mathbb{P})$. We will refer to \eqref{ftrue} and \eqref{frfcn} as the {\bf Schur function} $f$ and the {\bf reduced Schur function} $\mathbb{F}$ of the subspace $\mathcal{H}_0=\operatorname{ran}(Q)^\bot$ with respect to the CP map $\Phi$.  

The previous result also applies to TOMs with a space $\mathcal{S}$ of sites, when viewed as CPTP maps on $\mathcal{I}(\mathcal{H}\otimes\mathcal{S})$. If we consider a TOM as a restriction of such a CPTP map to $\mathcal{I}_\mathcal{S}(\mathcal{H})$, the orthogonal projection $Q$ must project onto an admissible subspace, i.e. $Q=\sum_{i\in V}Q_i\otimes|i\rangle\langle i|$ for some orthogonal projections $Q_i$ on $\mathcal{H}$. Then, the Schur function $f$ takes values in the subspace of $\mathcal{I}_\mathcal{S}(\mathcal{H})$ given by $\operatorname{ran}(I-\mathbb{Q})=\sum_{i\in V}\operatorname{ran}(I-\mathbb{Q}_i)\otimes|i\rangle\langle i|$, while the reduced one $\mathbb{F}$ takes values in $\operatorname{ran}(\mathbb{P})=\sum_{i\in V}\operatorname{ran}(\mathbb{P}_i)\otimes|i\rangle\langle i|$. When $Q$ projects onto a sum of sites, there is a single Schur function, $f=\mathbb{F}$, whose values are maps on the space of trace-class operators on the complementary sum of sites.

We are interested in finding simple rules to decide {\color{black} when a return probability equals 1} and, in the affirmative case, whether the expected return time is finite or not. For this purpose, we introduce some terminology which extends to the quantum setting usual notions for classical Markov chains. Given a quantum channel $\Phi$ on $\mathcal{H}$, a subspace $\mathcal{H}_0\subset\mathcal{H}$ is called:
\begin{itemize}
\item {\bf Recurrent} if $\pi(\rho\to\mathcal{H}_0)=1$ for every state $\rho\in\mathcal{D}(\mathcal{H}_0)$. 
\item {\bf Positive recurrent} if $\tau(\rho\to\mathcal{H}_0)<\infty$ for every state $\rho\in\mathcal{D}(\mathcal{H}_0)$.
\end{itemize} 
We refer to a (positive) recurrent pure state $\psi\in\mathcal{H}$ when the subspace spanned by $\psi$ is (positive) recurrent. These notions have an obvious meaning in the case of quantum Markov chains {\color{black} and we note that such quantities can be described in terms of the linear functionals} $s_n$ on $\mathcal{I}(\mathcal{H})$ given by
$$
s_n(\rho)=\operatorname{Tr}((\mathbb{Q}\Phi)^n(\rho)).
$$
If $\rho\in\mathcal{D}(\mathcal{H}_0)$, we refer to $s_n(\rho)$ as the {\bf survival probabilities} for the state $\rho$, which constitute a non-increasing sequence giving the probability of not returning to $\mathcal{H}_0$ during the first $n$ steps when starting at $\rho$. We have that 
\beq \label{eq:tq}
\pi(\rho\to\mathcal{H}_0) = 1 - \lim_{n\to\infty} s_n(\rho),
\qquad\quad
\tau(\rho\to\mathcal{H}_0) = \sum_{n\ge0} s_n(\rho),
\qquad\quad
\forall\rho\in\mathcal{D}(\mathcal{H}_0),
\eeq
so that 
$$
\pi(\rho\to\mathcal{H}_0)=1 \; \Leftrightarrow \; \lim_{n\to\infty}s_n(\rho)=0,
\qquad\quad
\tau(\rho\to\mathcal{H}_0)<\infty \; \Leftrightarrow \; \sum_{n\ge0}s_n(\rho)<\infty.
$$
The first identity of \eqref{eq:tq} is a direct consequence of \eqref{eq:ar}, while the second one follows from the observation that \eqref{eq:ar} leads to
$$
\tau(\rho\to\mathcal{H}_0) = 
\begin{cases}
\lim_{n\to\infty} \left(\sum_{k=0}^{n-1}s_k(\rho)-ns_n(\rho) \right), 
& \text{if} \; \pi(\rho\to\mathcal{H}_0) = 1,
\\
\infty, 
& \text{if} \; \pi(\rho\to\mathcal{H}_0) < 1.
\end{cases}
$$ 
Since $\pi(\rho\to\mathcal{H}_0)<1$ is equivalent to $\lim_{n\to\infty}s_n(\rho)>0$, in this case \eqref{eq:tq} obviously holds. On the other hand, when $\pi(\rho\to\mathcal{H}_0)=1$, \eqref{eq:tq} is a consequence of the fact that $\lim_{n\to\infty}ns_n(\rho)=0$ whenever $\sum_{k=0}^{n-1}s_k(\rho)-ns_n(\rho)$ converges (see \cite[Lemma A.3]{bourg} or \cite[Lemma A]{sinkovicz}).

The advantage of \eqref{eq:tq} over \eqref{eq:ptf} is that the expression of the expected return time is always valid --even when the return {\color{black}probability is not equal to 1}--, which makes \eqref{eq:tq} useful to prove general properties of the expected return time, as the following remark shows.  

\begin{remark} \label{rem:rec-ps}
From \eqref{eq:tq} and the linearity of $s_n$ it is clear that the condition $\pi(\rho\to\mathcal{H}_0)=1$ is closed under convex combinations of countably many states, while $\tau(\rho\to\mathcal{H}_0)<\infty$ is closed under convex combinations of finitely many states. Therefore, the recurrence of $\mathcal{H}_0$ is characterized by the recurrence of each of its pure states, while a similar characterization holds for positive recurrence if $\dim\mathcal{H}_0<\infty$. 

On the other hand, suppose that a linear functional $\omega\colon\mathcal{I}(\mathcal{H})\to\mathbb{C}$ is positive, i.e. $\omega(\rho)\ge0$ for $\rho\ge0$. Then, given two pure states $\rho_\phi=|\phi\rangle\langle\phi|$, $\rho_\psi=|\psi\rangle\langle\psi|$, for any linear combination $\alpha\phi+\beta\psi$,
$$
\omega(\rho_{\alpha\phi+\beta\psi}) = 
|\alpha|^2 \omega(\rho_\phi) + |\beta|^2 \omega(\rho_\psi) + 
2\operatorname{Re}(\overline\alpha\beta\,\omega(|\psi\rangle\langle\phi|)) 
\ge 0,
$$
because $\omega(\rho^*)=\overline{\omega(\rho)}$ as a consequence of the positive decomposition of trace class operators. This inequality implies that 
$|\omega(|\psi\rangle\langle\phi|)|^2 \le \omega(\rho_\phi)\,\omega(\rho_\psi)$, which combined with the expression for $\omega(\rho_{\alpha\phi+\beta\psi})$ gives
$$
\omega(\rho_{\alpha\phi+\beta\psi})^{1/2} \le 
|\alpha|\,\omega(\rho_\phi)^{1/2} + |\beta|\,\omega(\rho_\psi)^{1/2}. 
$$
Applying this to $\omega=s_n$ shows that $\lim_{n\to\infty} s_n(\rho_{\alpha\phi+\beta\psi}) = 0$ when $\lim_{n\to\infty} s_n(\rho_\phi) = \lim_{n\to\infty} s_n(\rho_\psi) = 0$. This means that the condition $\pi(\psi\to\mathcal{H}_0)=1$ is closed under finite linear combinations of vectors. A similar argument taking $\omega=\sum_{k=0}^ns_k$ proves that the condition $\tau(\psi\to\mathcal{H}_0)<\infty$ is also closed under finite linear combinations of vectors. In consequence, the (positive) recurrence of a finite-dimensional subspace $\mathcal{H}_0$ is characterized by the (positive) recurrence of the pure states given by a basis of $\mathcal{H}_0$.
\end{remark}

{
\begin{remark}\label{rem21r1}(Notions of recurrence for OQWs). Regarding other notions of quantum recurrence in the literature, it is worth noting that in \cite{fagnola} the authors defined  a notion of recurrence for quantum dynamical semigroups. For discrete-time semigroups $(\mathcal{M}^n)_n$, such definition is the following: the semigroup will be recurrent if for any $A\in \mathcal{B}(\mathcal{H})$ satisfying $\langle\varphi|A\varphi\rangle>0$, for any $\varphi\in\mathcal{H}\setminus\{0\}$, the set
$$D(\mathcal{U}(A))=\{\varphi=\sum_{i\in V} \varphi_i\otimes |i\rangle \text{ s.t. } \sum_{k\geq 0}\langle\varphi,(\mathcal{M}^*)^k(A)\varphi\rangle<\infty\}$$
equals $\{0\}$. In [\cite{bardet}, Sect. 3.3], the authors explain that in the case of irreducible OQWs for which the internal degree of freedom of the vertices is finite, i.e., $\dim\mathcal{H}<\infty$, it holds that recurrence in the sense of \cite{fagnola} is equivalent to monitored recurrence. We also recall that the same notion of monitoring presented in this work is discussed in \cite{ls2015} and in \cite{cgl} the authors show that, in the case of OQWs, site recurrence in the sense of P\'olya,
$$\sum_{n\geq 1} \operatorname{Tr}(\mathbb{P}\Phi^n\mathbb{P}\rho)=\infty,\;\;\;\rho\in\mathcal{D}(\mathcal{H}_0),$$ 
which corresponds to recurrence without monitoring, is implied by the assumption of monitored recurrence of a site with respect to any initial density matrix. This notion is referred to as  SJK-recurrence \cite{stefanak} in the literature, see \cite{cgl,werner}. A converse also holds: for every irreducible OQW, if a site is SJK-recurrent with respect to some density then the site is monitored recurrent (with respect to every density located on such site). With respect to this last property, valid for dissipative walks, we remark that in the case of UQWs the authors in \cite{werner} describe an example showing that monitored recurrence and SJK-recurrence are not equivalent notions. At this point we are able to appreciate the value of  studying recurrence in terms of Schur functions, as this provides a unified description of monitored recurrence associated with classical and (open and closed) quantum settings \cite{gvfr}. Besides, new results for monitored recurrence come from the appropriate use of classical results for Schur functions \cite{bourg,gvfr,werner}. Examples are the splitting rules discussed in \cite{gvfr} for classical Markov chains and UQWs, which we extend in this paper to quantum Markov chains.
\end{remark}
}

The FR-function machinery previously introduced yields a characterization of recurrence and positive recurrence in terms of Schur functions. To prove it we will use the following result.

\begin{lemma} \label{lem:monconv}
Given a Hilbert space $\mathcal{H}$, any non-decreasing uniformly bounded sequence of positive operators in $\mathcal{I}(\mathcal{H})$ is norm convergent. 
\end{lemma}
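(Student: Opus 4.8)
The plan is to reduce the statement to the completeness of the trace-class ideal $\mathcal{I}(\mathcal{H})$ in the trace norm, by showing that the given sequence is Cauchy in $\|\cdot\|_1$. Write the sequence as $0 \le A_1 \le A_2 \le \cdots$ with $\sup_n \|A_n\|_1 < \infty$, the uniform bound being understood in the trace norm. The single fact that drives everything is that for a positive trace-class operator $B$ one has $|B|=B$, whence $\|B\|_1 = \operatorname{Tr}(|B|) = \operatorname{Tr}(B)$; in particular the trace norm is additive and monotone on the positive cone.

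First I would observe that the scalar sequence $\operatorname{Tr}(A_n) = \|A_n\|_1$ is non-decreasing, because $A_{n+1}-A_n \ge 0$ forces $\operatorname{Tr}(A_{n+1}) \ge \operatorname{Tr}(A_n)$, and it is bounded above by the uniform bound. A bounded monotone real sequence converges, hence is Cauchy.

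Next, for indices $m \ge n$ the difference $A_m - A_n$ is positive, so by the identity above
$$
\|A_m - A_n\|_1 = \operatorname{Tr}(A_m - A_n) = \operatorname{Tr}(A_m) - \operatorname{Tr}(A_n).
$$
Since the right-hand side tends to $0$ as $m,n \to \infty$, the sequence $(A_n)$ is Cauchy in $(\mathcal{I}(\mathcal{H}),\|\cdot\|_1)$. As the trace-class operators form a Banach space under the trace norm, the Cauchy sequence converges to some $A \in \mathcal{I}(\mathcal{H})$, which completes the proof.

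Regarding difficulty, there is essentially no hard step: the whole argument is a telescoping/Cauchy trick resting on the positivity of the increments $A_m - A_n$. The one point I would state with care is that \emph{uniformly bounded} is meant in the trace norm, equivalently $\sup_n \operatorname{Tr}(A_n) < \infty$, since this is exactly what makes the scalar sequence $\operatorname{Tr}(A_n)$ bounded. An alternative route would first invoke the strong-operator monotone convergence theorem to produce a candidate limit $A$ and then verify that $A \in \mathcal{I}(\mathcal{H})$ with $\operatorname{Tr}(A) = \lim_n \operatorname{Tr}(A_n)$, after which $\|A - A_n\|_1 = \operatorname{Tr}(A) - \operatorname{Tr}(A_n) \to 0$ again by positivity of $A - A_n$; but this requires separately identifying the limit operator and is more laborious, so I would present the Cauchy version as the main proof.
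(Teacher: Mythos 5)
Your proof is correct and follows essentially the same route as the paper's: both arguments use that $\|B\|_1=\operatorname{Tr}(B)$ for positive $B$ to convert the bounded monotone scalar sequence $\operatorname{Tr}(A_n)$ into the Cauchy property $\|A_m-A_n\|_1=\operatorname{Tr}(A_m)-\operatorname{Tr}(A_n)$, and then invoke completeness of $\mathcal{I}(\mathcal{H})$. No discrepancies to report.
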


{\bf Proof.}
Let $\rho_n\in\mathcal{I}(\mathcal{H})$ such that $0\le\rho_m\le\rho_n$ for $m<n$ and $\|\rho_n\|_1=\operatorname{Tr}(\rho_n)$ is bounded. Then, if $m<n$, we have that $\rho_n-\rho_m\ge0$, so that $\|\rho_n-\rho_m\|_1=\operatorname{Tr}(\rho_n-\rho_m)$. Hence, $\operatorname{Tr}(\rho_n)$ is a non-decreasing bounded sequence of real numbers, thus it is convergent. Therefore, $\operatorname{Tr}(\rho_n)$ is a Cauchy sequence, but then the equality $\|\rho_n-\rho_m\|_1=\operatorname{Tr}(\rho_n)-\operatorname{Tr}(\rho_m)$ shows that $\rho_n$ is also a Cauchy sequence in the Banach space $\mathcal{I}(\mathcal{H})$, thus it is convergent with respect to the trace norm.  
\qed 

Now we can deal with the Schur characterization of recurrence and positive recurrence. 

\begin{pro} \label{pro:rec-F}
Let $\Phi$ be a quantum channel on a Hilbert space $\mathcal{H}$ and $\mathcal{H}_0\subset\mathcal{H}$ a subspace. If $\mathbb{F}$ is the reduced Schur function of $\mathcal{H}_0$ with respect to $\Phi$, then: 
\begin{itemize}
\item[(i)] $\mathbb{F}(1):=\lim_{x\uparrow1}\mathbb{F}(x)$ exists as an operator on $\mathcal{I}(\mathcal{H}_0)$ in the strong sense and it is a CP map which is trace non-increasing, i.e. $\operatorname{Tr}(\mathbb{F}(1)\rho)\le\operatorname{Tr}(\rho)$ for every positive $\rho\in\mathcal{I}(\mathcal{H}_0)$. Besides,
$$
\pi(\rho\to\mathcal{H}_0)=\operatorname{Tr}(\mathbb{F}(1)(\rho)),
\qquad
\rho\in\mathcal{D}(\mathcal{H}_0).
$$
\item[(ii)] $\mathcal{H}_0$ is recurrent iff $\mathbb{F}(1)$ is trace preserving.
\item[(iii)] $\mathcal{H}_0$ is positive recurrent iff $\mathbb{F}(1)$ is trace preserving and $\mathbb{F}'(1):=\lim_{x\uparrow1}\mathbb{F}'(x)$ exists as an operator on $\mathcal{I}(\mathcal{H}_0)$ in the strong sense. Then, $\mathbb{F}'(1)$ is a CP map on $\mathcal{I}(\mathcal{H}_0)$ such that
$$
\tau(\rho\to\mathcal{H}_0) = 1 + \operatorname{Tr}(\mathbb{F}'(1)(\rho)),
\qquad
\rho\in\mathcal{D}(\mathcal{H}_0).
$$  
\end{itemize}
Similar results hold for reduced Schur functions with respect to TOMs.   
\end{pro}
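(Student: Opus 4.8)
The plan is to reduce everything to the fact that, for $\rho\ge0$ in $\mathcal{I}(\mathcal{H}_0)$, the scalar coefficients $c_n(\rho):=\operatorname{Tr}(\mathbb{A}_n(\rho))$ are non-negative and summable, and to let the monotone convergence of Lemma~\ref{lem:monconv} upgrade scalar statements to operator ones. The basic scalar input is the telescoping identity \eqref{eq:sumtra}: since the off-diagonal block $\mathcal{I}(\mathcal{H}_0,\mathcal{H}_0^\bot)$ carries zero trace, one has $\operatorname{Tr}(\mathbb{A}_n(\rho))=\operatorname{Tr}(a_n(\rho))$ for $\rho\in\mathcal{I}(\mathcal{H}_0)$, whence $\sum_{n=1}^{N}c_n(\rho)=\operatorname{Tr}(\rho)-\operatorname{Tr}((\mathbb{Q}\Phi)^{N}(\rho))\le\operatorname{Tr}(\rho)$ for $\rho\ge0$.

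For part (i), fix $\rho\ge0$ in $\mathcal{I}(\mathcal{H}_0)$. The partial sums $\sum_{n=1}^{N}\mathbb{A}_n(\rho)$ form a non-decreasing sequence of positive operators (each $\mathbb{A}_n$ is CP) whose traces are bounded by $\operatorname{Tr}(\rho)$, so Lemma~\ref{lem:monconv} produces a norm limit which I denote $\mathbb{F}(1)(\rho)$. To identify it with $\lim_{x\uparrow1}\mathbb{F}(x)(\rho)$ I use a squeeze exploiting positivity: because $0\le x^{n-1}\le1$, the operator $\mathbb{F}(1)(\rho)-\mathbb{F}(x)(\rho)=\sum_{n\ge1}(1-x^{n-1})\mathbb{A}_n(\rho)$ is positive, so $\|\mathbb{F}(1)(\rho)-\mathbb{F}(x)(\rho)\|_1=\sum_{n\ge1}(1-x^{n-1})c_n(\rho)$, which tends to $0$ as $x\uparrow1$ by the elementary Abel limit for the convergent non-negative series $\sum_n c_n(\rho)$. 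The positive decomposition of trace-class operators extends this to arbitrary $\rho$, giving the strong limit $\mathbb{F}(1)$. As a strong limit of the CP maps $\mathbb{F}(x)$ it is CP, and the trace estimate $\operatorname{Tr}(\mathbb{F}(1)(\rho))=\sum_n c_n(\rho)\le\operatorname{Tr}(\rho)$ passes to the limit, so $\mathbb{F}(1)$ is trace non-increasing; the formula $\pi(\rho\to\mathcal{H}_0)=\operatorname{Tr}(\mathbb{F}(1)(\rho))$ is then read off from \eqref{eq:ptf}.

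Part (ii) is immediate: $\mathcal{H}_0$ is recurrent iff $\operatorname{Tr}(\mathbb{F}(1)(\rho))=1=\operatorname{Tr}(\rho)$ for every state $\rho\in\mathcal{D}(\mathcal{H}_0)$, which by homogeneity and the positive decomposition is exactly the statement that $\mathbb{F}(1)$ is trace preserving. For part (iii), I first note that positive recurrence forces recurrence, hence $\mathbb{F}(1)$ is trace preserving, so $\sum_n c_n(\rho)=\operatorname{Tr}(\rho)$, $\pi(\rho\to\mathcal{H}_0)=1$, and the second line of \eqref{eq:ptf} gives $\tau(\rho\to\mathcal{H}_0)=\sum_{n\ge1}n\,c_n(\rho)$. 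I then apply the same monotone-convergence-plus-squeeze machinery to $\mathbb{F}'(x)(\rho)=\sum_{n\ge2}(n-1)x^{n-2}\mathbb{A}_n(\rho)$, whose coefficients $(n-1)c_n(\rho)$ are again non-negative: $\mathbb{F}'(1)(\rho)$ exists in norm precisely when $\sum_{n\ge1}(n-1)c_n(\rho)<\infty$, in which case $\operatorname{Tr}(\mathbb{F}'(1)(\rho))=\sum_n(n-1)c_n(\rho)=\tau(\rho\to\mathcal{H}_0)-\operatorname{Tr}(\rho)$. Hence $\tau<\infty$ for every state iff $\mathbb{F}'(1)$ exists in the strong sense, and the formula $\tau(\rho\to\mathcal{H}_0)=1+\operatorname{Tr}(\mathbb{F}'(1)(\rho))$ follows, with $\mathbb{F}'(1)$ being CP again as a strong limit of CP maps. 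The TOM version is obtained by reading $\Phi$ as the CPTP map \eqref{eq:TOM2}.

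The one genuinely non-routine point, and the main obstacle, is the interchange of the Abel limit $x\uparrow1$ with the operator-valued series at the level of the \emph{trace norm} rather than merely of the scalar trace. This is exactly where positivity of all summands is indispensable: it lets me replace a norm estimate by a trace estimate (the difference operator is positive, so its trace norm equals its trace), thereby reducing operator convergence to a trivial scalar Abel limit controlled by Lemma~\ref{lem:monconv}. Without positivity one would face a genuine operator Abel-summation problem; with it, the remaining steps — the extension from positive to general $\rho$ and the permanence of complete positivity and of the trace-(non-)increasing property under strong limits — are routine.
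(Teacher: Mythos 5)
Your proof is correct and follows essentially the same route as the paper: the telescoping trace bound \eqref{eq:sumtra}, Lemma~\ref{lem:monconv} applied to a monotone family of positive operators, extension to general $\rho$ by positive decomposition, and \eqref{eq:ptf} to read off $\pi$ and $\tau$. The only (harmless) difference is that you apply the monotone-convergence lemma to the partial sums $\sum_{n\le N}\mathbb{A}_n(\rho)$ and then identify the Abel limit by a positivity squeeze, whereas the paper applies it directly to the monotone net $\mathbb{F}(x)(\rho)$ as $x\uparrow1$; this is a minor technical variant, arguably slightly more careful on the series-versus-Abel-limit point.
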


{\bf Proof.}
The operators $\mathbb{A}_n=\mathbb{P}\Phi(\mathbb{Q}\Phi)^{n-1}\mathbb{P}$ are CP maps on $\mathcal{I}(\mathcal{H}_0)$, thus this is also true for
$\mathbb{F}(x)=\sum_{n\ge1}x^{n-1}\mathbb{A}_n$ when $x\in(0,1)$. Therefore, $\mathbb{F}(x)(\rho)\in\mathcal{I}(\mathcal{H}_0)$ is positive and non-decreasing in $x$ for $x\in(0,1)$ and any positive $\rho\in\mathcal{I}(\mathcal{H}_0)$. It is also uniformly bounded with respect to $x$ because 
\beq\label{eq:trFr}
\|\mathbb{F}(x)(\rho)\|_1 = \operatorname{Tr}(\mathbb{F}(x)(\rho)) =  
\operatorname{Tr}(f(x)(\rho)) = \sum_{n\ge1}x^{n-1}\operatorname{Tr}(a_n(\rho)) \le
\sum_{n\ge1}\operatorname{Tr}(a_n(\rho)) \le \operatorname{Tr}(\rho),
\qquad
x\in(0,1),
\eeq
the last inequality due to \eqref{eq:sumtra}. As a consequence of Lemma~\ref{lem:monconv}, if $\rho\in\mathcal{I}(\mathcal{H})$ is positive, $\mathbb{F}(x)(\rho)$ converges in $\mathcal{I}(\mathcal{H})$ with respect to the trace norm for $x\uparrow1$. The convergence for an arbitrary $\rho\in\mathcal{I}(\mathcal{H})$ follows from the positive decomposition of a trace-class operator. This proves that $\mathbb{F}(1)=\lim_{x\uparrow1}\mathbb{F}(x)$ exists in the strong sense as an operator on $\mathcal{I}(\mathcal{H}_0)$. As a strong limit of CP maps, $\mathbb{F}(1)$ is also a CP map. Besides, $\operatorname{Tr}(\mathbb{F}(1)(\rho))\le\operatorname{Tr}(\rho)$ for any positive $\rho\in\mathcal{I}(\mathcal{H}_0)$ due to \eqref{eq:trFr}. Bearing in mind \eqref{eq:ptf}, this proves (i).

If $\mathbb{F}(1)$ is trace preserving, then (i) implies that $\mathcal{H}_0$ is recurrent. To see the converse, assume that $\mathcal{H}_0$ is recurrent. Then $\operatorname{Tr}(\mathbb{F}(1)(\rho)) = \pi(\rho\to\mathcal{H}_0) = 1$ for every state $\rho\in\mathcal{D}(\mathcal{H}_0)$. This obviously extends as $\operatorname{Tr}(\mathbb{F}(1)(\rho)) = \operatorname{Tr}(\rho)$ for any positive $\rho\in\mathcal{I}(\mathcal{H}_0)$, an identity which finally generalizes to every $\rho\in\mathcal{I}(\mathcal{H}_0)$ by using the positive decomposition of trace-class operators. This yields (ii).

Let us prove now (iii). If $\mathbb{F}(1)$ is trace preserving and $\mathbb{F}'(1)$ exists in the strong sense, then (i) and \eqref{eq:ptf} imply that $\tau(\rho\to\mathcal{H}_0) = 1 + \operatorname{Tr}(\mathbb{F}'(1)(\rho))<\infty$ for all $\rho\in\mathcal{D}(\mathcal{H}_0)$. Conversely, suppose that $\mathcal{H}_0$ is positive recurrent. If $x\in(0,1)$, the same arguments given for $\mathbb{F}(x)$ prove that $\mathbb{F}'(x)=\sum_{n\ge2}nx^{n-2}\mathbb{A}_n$ is a CP map on $\mathcal{I}(\mathcal{H}_0)$ and $\mathbb{F}'(x)(\rho)\in\mathcal{I}(\mathcal{H}_0)$ is positive and non-decreasing in $x$ for every positive $\rho\in\mathcal{I}(\mathcal{H}_0)$. Also, for any state $\rho\in\mathcal{D}(\mathcal{H}_0)$,
$$
\|\mathbb{F}'(x)(\rho)\|_1 = \operatorname{Tr}(\mathbb{F}'(x)(\rho)) =  
\sum_{n\ge1}(n-1)\,x^{n-2}\,\operatorname{Tr}(\mathbb{A}_n(\rho)) \le
\sum_{n\ge1}(n-1)\,\operatorname{Tr}(\mathbb{A}_n(\rho)) = 
\tau(\rho\to\mathcal{H}_0)-1,
$$
the last equality due to \eqref{eq:ptf} and the recurrence of $\mathcal{H}_0$. Therefore, $\mathbb{F}'(x)(\rho)$ is uniformly bounded with respect to $x$ for any positive $\rho\in\mathcal{I}(\mathcal{H}_0)$ due to the positive recurrence of $\mathcal{H}_0$. In consequence, Lemma~\ref{lem:monconv} combined with the positive decomposition of trace-class operators prove that $\mathbb{F}'(x)(\rho)$ converges in norm in $\mathcal{I}(\mathcal{H}_0)$ for $x\uparrow1$ and any $\rho\in\mathcal{I}(\mathcal{H}_0)$. We conclude that $\mathbb{F}'(1)=\lim_{x\uparrow1}\mathbb{F}'(x)$ exists in the strong sense as a CP map on $\mathcal{I}(\mathcal{H}_0)$, whose relation with the expected return time follows from \eqref{eq:ptf}.
\qed

The previous Schur characterization of recurrence also leads to a spectral sufficient condition for positive recurrence.

\begin{pro} \label{pro:rec-sp}
Let $\Phi$ be a quantum channel on a Hilbert space $\mathcal{H}$ and $\mathcal{H}_0\subset\mathcal{H}$ a subspace. If $\mathbb{Q}=Q\cdot Q$, with $Q$ the orthogonal projection of $\mathcal{H}$ onto $\mathcal{H}_0^\bot$, the subspace $\mathcal{H}_0$ is positive recurrent whenever 1 does not lie in the spectrum of $\mathbb{Q}\Phi$. In this case, 
\beq\label{eq:tinv}
\tau(\rho\to\mathcal{H}_0) = 
\operatorname{Tr}\left((I-\mathbb{Q}\Phi)^{-1}(\rho)\right),
\qquad
\forall\rho\in\mathcal{D}(\mathcal{H}_0).
\eeq
Similar results hold for TOMs.
\end{pro}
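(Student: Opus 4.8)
The plan is to read off the expected return time directly from the survival probabilities, with no appeal to Proposition~\ref{pro:rec-F}. Set $T=\mathbb{Q}\Phi$, which is a positive operator on $\mathcal{I}(\mathcal{H})$ --- being the composition of the positive maps $\mathbb{Q}$ and $\Phi$ --- and a contraction, since $\|\mathbb{Q}\|=\|\Phi\|=1$ forces $\|T\|\le1$. The starting point is the second identity in \eqref{eq:tq}, namely $\tau(\rho\to\mathcal{H}_0)=\sum_{n\ge0}s_n(\rho)$ with $s_n(\rho)=\operatorname{Tr}(T^n(\rho))$, which holds regardless of whether the return probability equals $1$. For a state $\rho\in\mathcal{D}(\mathcal{H}_0)$ every term $s_n(\rho)$ is non-negative, so the whole task reduces to summing this series and recognizing the outcome as the trace of a resolvent applied to $\rho$.

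First I would work at $x\in(0,1)$, where everything converges comfortably. Since $\|x^nT^n\|\le x^n$, the Neumann series $\sum_{n\ge0}x^nT^n$ converges in operator norm to $(I-xT)^{-1}$; applying the trace functional term by term, which is legitimate because the trace is continuous with respect to $\|\cdot\|_1$, then gives $\operatorname{Tr}((I-xT)^{-1}(\rho))=\sum_{n\ge0}x^ns_n(\rho)$. Note that up to this point only $\|T\|\le1$ has been used, and the spectral hypothesis has not yet entered.

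The hypothesis $1\notin\sigma(\mathbb{Q}\Phi)$ enters through the behaviour as $x\uparrow1$. Because the resolvent set is open and the resolvent is analytic on it, $1\notin\sigma(T)$ guarantees that $x\mapsto(I-xT)^{-1}$ is continuous at $x=1$, so $(I-xT)^{-1}\to(I-T)^{-1}$ in operator norm, whence $\operatorname{Tr}((I-xT)^{-1}(\rho))\to\operatorname{Tr}((I-T)^{-1}(\rho))$. On the other side of the equality, since $\rho\ge0$ makes each $s_n(\rho)\ge0$, the partial sums $\sum_{n\ge0}x^ns_n(\rho)$ are non-decreasing in $x$, so an Abelian/monotone-convergence argument yields $\lim_{x\uparrow1}\sum_{n\ge0}x^ns_n(\rho)=\sum_{n\ge0}s_n(\rho)$. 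Equating the two limits identifies $\tau(\rho\to\mathcal{H}_0)=\operatorname{Tr}((I-\mathbb{Q}\Phi)^{-1}(\rho))$, which is finite; this simultaneously proves \eqref{eq:tinv} and the positive recurrence of $\mathcal{H}_0$, because finiteness of the series forces $s_n(\rho)\to0$ and hence $\pi(\rho\to\mathcal{H}_0)=1$ by the first identity in \eqref{eq:tq}.

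The step I expect to need the most care is the interchange of the limit $x\uparrow1$ with the infinite summation, i.e.\ matching the resolvent limit with $\sum_{n\ge0}s_n(\rho)$: this is exactly where the non-negativity of the survival probabilities, itself a consequence of the positivity of $T$ and of $\rho$, is indispensable, since without sign control an Abelian limit need not recover the series. A cleaner alternative that avoids the limit altogether is to push positivity further: $T$ is a positive operator on $\mathcal{I}(\mathcal{H})$, whose cone of positive trace-class operators is normal (if $0\le\sigma\le\rho$ then $\|\sigma\|_1\le\|\rho\|_1$) and generating, so its spectral radius lies in its spectrum; combined with $r(T)\le\|T\|\le1$ and $1\notin\sigma(T)$ this forces $r(T)<1$, whence $\sum_{n\ge0}T^n=(I-T)^{-1}$ converges in operator norm and \eqref{eq:tinv} follows by a single application of the trace. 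Finally, the extension to TOMs is immediate, since they are covered by viewing them as CPTP maps on $\mathcal{I}(\mathcal{H}\otimes\mathcal{S})$ as explained earlier.
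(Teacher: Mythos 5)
Your proof is correct, but it takes a genuinely different route from the paper's. The paper proves this proposition through the Schur-function machinery: the hypothesis $1\notin\sigma(\mathbb{Q}\Phi)$ makes $f(z)=(I-\mathbb{Q})\Phi(I-z\mathbb{Q}\Phi)^{-1}(I-\mathbb{Q})$ analytic in a neighbourhood of $z=1$, so $\mathbb{F}(1)$ and $\mathbb{F}'(1)$ exist and Proposition~\ref{pro:rec-F} applies; the formula \eqref{eq:tinv} is then extracted by an algebraic telescoping of $\operatorname{Tr}(f(1)(\rho))+\operatorname{Tr}(f'(1)(\rho))$ using the trace-preservation of $\Phi$. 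You instead bypass the Schur function entirely and work with the survival probabilities via the second identity in \eqref{eq:tq}, summing the Neumann series $\sum_n x^n(\mathbb{Q}\Phi)^n$ at $x\in(0,1)$ and matching the boundary limit of the resolvent (which exists precisely because $1$ is in the resolvent set) against the Abelian limit of $\sum_n x^n s_n(\rho)$, where the positivity of $\mathbb{Q}\Phi$ and of $\rho$ supplies the sign control needed for the interchange. Both arguments hinge on the same analytic fact --- regularity of $(I-z\mathbb{Q}\Phi)^{-1}$ at $z=1$ --- but yours is more elementary and self-contained (it needs only \eqref{eq:tq}, not Proposition~\ref{pro:rec-F}), while the paper's version fits the Schur-function narrative and incidentally records the analyticity of $f$ and $\mathbb{F}$ at $z=1$, which is reused elsewhere. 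Your closing alternative, deducing $r(\mathbb{Q}\Phi)<1$ from the fact that the spectral radius of a positive operator on an ordered Banach space with a normal generating cone lies in its spectrum, is also sound (it is the kind of Perron--Frobenius statement for trace ideals found in the cited work of Schrader) and actually yields a slightly stronger conclusion, namely norm convergence of $\sum_n(\mathbb{Q}\Phi)^n$; the paper does not need or use this.
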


{\bf Proof.}
Since $(1-z\mathbb{Q}\Phi)^{-1}$ is an analytic function of $z$ when $z^{-1}$ lies in the resolvent set of $\mathbb{Q}\Phi$, the Schur function $f$ given by \eqref{ftrue} is analytic around $z=1$ if the spectrum of $\mathbb{Q}\Phi$ excludes the point 1. Then, both $f$ and $\mathbb{F}=\mathbb{P}f\mathbb{P}$, as well as their derivatives, are continuous at $z=1$, which implies that $\mathbb{F}(1)$ and $\mathbb{F}'(1)$ exist as operators on $\mathcal{I}(\mathcal{H}_0)$. In this case, for the same reason previously pointed out, the reduced Schur function $\mathbb{F}$ may be substituted by $f$ in the expressions for the return probability and for the expected return time given in Proposition~\ref{pro:rec-F}. Using that $\Phi$ is trace preserving, we find for every state $\rho\in\mathcal{D}(\mathcal{H}_0)$ that
$$
\begin{aligned}
\pi(\rho\to\mathcal{H}_0) 
& = \operatorname{Tr}(f(1)(\rho)) 
= \operatorname{Tr}\left((I-\mathbb{Q})\Phi(I-\mathbb{Q}\Phi)^{-1}(\rho)\right) 
= \operatorname{Tr}\left((I-\mathbb{Q}\Phi)^{-1}(\rho)\right) -
\operatorname{Tr}\left(\mathbb{Q}\Phi(I-\mathbb{Q}\Phi)^{-1}(\rho)\right)
\\ 
& = \operatorname{Tr}(\rho) = 1,
\end{aligned} 
$$ 
so that the subspace $\mathcal{H}_0$ is recurrent. Besides, 
$$
\begin{aligned}
\tau(\rho\to\mathcal{H}_0) 
& = 1 + \operatorname{Tr}(f'(1)(\rho)) 
= \operatorname{Tr}(f(1)(\rho)) + \operatorname{Tr}(f'(1)(\rho)) 
\\ 
& = \operatorname{Tr}\left((I-\mathbb{Q})\Phi(I-\mathbb{Q}\Phi)^{-1}(\rho)\right) + 
\operatorname{Tr}
\left((I-\mathbb{Q})\Phi\mathbb{Q}\Phi(I-\mathbb{Q}\Phi)^{-2}(\rho)\right) 
= \operatorname{Tr}\left((I-\mathbb{Q})\Phi(I-\mathbb{Q}\Phi)^{-2}(\rho)\right)
\\
& = \operatorname{Tr}\left((I-\mathbb{Q}\Phi)^{-2}(\rho)\right) -
\operatorname{Tr}\left(\mathbb{Q}\Phi(I-\mathbb{Q}\Phi)^{-2}(\rho)\right) 
= \operatorname{Tr}\left((I-\mathbb{Q}\Phi)^{-1}(\rho)\right) < \infty,
\end{aligned}
$$
hence $\mathcal{H}_0$ is positive recurrent.
\qed 

The previous spectral characterization of positive recurrence yields a recurrence property of irreducible quantum channels and quantum Markov chains which extends a known one for classical Markov chains.

\begin{theorem} \label{thm:rec-fi} 
With respect to an irreducible quantum channel on a finite-dimensional Hilbert space, every non-trivial subspace is positive recurrent and the expected return time is given by \eqref{eq:tinv}. A similar result holds for an irreducible quantum Markov chain on a finite graph with a finite-dimensional space of internal degrees of freedom.
\end{theorem}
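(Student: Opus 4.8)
The plan is to deduce everything from the spectral criterion of Proposition~\ref{pro:rec-sp}: for a non-trivial subspace $\mathcal{H}_0$ with $Q$ the orthogonal projection onto $\mathcal{H}_0^\bot$, it suffices to prove that $1$ does not belong to the spectrum of $\mathbb{Q}\Phi$, and then the formula \eqref{eq:tinv} is automatic. Since $\dim\mathcal{H}<\infty$ the spectrum is a finite set of eigenvalues, so I will in fact establish the stronger statement that the spectral radius $r(\mathbb{Q}\Phi)$ is strictly smaller than $1$, which guarantees $1\notin\operatorname{spec}(\mathbb{Q}\Phi)$. As an a priori bound, note that $\mathbb{Q}\Phi$ is completely positive (a composition of CP maps) and trace non-increasing, since $\Phi$ is trace preserving while $\mathbb{Q}=Q\cdot Q$ only decreases the trace on positive operators; hence \eqref{eq:CPnorm} gives $\|\mathbb{Q}\Phi\|\le1$ and therefore $r(\mathbb{Q}\Phi)\le\|\mathbb{Q}\Phi\|\le1$.

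The heart of the argument is a Perron--Frobenius step. Because $\mathbb{Q}\Phi$ is $*$-preserving it restricts to a real-linear map on the Hermitian matrices that leaves the cone of positive semidefinite operators invariant; as this cone is proper in finite dimensions, the Perron--Frobenius theory for cone-preserving maps ensures that the spectral radius is itself an eigenvalue admitting a positive eigenvector. Thus, were $r(\mathbb{Q}\Phi)=1$, there would exist $\sigma\ge0$, $\sigma\neq0$, with $\mathbb{Q}\Phi(\sigma)=Q\Phi(\sigma)Q=\sigma$. I would then extract structural information from this fixed-point equation. Applying $Q\cdot Q$ again shows $\sigma=Q\sigma Q$, so $\operatorname{supp}\sigma\subset\mathcal{H}_0^\bot$. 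Taking traces and using that $\Phi$ is trace preserving gives
$$
\operatorname{Tr}(\sigma)=\operatorname{Tr}(Q\Phi(\sigma))=\operatorname{Tr}(\Phi(\sigma))-\operatorname{Tr}(P\Phi(\sigma)P)=\operatorname{Tr}(\sigma)-\operatorname{Tr}(P\Phi(\sigma)P),
$$
so $\operatorname{Tr}(P\Phi(\sigma)P)=0$; since $P\Phi(\sigma)P\ge0$ this forces $P\Phi(\sigma)P=0$, whence $\Phi(\sigma)^{1/2}P=0$ and therefore $P\Phi(\sigma)=\Phi(\sigma)P=0$. Combined with the fixed-point equation this upgrades $\sigma$ to a genuine invariant operator of the channel, $\Phi(\sigma)=Q\Phi(\sigma)Q=\sigma$, still supported in $\mathcal{H}_0^\bot$.

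It remains to reach a contradiction via irreducibility. Because $\dim\mathcal{H}<\infty$, the CPTP map $\Phi$ possesses an invariant state, so Theorem~\ref{cp_irr}(a) guarantees that this invariant state is unique and faithful. But $\sigma/\operatorname{Tr}(\sigma)$ is an invariant state whose support lies in $\mathcal{H}_0^\bot\subsetneq\mathcal{H}$ (here I use $\mathcal{H}_0\neq\{0\}$), hence it is not faithful and differs from the faithful one, contradicting uniqueness. Therefore $r(\mathbb{Q}\Phi)<1$, Proposition~\ref{pro:rec-sp} applies, and $\mathcal{H}_0$ is positive recurrent with expected return time given by \eqref{eq:tinv}.

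For the quantum Markov chain statement I would simply observe that a TOM on a finite graph with $\dim\mathcal{H}<\infty$ is a CPTP map on the finite-dimensional space $\mathcal{I}(\mathcal{H}\otimes\mathcal{S})$, that irreducibility of the chain is irreducibility of this map, and that Proposition~\ref{pro:rec-sp} holds for TOMs, so the identical argument transfers verbatim. The main obstacle, and the only genuinely non-elementary ingredient, is the Perron--Frobenius step: one must argue carefully that equality of the spectral radius with $1$ produces a \emph{positive} fixed point rather than merely a complex eigenvector on the unit circle, since it is precisely positivity (together with the support restriction to $\mathcal{H}_0^\bot$) that lets irreducibility bite through Theorem~\ref{cp_irr}(a).
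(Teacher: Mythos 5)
Your proposal is correct and follows essentially the same route as the paper: both reduce to showing, via Proposition~\ref{pro:rec-sp}, that $1$ is not in the spectrum of $\mathbb{Q}\Phi$, extract a positive fixed point $\sigma=\mathbb{Q}\Phi(\sigma)$, upgrade it to a genuine non-faithful invariant state of $\Phi$, and contradict irreducibility through Theorem~\ref{cp_irr}. The only local differences are that you obtain the positive eigenvector from Perron--Frobenius theory for cone-preserving maps where the paper cites \cite[Thm.~4.1]{schrader}, and you deduce $P\Phi(\sigma)=\Phi(\sigma)P=0$ from the square-root factorization of $P\Phi(\sigma)P=0$ where the paper uses a $\lambda\downarrow0$ positivity argument; both variants are valid (and your second one is arguably cleaner).
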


{\bf Proof of Theorem~\ref{thm:rec-fi}.}
According to Proposition~\ref{pro:rec-sp}, and bearing in mind that $\dim\mathcal{H}<\infty$, it is enough to show that, for every irreducible quantum channel $\Phi$ and every orthogonal projection $Q\ne I$ on $\mathcal{H}$, 1 is not an eigenvalue of $\mathbb{Q}\Phi$, i.e. $\mathbb{Q}\Phi$ has no fixed point in $\mathcal{I}(\mathcal{H})$, where $\mathbb{Q} = Q \cdot Q$.

We know that $\|\mathbb{Q}\Phi\|\le1$. If $\|\mathbb{Q}\Phi\|<1$, the operator $I-\mathbb{Q}\Phi$ is invertible, hence 1 is not an eigenvalue of $\mathbb{Q}\Phi$. 
Suppose that $\|\mathbb{Q}\Phi\|=1$ and $\mathbb{Q}\Phi(\rho)=\rho$ for some $\rho\in\mathcal{I}(\mathcal{H})$. Then, \cite[Thm. 4.1]{schrader} guarantees the existence of a positive fixed point for $\mathbb{Q}\Phi$, so we can assume without loss that $\rho\in\mathcal{D}(\mathcal{H})$. Since $\Phi$ is trace preserving, the identity $\mathbb{Q}\Phi(\rho)=\rho$ leads to 
$$
\operatorname{Tr}(\mathbb{Q}\Phi(\rho)) = \operatorname{Tr}(\rho) = \operatorname{Tr}(\Phi(\rho)) =
\operatorname{Tr}(\mathbb{P}\Phi(\rho)) + 
\operatorname{Tr}(\mathbb{Q}\Phi(\rho)),
$$
which implies that the positive operator $\mathbb{P}\Phi(\rho)$ must vanish. Then, the positivity of $(\mathbb{P}+\lambda\mathbb{Q})\Phi(\rho)$ for every $\lambda>0$ yields
$$
P\Phi(\rho)Q + Q\Phi(\rho)P + \lambda Q\Phi(\rho)Q \ge0, 
\qquad
\lambda>0.
$$
Taking $\lambda\downarrow0$, it follows that the traceless operator $P\Phi(\rho)Q + Q\Phi(\rho)P$ is positive, hence $P\Phi(\rho)Q=Q\Phi(\rho)P=0$. We conclude that $\Phi(\rho)=\mathbb{Q}\Phi(\rho)=\rho$, i.e. $\rho$ is an invariant state of $\Phi$ which is not faithful because $\rho=Q\rho Q$ for some orthogonal projection $Q\ne I$. Therefore, {\color{black}by Theorem \ref{cp_irr},} $\Phi$ cannot be irreducible.
\qed

\begin{remark}
Some comments on the above result are pertinent:
\begin{enumerate}
\item In general, the hypothesis of finiteness of the dimension of the Hilbert space cannot be dropped: consider the biased nearest neighbour random walk on the integers, for which no site is recurrent.
\item In general, the irreducibility assumption cannot be omitted: consider the 1-qubit amplitude damping channel \cite{nielsen}, for which the pure state $\psi=[0\;1]^T$ is not recurrent.
\end{enumerate} 
\end{remark}

Despite the above remarks, it is possible to weaken the assumptions of Theorem~\ref{thm:rec-fi} by referring to the relevant Hilbert space (also known as minimal enclosure) of a finite-dimensional subspace $\mathcal{H}_0\subset\mathcal{H}$, defined as 
$$
\widetilde{\mathcal{H}}_0 = 
\underset{n\ge0}{\overline{\operatorname{span}}} 
\{\operatorname{supp}\Phi^n(P)\},
$$
where $P$ is the orthogonal projection of $\mathcal{H}$ onto $\mathcal{H}_0$. Given an unravelling $\{B_i\}$ of $\Phi$, then
$$
\operatorname{supp}\Phi^n(P) = 
\bigcap_{i_1,\dots,i_n} (\ker(PB_{i_1}^*\cdots B_{i_n}^*))^\bot =
\underset{i_1,\dots,i_n}{\overline{\operatorname{span}}}
\{B_{i_n}\cdots B_{i_1}\mathcal{H}_0\}. 
$$
This shows that $B_i\widetilde{\mathcal{H}}_0\subset\widetilde{\mathcal{H}}_0$, which means that for any state $\rho\in\mathcal{D}(\mathcal{H})$ with $\operatorname{supp}\rho\subset\widetilde{\mathcal{H}}_0$, we have that $\operatorname{supp}\Phi(\rho)\subset\widetilde{\mathcal{H}}_0$, i.e. the set of states $\mathcal{D}(\widetilde{\mathcal{H}}_0)$ is invariant for $\Phi$, a property that extends to $\mathcal{I}(\widetilde{\mathcal{H}}_0)$ via positive decomposition. In other words, the orthogonal projection of $\mathcal{H}$ onto $\widetilde{\mathcal{H}}_0$ reduces $\Phi$. Therefore, concerning the recurrence properties of $\mathcal{H}_0$, the quantum channel $\Phi$ may be restricted without loss to $\mathcal{I}(\widetilde{\mathcal{H}}_0)$. Thus, even if $\Phi$ is not irreducible or $\dim\mathcal{H}=\infty$, a non-trivial subspace $\mathcal{H}_0$ remains positive recurrent as long as $\dim\widetilde{\mathcal{H}}_0<\infty$ and the restriction of $\Phi$ to $\mathcal{I}(\widetilde{\mathcal{H}}_0)$ is irreducible. 

As it was proved in \cite{sinkovicz}, the above result also holds when assuming unitality instead of irreducibility, at least for $\dim\mathcal{H}_0=1$: any pure state with a finite-dimensional relevant Hilbert space $\widetilde{\mathcal{H}}_0$ such that the corresponding restriction of $\Phi$ is unital, must be recurrent. Indeed, \cite{sinkovicz} proves that the corresponding expected return time must be an integer, namely $\dim\widetilde{\mathcal{H}}_0$. This extends to unital quantum channels a result uncovered for the first time in \cite{werner} in the context of unitary dynamics. The next theorem, whose proof is just an adaptation of arguments in \cite{sinkovicz} combined with ideas from \cite{bourg}, generalizes the result obtained in \cite{sinkovicz} to subspaces. 

The following theorem refers to the {\bf averaged expected return time} to a finite-dimensional subspace $\mathcal{H}_0 \subset \mathcal{H}$, defined as  
$$
\tau(\mathcal{H}_0\to\mathcal{H}_0) := 
\int_{\mathcal{D}_p(\mathcal{H}_0)} d\psi \; \tau(\psi\to\mathcal{H}_0), 
$$
where $d\psi$ stands for the uniform probability measure on the subset $\mathcal{D}_p(\mathcal{H}_0)$ of pure states of $\mathcal{D}(\mathcal{H}_0)$. The main ingredient of the measure $d\psi$ that we will use is the identity
\beq \label{eq:dpsi}
\int_{\mathcal{D}_p(\mathcal{H}_0)} d\psi \; \rho_\psi = 
\frac{P}{\dim\mathcal{H}_0},
\qquad
P \; \text{orthogonal projection of} \; \mathcal{H} \; \text{onto} \; \mathcal{H}_0,
\eeq
which follows from the representation $P = \sum_k \rho_{\psi_k}$ in terms of an orthonormal basis $\{\psi_k\}$ of $\mathcal{H}_0$ and the invariance of $d\psi$ under unitary transformations.

\begin{theorem} \label{thm:rec-ui}
Let $\Phi$ be a quantum channel on a Hilbert space $\mathcal{H}$ and $\mathcal{H}_0\subset\mathcal{H}$ a subspace with a finite-dimensional relevant Hilbert space $\widetilde{\mathcal{H}}_0$. If the restriction of $\Phi$ to $\widetilde{\mathcal{H}}_0$ is unital, then $\mathcal{H}_0$ is positive recurrent and the averaged expected return time to $\mathcal{H}_0$ is given by
\beq \label{eq:aet}
\tau(\mathcal{H}_0\to\mathcal{H}_0) = 
\frac{\dim\widetilde{\mathcal{H}}_0}{\dim{\mathcal{H}_0}}.
\eeq
\end{theorem}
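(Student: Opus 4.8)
The plan is to reduce the problem to the finite-dimensional space $\widetilde{\mathcal{H}}_0$ and then pass to the Heisenberg (dual) picture, where unitality produces a telescoping identity. Since the orthogonal projection onto $\widetilde{\mathcal{H}}_0$ reduces $\Phi$ and the return properties of $\mathcal{H}_0$ depend only on the restriction of $\Phi$ to $\mathcal{I}(\widetilde{\mathcal{H}}_0)$, I would first replace $\mathcal{H}$ by $\widetilde{\mathcal{H}}_0$, so that $\Phi$ becomes a unital CPTP map on a finite-dimensional space, $I=I_{\widetilde{\mathcal{H}}_0}$, and $Q=I-P$ projects onto $\widetilde{\mathcal{H}}_0\ominus\mathcal{H}_0$. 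Using \eqref{eq:tq} I have $\tau(\rho\to\mathcal{H}_0)=\sum_{n\ge0}s_n(\rho)$ with $s_n(\rho)=\operatorname{Tr}((\mathbb{Q}\Phi)^n(\rho))$. Dualizing with respect to $\langle X,Y\rangle=\operatorname{Tr}(X^*Y)$ and setting $G_n:=(\Phi^*\mathbb{Q})^n(I)$, I get $s_n(\rho)=\operatorname{Tr}(\rho\,G_n)$ and $\operatorname{Tr}((\mathbb{Q}\Phi)^n(P))=\operatorname{Tr}(PG_n)$, where the dual channel $\Phi^*$ is CP, unital (because $\Phi$ is trace preserving) and trace preserving (because $\Phi$ is unital).

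Next I would record the basic structure of the sequence $G_n$. From $G_0=I$ and $G_{n+1}=\Phi^*(QG_nQ)$, positivity together with $\Phi^*(I)=I$ yields $0\le G_{n+1}\le G_n\le I$, so $G_n$ decreases to a limit $G_\infty\ge0$. The decisive point is that $\Phi^*$ is trace preserving, whence $\operatorname{Tr}(G_{n+1})=\operatorname{Tr}(QG_nQ)=\operatorname{Tr}(G_n)-\operatorname{Tr}(PG_n)$, i.e.\ $\operatorname{Tr}(PG_n)=\operatorname{Tr}(G_n)-\operatorname{Tr}(G_{n+1})$. Summing telescopes to
\[
\sum_{n\ge0}\operatorname{Tr}(PG_n)=\operatorname{Tr}(I)-\operatorname{Tr}(G_\infty)=\dim\widetilde{\mathcal{H}}_0-\operatorname{Tr}(G_\infty).
\]

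The hard part will be showing that $G_\infty=0$; this is the heart of the argument and where the definition of the relevant Hilbert space enters. Passing to the limit in the recursion gives $G_\infty=\Phi^*(QG_\infty Q)$, and combining this with trace preservation forces $\operatorname{Tr}(PG_\infty)=0$, hence by positivity $PG_\infty=G_\infty P=0$, so that $G_\infty=\Phi^*(G_\infty)$ and $\mathcal{H}_0\subset\ker G_\infty$. I would then use the fixed-point relation $\sum_i B_i^*G_\infty B_i=G_\infty$: for $\psi\in\ker G_\infty$,
\[
0=\langle\psi,G_\infty\psi\rangle=\sum_i\langle B_i\psi,G_\infty B_i\psi\rangle,
\]
and since every summand is non-negative, each vanishes, giving $B_i\psi\in\ker G_\infty$. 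Thus $\ker G_\infty$ is invariant under all Kraus operators $B_i$ and contains $\mathcal{H}_0$. Because $\widetilde{\mathcal{H}}_0=\overline{\operatorname{span}}\{B_{i_n}\cdots B_{i_1}\mathcal{H}_0\}$ is the smallest such subspace, I conclude $\widetilde{\mathcal{H}}_0\subset\ker G_\infty$, i.e.\ $G_\infty=0$.

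Finally I would harvest the consequences. With $G_\infty=0$ the sequence $G_n\to0$ in norm, so $s_n(\rho)=\operatorname{Tr}(\rho\,G_n)\to0$ and $\pi(\rho\to\mathcal{H}_0)=1$ for every $\rho\in\mathcal{D}(\mathcal{H}_0)$, which gives recurrence. Moreover $\sum_n\operatorname{Tr}(PG_nP)=\dim\widetilde{\mathcal{H}}_0<\infty$ with $PG_nP\ge0$, so $\Sigma:=\sum_n PG_nP$ converges and $\tau(\rho\to\mathcal{H}_0)=\operatorname{Tr}(\rho\,\Sigma)<\infty$ for all $\rho\in\mathcal{D}(\mathcal{H}_0)$, establishing positive recurrence. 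Integrating over pure states, interchanging sum and integral by non-negativity, and applying \eqref{eq:dpsi} gives
\[
\tau(\mathcal{H}_0\to\mathcal{H}_0)=\operatorname{Tr}\!\left(\Big(\textstyle\int_{\mathcal{D}_p(\mathcal{H}_0)}d\psi\,\rho_\psi\Big)\Sigma\right)=\frac{\operatorname{Tr}(P\Sigma)}{\dim\mathcal{H}_0}=\frac{\dim\widetilde{\mathcal{H}}_0}{\dim\mathcal{H}_0},
\]
which is \eqref{eq:aet}. The only genuinely delicate ingredient is the minimal-enclosure step $G_\infty=0$; the reduction, the telescoping and the final averaging are routine given the tools already established.
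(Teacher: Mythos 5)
Your proposal is correct; every step checks out, including the small induction needed for $0\le G_{n+1}\le G_n\le I$ and the identification $s_n(\rho)=\operatorname{Tr}(\rho\,G_n)$ via $((\mathbb{Q}\Phi)^n)^*=(\Phi^*\mathbb{Q})^n$. The overall architecture is the same as the paper's: restrict to $\widetilde{\mathcal{H}}_0$, use unitality to make the dual map trace preserving, telescope, kill the residual term using the spanning property defining the relevant subspace, and average via \eqref{eq:dpsi}. The genuine difference is in the picture used and in how the residual is shown to vanish. The paper works in the Schr\"odinger picture: it proves that $T=\sum_{n\ge0}(\mathbb{Q}\Phi)^n(P)$ converges to a positive operator $T\le I$, observes that $\sigma=I-T$ is a positive fixed point of $\mathbb{Q}\Phi$ with $\mathbb{P}\Phi(\sigma)=0$, and then concludes $\sigma=0$ from $\operatorname{Tr}(\sigma\,\Phi^n(P))=0$, invoking the fact (quoted from \cite{sinkovicz}) that invariant states of a unital channel are also invariant for its dual. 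You instead work in the Heisenberg picture with $G_\infty=\lim_n(\Phi^*\mathbb{Q})^n(I)$, show it is a positive fixed point of $\Phi^*$ annihilated by $P$, and then use the elementary Kraus-operator argument $0=\langle\psi,G_\infty\psi\rangle=\sum_i\langle B_i\psi,G_\infty B_i\psi\rangle$ to see that $\ker G_\infty$ is $B_i$-invariant and contains $\mathcal{H}_0$, hence equals $\widetilde{\mathcal{H}}_0$ by minimality. What your route buys is self-containedness: you avoid both the external lemma from \cite{sinkovicz} and the step (implicit in the paper, and argued as in Theorem~\ref{thm:rec-fi}) of upgrading $\mathbb{Q}\Phi$-invariance of $\sigma$ to full $\Phi$-invariance before dualizing. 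What the paper's route buys is the explicit operator $T$ on the state side, whose identification with $I$ is stated as the structural fact behind the quantized return time; your $\Sigma=\sum_nPG_nP$ plays the analogous role on the observable side and yields the same trace.
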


\begin{proof}
We can substitute $\Phi$ by its restriction to $\widetilde{\mathcal{H}}_0$, thus in what follows we will suppose without loss that $\mathcal{H}$ is such a finite-dimensional relevant Hilbert space and $\Phi$ is unital on $\mathcal{H}$, i.e. $\Phi^*$ is trace-preserving. Then, denoting by $P$ the orthogonal projection of $\mathcal{H}$ onto $\mathcal{H}_0$, we get for any $\rho\ge0$,
$$
\operatorname{Tr}(\rho(\mathbb{Q}\Phi)^n(P)) =
\operatorname{Tr}(\mathbb{P}(\Phi^*\mathbb{Q})^n(\rho)) =
\operatorname{Tr}((\Phi^*\mathbb{Q})^n(\rho)) - 
\operatorname{Tr}(\mathbb{Q}(\Phi^*\mathbb{Q})^n(\rho)) =
\operatorname{Tr}((\Phi^*\mathbb{Q})^n(\rho)) - 
\operatorname{Tr}((\Phi^*\mathbb{Q})^{n+1}(\rho)),
$$
which implies that 
$\sum_{n\ge0}\operatorname{Tr}(\rho(\mathbb{Q}\Phi)^n(P)) \le \operatorname{Tr}(\rho)$. Using Lemma~\ref{lem:monconv} and \eqref{eq:pos-dual} we find that $\sum_{n\ge0}(\mathbb{Q}\Phi)^n(P)$ converges to a positive operator $T \le I$. Therefore, if $\rho$ is any state supported on $\mathcal{H}_0$, then $\rho\le P$ and $\tau(\rho\to\mathcal{H}_0)=\sum_{n\ge0}\operatorname{Tr}(\mathbb{Q}\Phi)^n(\rho)\le\operatorname{Tr}(T)\le\dim\mathcal{H}<\infty$. In consequence, $\mathcal{H}_0$ is positive recurrent.

Besides, $\mathbb{Q}\Phi(I-T) = Q - (T-P) = I-T$, i.e. $\sigma=I-T$ is invariant for $\mathbb{Q}\Phi$. On the other hand, $\mathbb{Q}\Phi(\sigma)=\sigma$ yields $\operatorname{Tr}(\sigma) = \operatorname{Tr}(\mathbb{Q}\Phi(\sigma)) = \operatorname{Tr}(\sigma) - \operatorname{Tr}(\mathbb{P}\Phi(\sigma))$. Hence, $\mathbb{P}\Phi(\sigma)=0$ and 
$$
\operatorname{Tr}(\sigma\,\Phi^n(P)) = 
\operatorname{Tr}(\mathbb{P}(\Phi^*)^n(\sigma)) = 
\operatorname{Tr}(\mathbb{P}(\sigma)) = 0, 
\qquad n\ge0,
$$
where we have used that the invariant states of $\Phi$ are also invariant for $\Phi^*$ when $\Phi$ is unital \cite{sinkovicz}. Since $\operatorname{supp}\Phi^n(P)$ span the (relevant) Hilbert space $\mathcal{H}$, we conclude that $\sigma=0$, i.e. $T=I$. Hence, according to \eqref{eq:tq}, \eqref{eq:dpsi} and using monotone convergence for $d\psi$, we conclude that
$$
\tau(\mathcal{H}_0\to\mathcal{H}_0) = 
\int_{\mathcal{D}_p(\mathcal{H}_0)} d\psi 
\sum_{n\ge0}\operatorname{Tr}((\mathbb{Q}\Phi)^n(\rho_\psi)) =
\frac{\sum_{n\ge0}\operatorname{Tr}((\mathbb{Q}\Phi)^n(P))}{\dim\mathcal{H}_0}  =
\frac{\operatorname{Tr}(T)}{\dim\mathcal{H}_0} = 
\frac{\dim\mathcal{H}}{\dim\mathcal{H}_0}.
$$
This proves \eqref{eq:aet}. 

\end{proof}
One would naively expect the inequality $\tau(\psi\to\mathcal{H}_0) \le \tau(\psi\to\psi)$ for any $\psi\in\mathcal{H}_0$, so that the positive recurrence of a subspace in the finite-dimensional unital case {\color{black}would} be a straightforward consequence of the positive recurrence of every pure state, already proved in \cite{sinkovicz}. However, the alluded inequality is not true even for unitary evolutions, which are covered by the formalism of CPTP maps. Furthermore, there are examples of unitaries which make recurrent a state whose return probability to some subspace is less than one \cite{bourg}. Therefore, the statement that unital CPTP maps in finite dimension make positive recurrent any subspace does not follow directly from the version of this statement for one-dimensional subspaces. 

Let us illustrate the above techniques and results with a concrete example. In what follows $I_n$ and $0_n$ stand for the $n\times n$ identity and null matrix respectively.
\bex\label{ex261} 
For $p,q \in(0,1)$, consider a TOM on a graph with a couple of two-dimensional sites, acting on $\mathcal{I}_\mathcal{S}(\mathcal{H})$, $\mathcal{S}=\operatorname{span}\{|1\rangle,|2\rangle\}$, $\mathcal{H}=\mathbb{C}^2$, as
$$
\begin{gathered}
\mathcal{E}=\begin{bmatrix} 
\mathcal{E}_1^1 & \mathcal{E}_1^2 
\\[2pt] 
\mathcal{E}_2^1 & \mathcal{E}_2^2
\end{bmatrix},
\qquad
\mathcal{E}_i^j=A_i^j\cdot{A_i^j}^*+B_i^j\cdot{B_i^j}^*,
\\
A_1^1=\frac{1}{2}\sqrt{4-3p}\begin{bmatrix} 1 & 0 \\ 0 & 1\end{bmatrix},
\quad
B_1^1=\frac{\sqrt{p}}{2}\begin{bmatrix} 0 & 1 \\ 1 & 0\end{bmatrix},
\quad
A_2^1=\frac{\sqrt{p}}{2}\begin{bmatrix} 0 & -i \\ i & 0\end{bmatrix},
\quad
B_2^1=\frac{\sqrt{p}}{2}\begin{bmatrix} 1 & 0 \\ 0 & -1\end{bmatrix},
\\
A_1^2=\sqrt{\frac{q}{2}}\begin{bmatrix} 1 & 1 \\ 0 & 0\end{bmatrix},
\quad 
B_1^2=\sqrt{\frac{q}{2}}\begin{bmatrix} 0 & 0 \\ 1 & -1\end{bmatrix},
\quad
A_2^2=\sqrt{\frac{1-q}{3}}\begin{bmatrix} 1 & 1 \\ 0 & 1\end{bmatrix},
\quad 
B_2^2=\sqrt{\frac{1-q}{3}}\begin{bmatrix} 1 & 0 \\ -1 & 1\end{bmatrix}.
\end{gathered}
$$

Note that $\mathcal{E}_1^1+\mathcal{E}_2^1$ equals a kind of depolarizing channel \cite{nielsen} spread on two sites, whereas $\mathcal{E}_1^2+\mathcal{E}_2^2$ is a convex combination of the channel in \cite[Sec. 4]{attal} and the OQW version of the Hadamard channel \cite{cgl}. The block matrix representation of this TOM is 
$$
\widehat{\mathcal{E}} =
\begin{bmatrix} 
\widehat{\mathcal{E}}_1^1 & \widehat{\mathcal{E}}_1^2
\\[2pt]
\widehat{\mathcal{E}}_2^1 & \widehat{\mathcal{E}}_2^2
\end{bmatrix},
\qquad
\widehat{\mathcal{E}}_i^j = 
\lceil A_i^j \rceil + \lceil B_i^j \rceil = 
A_i^j\otimes\overline{A_i^j} + B_i^j\otimes\overline{B_i^j}. 
$$
Explicitly,
\beq \label{eq:ex1-wE}
\widehat{\mathcal{E}} = 
\left[\begin{array}{cccc|cccc}
1-\frac{3p}{4} & 0 & 0 & \frac{p}{4} & \frac{q}{2} & \frac{q}{2} & 
\frac{q}{2} & \frac{q}{2} \\[2pt]
0 & 1-\frac{3p}{4} & \frac{p}{4} & 0 & 0 & 0 & 0 & 0 \\[2pt]
0 & \frac{p}{4} & 1-\frac{3p}{4} & 0 & 0 & 0 & 0 & 0 \\[2pt]
\frac{p}{4} & 0 & 0 & 1-\frac{3p}{4} & \frac{q}{2} & -\frac{q}{2} & 
-\frac{q}{2} & \frac{q}{2} \\[2pt]
\hline
& & & & & & & \\[-11pt]
\frac{p}{4} & 0 & 0 & \frac{p}{4} & \frac{2(1-q)}{3} & \frac{1-q}{3} & 
\frac{1-q}{3} & \frac{1-q}{3} \\[2pt]
0 & -\frac{p}{4} & -\frac{p}{4} & 0 & \frac{q-1}{3} & \frac{2(1-q)}{3} & 0 & 
\frac{1-q}{3} \\[2pt]
0 & -\frac{p}{4} & -\frac{p}{4} & 0 & \frac{q-1}{3} & 0 & \frac{2(1-q)}{3} & 
\frac{1-q}{3} \\[2pt]
\frac{p}{4} & 0 & 0 & \frac{p}{4} & \frac{1-q}{3} & \frac{q-1}{3} & 
\frac{q-1}{3} & \frac{2(1-q)}{3}
\end{array}\right].
\eeq
This TOM has a unique invariant state,
\beq \label{fptomf1}
\rho_*=
\frac{q}{2q+p} \begin{bmatrix} 1 & 0 \\ 0 & 1 \end{bmatrix} 
\otimes|1\rangle\langle 1| +
\frac{p}{2(2q+p)} \begin{bmatrix}1 & 0 \\ 0 & 1 \end{bmatrix}
\otimes|2\rangle\langle 2|,
\eeq
which is faithful, hence $\mathcal{E}$ is irreducible for every choice of $p,q\in (0,1)$, but it is unital iff $p=2q$. Therefore, Theorem~\ref{thm:rec-fi} implies that every subspace is positive recurrent for any $p$, $q$. Also, if $p=2q$, the corresponding averaged expected return time is the dimension of the relevant Hilbert space divided by the dimension of the subspace, due to Theorem~\ref{thm:rec-ui}. 

\smallskip

\noindent(a) {\bf Recurrence of a site.} 
Given the orthogonal projections $P=I_2\otimes|1\rangle\langle1|$ and $Q=I_2\otimes|2\rangle\langle2|$, complementary with respect to $\mathbb{C}^2\otimes\mathcal{S}$, let $\mathbb{P}=P\cdot P$ and $\mathbb{Q}=Q\cdot Q$ be the related projections of $\mathcal{I}_{\mathcal{S}}(\mathbb{C}^2)$ onto $\mathcal{I}(\mathbb{C}^2)\otimes|1\rangle$ and $\mathcal{I}(\mathbb{C}^2)\otimes|2\rangle$ respectively. Although $\mathbb{P}$ and $\mathbb{Q}$ are not complementary with respect to $\mathcal{I}(\mathbb{C}^2\otimes\mathcal{S})$, they are complemetary with respect to the space $\mathcal{I}_{\mathcal{S}}(\mathbb{C}^2)$ where the TOM acts, the corresponding matrix representations being given by $\widehat{\mathbb{P}}=I_4\oplus0_4$ and $\widehat{\mathbb{Q}}=0_4\oplus I_4$. Therefore, the reduced Schur function for site 1, understood as a function on $\mathcal{I}(\mathbb{C}^2)\otimes|1\rangle$, has the matrix representation
$$
\begin{aligned}
\widehat{\mathbb{F}}(z) & = 
\widehat{\mathbb{P}}
\widehat{\mathcal{E}}(I_8-z\widehat{\mathbb{Q}}\widehat{\mathcal{E}})^{-1} \widehat{\mathbb{P}}
\\
& = \begin{bmatrix}
1-\frac{3p}{4}+\frac{pqz}{4(1+z(q-1))} & 
-\frac{pqz(3+z(q-1))}{4((2q^2-4q+2)z^2+(3q-3)z+3)} &
-\frac{pqz(3+z(q-1))}{4((2q^2-4q+2)z^2+(3q-3)z+3)} &
\frac{p}{4}+\frac{pqz}{4(1+z(q-1))} 
\\[3pt]
0 & 1-\frac{3p}{4} & \frac{p}{4} & 0 \\[3pt]
0 & \frac{p}{4} & 1-\frac{3p}{4} & 0 \\[3pt]
\frac{p}{4}+\frac{pqz}{4(1+z(q-1))} & 
\frac{pqz(3+z(q-1))}{4((2q^2-4q+2)z^2+(3q-3)z+3)} & 
\frac{pqz(3+z(q-1))}{4((2q^2-4q+2)z^2+(3q-3)z+3)} & 
1-\frac{3p}{4}+\frac{pqz}{4(1+z(q-1))} 
\end{bmatrix}.
\end{aligned}
$$
For every state $\rho$ supported on site $1$ this gives
$$
\begin{gathered}
\operatorname{Tr}(\mathbb{F}(z)(\rho)) =
(\widehat{\mathbb{F}}(z)\,vec(\rho))_1 + 
(\widehat{\mathbb{F}}(z)\,vec(\rho))_4 =
\frac{(2q+p-2)z+2-p}{2((q-1)z+1)},
\\
\pi(\rho\to|1\rangle) = \operatorname{Tr}(\mathbb{F}(1)(\rho)) = 1,
\qquad
\tau(\rho\to|1\rangle) = 
1 + \operatorname{Tr}(\mathbb{F}'(1)(\rho)) = 1+\frac{p}{2q}.
\end{gathered}
$$
As expected, site 1 is positive recurrent. In the unital case, $p=2q$, its averaged mean return time is $1+p/2q=2$, so that the relevant Hilbert space has dimension $4$, i.e. it is the full Hilbert space $\mathbb{C}^2\otimes\mathcal{S}$. 

Due to the irreducibility, the expected return time may be alternatively obtained from \eqref{eq:tinv}. Using the matrix and $vec$ representations, this means that $\tau(\rho\to|1\rangle) = \operatorname{Tr}(\sigma) = \overrightarrow{\sigma}_1+\overrightarrow{\sigma}_4+\overrightarrow{\sigma}_5+\overrightarrow{\sigma}_8$, where
$$
\overrightarrow{\sigma} = 
(I_8-\widehat{\mathbb{Q}}\widehat{\mathcal{E}})^{-1} \overrightarrow{\rho} = 
\left[\begin{array}{c}
\rho_{11} \\ \rho_{12} \\ \rho_{21} \\ \rho_{22} \\[2pt]
\hline 
\\[-11pt]
\frac{2p-(2\rho_{12}+2\rho_{21}+1)pq+(\rho_{12}+\rho_{21}+1)2pq^2}
{4q(2-q+2q^2)} \\[4pt] 
-\frac{(\rho_{12}+\rho_{21})(2p+pq)}{4q(2-q+2q^2)} \\[4pt] 
-\frac{(\rho_{12}+\rho_{21})(2p+pq)}{4q(2-q+2q^2)} \\[4pt] 
\frac{2p+(2\rho_{12}+2\rho_{21}-1)pq-(\rho_{12}+\rho_{21}-1)2pq^2}
{4q(2-q+2q^2)}
\end{array}\right],
\qquad
\overrightarrow{\rho} = 
\left[\begin{array}{c}
\rho_{11} \\ \rho_{12} \\ \rho_{21} \\ \rho_{22} \\[2pt]
\hline
\\[-12pt]
0 \\ 0 \\ 0 \\ 0
\end{array}\right].
$$

Similar calculations show that the expected return time for any density concentrated on site 2 is $1+2q/p$.

\smallskip

\noindent(b) {\bf Recurrence of a pure state supported on one site.} 
Consider a pure state  supported on site 1,
$$
\rho_\psi=|\psi\rangle\langle\psi|=|\phi\rangle\langle\phi|\otimes|1\rangle\langle1|,
\qquad
|\psi\rangle=|\phi\rangle\otimes|1\rangle,
\qquad
\phi\in\mathbb{C}^2.
$$ 
The recurrence of $\psi$ is encoded by a reduced Schur function $\mathbb{F}$ built out of the projections $P=\rho_\psi$ and {\color{black}$Q=I-\rho_\psi=(I_2-|\phi\rangle\langle\phi|)\otimes|1\rangle\langle1| \oplus I_2\otimes|2\rangle\langle2|$}. The matrix representation $\widehat{\mathbb{F}}$ of $\mathbb{F}$ follows from those of the corresponding projections on $\mathcal{I}_{\mathcal{S}}(\mathbb{C}^2)$, $\mathbb{P}=P\cdot P$ and $\mathbb{Q}=Q\cdot Q$, which are given by $\widehat{\mathbb{P}}=\lceil\phi\phi^*\rceil\oplus0_4$ and $\widehat{\mathbb{Q}}=\lceil I_2-\phi\phi^*\rceil\oplus I_4$ respectively. The return properties of $\psi$ come from the behaviour around $z=1$ of $\operatorname{Tr}(\mathbb{F}(z)(\rho_\psi))=(\widehat{\mathbb{F}}(z)\,vec(\rho_\psi))_1+(\widehat{\mathbb{F}}(z)\,vec(\rho_\psi))_4$. For instance, 
$$
\begin{aligned}
& \operatorname{Tr}(\mathbb{F}(z)(\rho_\psi)) = 
\frac{(4-6p-4q+2p^2+4pq)z^2+(9p+4q-2p^2-2pq-8)z+4-3p}{(4-3p-4q+2pq)z^2+(3p+4q-8)z+4},
& \qquad & \phi=\begin{bmatrix}1\\0\end{bmatrix},
\\
& \operatorname{Tr}(\mathbb{F}(z)(\rho_\psi)) = 
\frac{(4-4p-4q+p^2+2pq)z^2+(6p+4q-p^2-pq-8)z+4-2p}
{(4-2p-4q+pq)z^2+(2p+4q-8)z+4},
& & \phi=
\begin{bmatrix}\frac{1}{\sqrt{2}}\\[4pt]\frac{1}{\sqrt{2}}\end{bmatrix},
\end{aligned}
$$
which confirms that $\pi(\psi\to\psi)=\operatorname{Tr}(\mathbb{F}(1)(\rho_\psi))=1$, giving $\tau(\psi\to\psi)=1+\operatorname{Tr}(\mathbb{F}'(1)(\rho_\psi))=2+p/q$ for both states. When $p=2q$, due to unitality, we get an integer mean return time $\tau(\psi\to\psi)=4$ which equals the dimension of the relevant Hilbert space, {\color{black}corresponding to the} whole Hilbert space $\mathbb{C}^2\otimes\mathcal{S}$ for both states.

\smallskip

\noindent(c) {\bf Recurrence of a subspace generated by two states on distinct sites}. 
Let us inspect the recurrence of an admissible subspace $\mathcal{K}$ spanned by the states
$$
|\psi_i\rangle = |\phi_i\rangle\otimes|i\rangle, 
\qquad \phi_i\in\mathbb{C}^2, \qquad i=1,2.
$$
The projections related to such a subspace are $P=\sum_{i=1,2}|\psi_i\rangle\langle\psi_i|=\sum_{i=1,2}|\phi_i\rangle\langle\phi_i|\otimes|i\rangle\langle i|$ and $Q=I-P=\sum_{i=1,2}(I_2-|\phi_i\rangle\langle\phi_i|)\otimes|i\rangle\langle i|$, which lead to projections $\mathbb{P}$ and $\mathbb{Q}$ with matrix representations $\widehat{\mathbb{P}}=\oplus_{i=1,2}\lceil\phi_i\phi_i^*\rceil$ and $\widehat{\mathbb{Q}}=\oplus_{i=1,2}\lceil I_2-\phi_i\phi_i^*\rceil$. This provides the matrix representation of the reduced Schur function $\mathbb{F}$ for the subspace $\mathcal{K}$. Taking for instance $\phi_1=[1\;0]^T$ and $\phi_2=[\frac{1}{\sqrt{2}}\;\frac{1}{\sqrt{2}}]^T$, this gives $\pi(\psi_1\to\mathcal{K})=\pi(\psi_2\to\mathcal{K})=1$ and
$$
\tau(\psi_1\to\mathcal{K})=1+\frac{6p+11q+1}{3(1+3q)},
\qquad
\tau(\psi_2\to\mathcal{K})=1+\frac{(1-q)(3p+4q)}{3p(1+3q)},
$$
which, in view of Remark~\ref{rem:rec-ps}, implies that $\mathcal{K}$ is positive recurrent, as expected. For any other $\psi\in\mathcal{K}$, the density has the form 
$
\rho_\psi = 
|\alpha_1|^2 |\psi_1\rangle\langle\psi_1| + 
|\alpha_2|^2 |\psi_2\rangle\langle\psi_2| +
2\operatorname{Re}(\alpha_1\overline{\alpha}_2 |\psi_1\rangle\langle\psi_2|).
$
Thus, the averaged expected return time for $\mathcal{K}$ becomes an average on the set $\{[\alpha_1,\alpha_2]\in\mathbb{C}^2:|\alpha_1|^2+|\alpha_2|^2=1\}$. Due to the symmetries of such an average and the linearity of {\color{black}$\operatorname{Tr}(\mathbb{F}'(1)(\rho))$} on the operator $\rho$, 
$$
\tau(\mathcal{K}\to\mathcal{K}) = 
\int_{\mathcal{D}_p(\mathcal{K})} d\psi\;\tau(\psi\to\mathcal{K}) =
\frac{1}{2} (\tau(\psi_1\to\mathcal{K})+\tau(\psi_2\to\mathcal{K})) = 
1 + \frac{2p+2q+3p^2-2q^2+4pq}{3p(1+3q)}.
$$
In the unital case, $p=2q$, we find that $\tau(\mathcal{K}\to\mathcal{K}) = 2$ in agreement with the fact, already pointed out in (b), that the relevant Hilbert space of $\psi_i$ --and thus of $\mathcal{K}$-- is the four-dimensional Hilbert space $\mathbb{C}^2\otimes\mathcal{S}$.

\smallskip

\noindent(d) {\bf Recurrence of a pure state supported on two sites.}  
Finally, we will consider an example of a non-admissible subspace. We will analyze the recurrence of a pure state $\rho_\psi=|\psi\rangle\langle\psi|$ concentrated on sites 1 and 2, with
\beq \label{astate1}
|\psi\rangle = 
|\phi_1\rangle\otimes|1\rangle+|\phi_2\rangle\otimes|2\rangle =
\frac{1}{\sqrt{2}} \begin{bmatrix}1\\0\end{bmatrix}\otimes|1\rangle +
\frac{1}{\sqrt{2}} \begin{bmatrix}0\\1\end{bmatrix}\otimes|2\rangle.
\eeq
Explicitly,
$$
\rho_\psi = 
|\phi_1\rangle\langle\phi_1|\otimes|1\rangle\langle 1| +
|\phi_1\rangle\langle\phi_2|\otimes|1\rangle\langle 2| +
|\phi_2\rangle\langle\phi_1|\otimes|2\rangle\langle 1| +
|\phi_2\rangle\langle\phi_2|\otimes|2\rangle\langle 2| =
\frac{1}{2}
\begin{bmatrix} 
1 & 0 & 0 & 1 \\ 0 & 0 & 0 & 0 \\ 0 & 0 & 0 & 0 \\ 1 & 0 & 0 & 1
\end{bmatrix},
$$
which shows that $\rho_\psi$ is not a TOM density. Therefore, instead of the map $\mathcal{E}$ acting on $\mathcal{I}_S(\mathbb{C}^2)$, we need to consider the quantum channel $\Phi$ \eqref{eq:TOM2} acting on the entire space $\mathcal{I}(\mathbb{C}^2\otimes\mathcal{S})$, whose matrix representation is obtained by writing the corresponding Kraus matrices for $\mathcal{E}^i_j\otimes|i\rangle\langle j|$. This amounts to replacing $A^i_j$ by $A^i_j \otimes|i\rangle\langle j|$, and analogously for $B^i_j$, leading to a matrix representation of order 16,  
$$
\widehat{\Phi} = \sum_{i=1,2} \lceil L_i^j \rceil + \lceil M_i^j \rceil,
\qquad
L_i^j = A_i^j \otimes P_{ij},
\qquad 
M_i^j = B_i^j \otimes P_{ij},
\qquad
\begin{aligned}
& P_{11}=\left[\begin{smallmatrix}1&0\\[1pt]0&0\end{smallmatrix}\right],
& & P_{12}=\left[\begin{smallmatrix}0&1\\[1pt]0&0\end{smallmatrix}\right],
\\
& P_{21}=\left[\begin{smallmatrix}0&0\\[1pt]1&0\end{smallmatrix}\right],
& & P_{22}=\left[\begin{smallmatrix}0&0\\[1pt]0&1\end{smallmatrix}\right],
\end{aligned}
$$ 
which has exactly the same nonzero eigenvalues (counting multiplicities) as the representation $\widehat{\mathcal{E}}$ of order 8 given in \eqref{eq:ex1-wE}. The map $\Phi$ also acts irreducibly on $\mathcal{I}(\mathbb{C}^2\otimes\mathcal{S})$ since its unique invariant state --essentially a rearrangement of the fixed point $\rho_*$ of $\mathcal{E}$ given in \eqref{fptomf1}--,  
$$
\rho_{**}=\frac{1}{2q+p}
\begin{bmatrix}
q&0&0&0\\0&\frac{p}{2}&0&0\\0&0&q&0\\0&0&0&\frac{p}{2}
\end{bmatrix},
$$
is faithful. From now on, the calculation follows the previous discussions, namely, by setting
$P=\rho_\psi$, $Q=I-\rho_\psi$, $\mathbb{P}=P\cdot P$ and $\mathbb{Q}=Q\cdot Q$. The matrix representations $\widehat{\mathbb{P}}=\lceil\rho_\psi\rceil$ and $\widehat{\mathbb{Q}}=\lceil I_4-\rho_\psi\rceil$ provide the one for the corresponding reduced Schur function $\mathbb{F}(z)=\mathbb{P}\Phi(I-z\mathbb{Q}\Phi)^{-1}\mathbb{P}$, from which we confirm once more that $\pi(\psi\to\psi)=1$ and we get
$$
\tau(\psi\to\psi) =
\frac{2}{3p(2+q+2q^2)}\Big(10p+4q+2p^2-2q^2+3pq+4q^3+p^2q+8pq^2\Big).
$$
Alternatively, the expected return time may be obtained directly from the matrix representation of \eqref{eq:tinv}. When $p=2q$ the CPTP map $\Phi$ is unital, hence the mean return time becomes an integer $\tau(\psi\to\psi)=4$ which shows that the relevant Hilbert space of $\psi$ is the full space $\mathbb{C}^2\otimes\mathcal{S}$.    
\eex
\qee

\subsection{Kac's Lemma for quantum states and subspaces} 
\label{ssec:kac}

Let us recall a classical result for stochastic processes: given an irreducible Markov chain, the existence of an invariant probability distribution $\pi=(\pi_i)$, $\pi_i>0$, $\sum_i \pi_i=1$ is equivalent to the positive recurrence of its states. In this case, Kac's Lemma \cite{durrett,kac} states that the mean return time to any given vertex $i$ is given by $1/\pi_i$. In recent years the problem of finding quantum versions of this lemma has been investigated. In the context of OQWs, versions of Kac's Lemma for the mean return time to some given vertex have been proved in \cite{bardet,cgl} (the result is essentially the same in both works, but the proofs employ different techniques).  
In such a  context  the OQW is assumed to be irreducible and the mean return time to a vertex $|i\rangle$ is conditioned on starting with the $i$-th positive matrix of the stationary density $\chi=\sum_i\chi_i\otimes|i\rangle\langle i|$ of the walk. Then, we have that
\beq\label{eq:kacsites} 
\tau\big({\textstyle\frac{\chi_i}{\operatorname{Tr}(\chi_i)}}\otimes|i\rangle\langle i|\to|i\rangle\big) =
\frac{1}{\operatorname{Tr}(\chi_i)}.
\eeq 
In the setting of quantum channels on finite-dimensional Hilbert spaces, a version of Kac's Lemma has been proved in \cite{sinkovicz2}, regarding the mean return time to an eigenvector $\psi$ of a steady state $\chi$ of a channel $\Phi$. Namely, if $\chi\psi=\lambda\psi$ with an eigenvalue $\lambda=\langle\psi|\chi\psi\rangle\neq0$, then
\beq \label{asbotheq1} 
\tau(\psi\to\psi) = 
\frac{1}{\langle\psi|\chi\psi\rangle}.
\eeq
The authors of the mentioned work also note that the assumption of $\psi$ being an eigenvector of $\chi$ is sufficient for the validity of (\ref{asbotheq1}), but not necessary for certain channels \cite{sinkovicz2}. In light of this, a related question is to ask about  the origin of the failure of \eqref{asbotheq1} for an arbitrary pure state. 

To address the above question, let us make the following considerations, which consist of a variation of the reasoning presented in \cite{sinkovicz2}. Let $\Phi$ be a quantum channel on a finite-dimensional Hilbert space $\mathcal{H}$, and assume for convenience that it is irreducible so that there is a unique invariant state $\chi$, and this state is faithful. According to Theorem~\ref{thm:rec-fi}, every $\psi\in\mathcal{H}$ is positive recurrent. If $\mathbb{Q}=Q\cdot Q$ with $Q=I-\rho_\psi$, we can write
\beq
\label{bas_eq1}
\chi = \langle\psi|\chi\psi\rangle\rho_\psi + 
\mathbb{Q}(\chi) + \varphi_\psi,
\eeq
where $\varphi_\psi=Q\chi\rho_\psi+\rho_\psi\chi Q$ is self-adjoint, traceless and vanishes when $\psi$ is an eigenvector of $\chi$. Then, combining \eqref{eq:tinv} with 
$$
(I-\mathbb{Q}\Phi)(\chi) = (I-\mathbb{Q})(\chi) = 
\langle\psi|\chi\psi\rangle\rho_\psi + \varphi_\psi
$$
yields
{\color{black}\beq\label{gen_kac1}
\tau(\psi\to\psi) = \frac{1}{\langle\psi|\chi\psi\rangle}
\left[1-\operatorname{Tr}
\left((I-\mathbb{Q}\Phi)^{-1}(\varphi_\psi)\right) 
\right],
\eeq}
The above expression makes clear the difference between the calculation of the mean return time of a state which is an eigenvector of the stationary density and the mean return time for a general state. In the latter case one has to consider a correction term generated by $\varphi_\psi$. 

\bex\label{example_kac1} 
The map
$$
\Phi(\rho)=B_1\rho B_1^*+B_2\rho B_2^*,
\qquad
B_1 = 
\begin{bmatrix} 
\frac{1}{\sqrt{3}} & \frac{1}{\sqrt{2}} \\[4pt] \frac{1}{\sqrt{3}} & 0
\end{bmatrix},
\qquad
B_2 = 
\begin{bmatrix} 
\frac{1}{\sqrt{3}} & -\frac{1}{\sqrt{2}} \\[4pt] 0 & 0 
\end{bmatrix},
$$
is an irreducible non-unital quantum channel because its unique fixed point is
$$
\chi = \frac{1}{4}
\begin{bmatrix} 
3 & \frac{1}{5}(6+\sqrt{6}) \\ \frac{1}{5}(6+\sqrt{6}) & 1
\end{bmatrix}.
$$
Given $\psi=[\frac{1}{\sqrt{2}}\;\frac{1}{\sqrt{2}}]^T$, the use of \eqref{eq:tinv} yields
$$
\tau(\psi\to\psi) = 2\frac{21-\sqrt{6}}{29} \approx 1.27934.
$$
We can split this mean return time according to the contributions identified in \eqref{gen_kac1}. 
Since
$$
\varphi_\psi = 2 \operatorname{Re}(Q\chi\rho_\psi) = 
\frac{1}{4} \begin{bmatrix} 1 & 0 \\ 0 & -1\end{bmatrix},
$$
we find that
$$
\frac{1}{\langle\psi|\chi\psi\rangle} = {{\frac{20}{16+\sqrt{6}}}}
\approx 1.08404,
\qquad
1-\operatorname{Tr}
\left((\mathbb{I}-\mathbb{Q}\Phi)^{-1}(\varphi_\psi)\right) = 
1 + \frac{8+\sqrt{6}}{58}
\approx 1.18016,
$$
thus the correction to $\tau(\psi\to\psi)$ coming from $\varphi_\psi$ is around 18\% of the ideal value $1/\langle\psi|\chi\psi\rangle$.
\eex
\qee

Some generalizations of the quantum version of Kac's Lemma arise by considering higher dimensional return subspaces. Let $\Phi$ be a quantum channel on a Hilbert space $\mathcal{H}$, and $P$ the orthogonal projection of $\mathcal{H}$ onto a finite-dimensional subspace $\mathcal{H}_0\subset\mathcal{H}$. Assuming $\dim\mathcal{H}<\infty$ and $\Phi$ irreducible not only guarantees that $\mathcal{H}_0$ is positive recurrent, but permits to apply \eqref{eq:tinv}, leading to an averaged mean return time
\beq\label{eq:avert}
\tau(\mathcal{H}_0\to\mathcal{H}_0) = 
\frac{\operatorname{Tr}\left((I-\mathbb{Q}\Phi)^{-1}(P)\right)}{\dim\mathcal{H}_0},
\eeq
where we have used \eqref{eq:dpsi}. 

On the other hand, if $\chi$ is the unique invariant state of $\Phi$ and $P$ commutes with $\chi$, then
$\chi = P\chi P + Q\chi Q$,
so that 
\beq\label{eq:kacinv}
(I-\mathbb{Q}\Phi)(\chi) = P\chi P, 
\eeq
and \eqref{eq:tinv} gives
\beq\label{eq:kacsub}
\tau(\rho\to\mathcal{H}_0) =
\frac{1}{\operatorname{Tr}(P\chi P)},
\qquad
\rho=\frac{P\chi P}{\operatorname{Tr}(P\chi P)}.
\eeq
This holds in particular for a TOM if $\mathcal{H}_0$ is a site $|i\rangle$ because then the corresponding orthogonal projection $P=P_i$ commutes with an invariant state $\chi=\sum_{j\in V}\chi_j\otimes|j\rangle\langle j|$ since $P_i\chi=\chi_i\otimes|i\rangle\langle i|=\chi P_i$. The application of \eqref{eq:kacsub} to this case extends the relation \eqref{eq:kacsites} to TOMs. This generalizes easily to sums of sites $\oplus_i|i\rangle$ by choosing $P=\sum_iP_i$, leading to
$$
\tau\left(
\frac{\sum_i\chi_i\otimes|i\rangle\langle i|}{\sum_i\operatorname{Tr}(\chi_i)}
\to\oplus_i|i\rangle\right) =
\frac{1}{\textstyle\sum_i\operatorname{Tr}(\chi_i)}.
$$ 

Another situation where the projection $P$ commutes with the invariant state $\chi$ arises when $\mathcal{H}_0$ is a subspace of an eigenspace of $\chi$, since then $P\chi=\lambda P=\chi P$ where $\lambda$ is the corresponding eigenvalue. In this case, in view of \eqref{eq:avert}, the relation \eqref{eq:kacinv} yields 
$$
\tau(\mathcal{H}_0\to\mathcal{H}_0) = \frac{1}{\lambda \dim\mathcal{H}_0}
$$
for $\lambda\ne0$, which is a higher-dimensional version of \eqref{asbotheq1}.

\section{Recurrence splitting rules for quantum Markov chains}
\label{sec:SPLIT}

In this section we begin our description of splitting rules for recurrence in quantum Markov chains, which constitute an application of results presented in \cite{gvfr} for operators on Banach spaces. When dealing with the recurrence properties of arbitrary admissible subspaces, the splitting rules run into some technical difficulties which arise from the fact that $\mathbb{P}+\mathbb{Q}$ is not necessarily the identity. While we expect to address this general case in a future work, in this section we restrict our attention to the recurrence properties of sums of sites in TOMs, where the alluded drawback disappears. The results will be illustrated with concrete {\color{black}TOM} examples.

Consider a graph with a set $V$ of vertices generating site space $\mathcal{S}=\operatorname{span}\{|i\rangle\}_{i\in V}$, a Hilbert space $\mathcal{H}$ of internal degrees of freedom and the trace-class subspace $\mathcal{I}_\mathcal{S}(\mathcal{H})=\oplus_{i\in V} \mathcal{I}(\mathcal{H}) \otimes |i\rangle\langle i|$ where a TOM on such a graph acts. Consider ``left" and ``right" subgraphs with sets of vertices $V_L,V_R\subset V$ overlapping on $V_0=V_L\cap V_R$. Given TOMs $\mathcal{E}_{L,R}$ on the left and right subgraphs, there are two natural ways of combining them to generate a TOM on the larger graph: if $V_- = V_L \setminus V_0$ and $V_+ = V_R \setminus V_0$, the partition $V = V_- \cup V_0 \cup V_+$ yields an obvious decomposition of site space
$$
\mathcal{S} = \mathcal{S}_- \oplus \mathcal{S}_0 \oplus \mathcal{S}_+ =
\mathcal{S}_L \oplus \mathcal{S}_+ = 
\mathcal{S}_- \oplus {S}_R,
$$
so that
\begin{equation} \label{eq:subdec}
\mathcal{I}_\mathcal{S}(\mathcal{H}) = 
\mathcal{I}_{\mathcal{S}_-} \oplus \mathcal{I}_{\mathcal{S}_0} \oplus \mathcal{I}_{\mathcal{S}_+} =
\mathcal{I}_{\mathcal{S}_L} \oplus \mathcal{I}_{\mathcal{S}_+} = 
\mathcal{I}_{\mathcal{S}_-} \oplus \mathcal{I}_{\mathcal{S}_R}.
\end{equation} 
The TOMs $\mathcal{E}_{L,R}$ may be trivially extended to $\mathcal{I}_\mathcal{S}(\mathcal{H})$ using the null or identity operator on $\mathcal{I}_{\mathcal{S}_\pm}(\mathcal{H})$, which we denote by $0_\pm$ and $I_\pm$ respectively. Then, we can define the following maps on $\mathcal{I}_\mathcal{S}(\mathcal{H})$:
\begin{equation}  \label{eq:fac}
(\mathcal{E}_L \oplus I_+) (I_- \oplus \mathcal{E}_R),
\end{equation}
and 
\begin{equation} \label{eq:dec0}
(\mathcal{E}_L \oplus 0_+) + (0_- \oplus \mathcal{E}_R).
\end{equation} 
The product \eqref{eq:fac} is a TOM, but the sum \eqref{eq:dec0} is not since, denoting by $\rho=\rho_-\oplus\rho_0\oplus\rho_+=\rho_L\oplus\rho_+=\rho_-\oplus\rho_R$ the decompositions of $\rho\in\mathcal{I}_\mathcal{S}(\mathcal{H})$ related to \eqref{eq:subdec} yields
$$
\operatorname{Tr}((\mathcal{E}_L \oplus 0_+)(\rho)+(0_-+\mathcal{E}_R)(\rho)) =
\operatorname{Tr}(\mathcal{E}_L(\rho_L)) + \operatorname{Tr}(\mathcal{E}_R(\rho_R)) =
\operatorname{Tr}(\rho_L) + \operatorname{Tr}(\rho_R) = 
\operatorname{Tr}(\rho) + \operatorname{Tr}(\rho_0).
$$
Nevertheless, this result shows that any TOM $\mathcal{E}_0$ on $\mathcal{I}_{\mathcal{S}_0}(\mathcal{H})$ generates a TOM on the original graph   
\begin{equation} \label{eq:dec}
(\mathcal{E}_L \oplus 0_+) + (0_- \oplus \mathcal{E}_R) 
- (0_- \oplus \mathcal{E}_0 \oplus 0_+),
\end{equation}
whenever $(\mathcal{E}_L)_i^j + (\mathcal{E}_R)_i^j - (\mathcal{E}_0)_i^j$
is a CP map for every $i,j\in V_0$. 
  
We refer to any TOM with the form \eqref{eq:fac}/\eqref{eq:dec} as an {\bf overlapping factorization/decomposition} which overlaps on $\mathcal{I}_{\mathcal{S}_0}(\mathcal{H})$, or, for short, with overlapping site subspace $\mathcal{S}_0$. Our purpose is to develop splitting recurrence rules allowing us to infer recurrence properties of the overlapping site subspace for such composite systems, starting from those of the left and right subsystems. The FR-function machinery becomes essential for this objective thanks to the special splitting properties of such functions, summarized in \cite[Sect.~6]{gvfr}. Specialized to our situation, these splitting rules for FR-functions imply that the Schur function $f$ of the subspace $\mathcal{H}\otimes\mathcal{S}_0$ for the TOM on the whole graph is related to the Schur functions $f_{L,R}$ of the same subspace for the TOMs on the L,R subgraphs by
\beq \label{eq:facdec}
\begin{aligned}
& \mathcal{E} = (\mathcal{E}_L \oplus 0_+) + (0_- \oplus \mathcal{E}_R) 
- (0_- \oplus \mathcal{E}_0 \oplus 0_+) 
\quad\Longrightarrow\quad
f = f_L+f_R-\mathcal{E}_0,
\\
& \mathcal{E} = (\mathcal{E}_L \oplus I_+) (I_- \oplus \mathcal{E}_R)
\quad\Longrightarrow\quad
f = f_Lf_R.
\end{aligned}
\eeq
We will examine the consequences of these splitting rules for the recurrence properties of $\mathcal{H}\otimes\mathcal{S}_0=\oplus_{i\in V_0}\mathcal{H}\otimes|i\rangle$, which, for short, we refer to as the {\color{black} {\bf sum of sites $\mathcal{S}_0$}}. It is convenient to remember that the Schur functions and reduced Schur functions coincide for any such a sum of sites, so that Proposition~\ref{pro:rec-F} holds for $f=\mathbb{F}$ and $f_{L,R}=\mathbb{F}_{L,R}$. In general, we will distinguish the objects and quantities related to the left and right subsystems by the subscript $L$ and $R$ respectively.

\begin{remark} \label{rem:SR-GEN}
The splitting rules \eqref{eq:facdec} are not only interesting to reduce return properties of a TOM to return properties of smaller ones, but also as a divide and conquer method to obtain Schur functions. In this regard we must point out that the above splitting rules are valid for arbitrary linear operators \cite[Sect.~6]{gvfr}. We will use sometimes this fact for the calculation of Schur functions for an infinite-dimensional TOM $\mathcal{E}$ by splitting it into trace non-increasing maps $\mathcal{E}_{L/R}$.
\end{remark}

\begin{theorem}[TOM recurrence decomposition rules for sums of sites] \label{thm:dec}
Suppose that the TOM of a quantum Markov chain on $\mathcal{H}\otimes\mathcal{S}$ has an overlapping decomposition $\mathcal{E} = (\mathcal{E}_L \oplus 0_+) + (0_- \oplus \mathcal{E}_R) - (0_- \oplus \mathcal{E}_0 \oplus 0_+)$ with overlapping site subspace $\mathcal{S}_0$, and let $f_{L,R}$ be the Schur functions of the sum of sites $\mathcal{S}_0$ with respect to $\mathcal{E}_{L,R}$. Then, we have the following consequences: 
\begin{itemize}
\item[a)] The return probability satisfies
\beq \label{eq:pdec}
\pi(\rho\to\mathcal{S}_0) = 
\pi_L(\rho\to\mathcal{S}_0) + \pi_R(\rho\to\mathcal{S}_0) - 1,
\qquad
\forall\rho\in\mathcal{D}_{\mathcal{S}_0}(\mathcal{H}).
\eeq
This has the following implications: 
\begin{itemize}
\item[(i)] If $\mathcal{S}_0$ is recurrent for $\mathcal{E}_L$, then $\pi(\rho\to\mathcal{S}_0)=\pi_R(\rho\to\mathcal{S}_0)$ for all $\rho\in\mathcal{D}_{\mathcal{S}_0}(\mathcal{H})$.
\item[(ii)] If $\mathcal{S}_0$ is recurrent for $\mathcal{E}_R$, then $\pi(\rho\to\mathcal{S}_0)=\pi_L(\rho\to\mathcal{S}_0)$ for all $\rho\in\mathcal{D}_{\mathcal{S}_0}(\mathcal{H})$.
\item[(iii)] At least one of $\pi_{L,R}(\rho\to\mathcal{S}_0)$ must be greater than or equal to $1/2$ for each $\rho\in\mathcal{D}_{\mathcal{S}_0}(\mathcal{H})$.
\item[(iv)] The following statements are equivalent:
\begin{enumerate}
\item $\mathcal{S}_0$ is recurrent for $\mathcal{E}$.
\item $\mathcal{S}_0$ is recurrent for $\mathcal{E}_L$ and  $\mathcal{E}_R$.
\item $f_{L,R}(1)$ are trace preserving.
\end{enumerate}
\end{itemize}
\item[b)] The expected return time satisfies
\beq \label{eq:tdec}
\tau(\rho\to\mathcal{S}_0) = 
\tau_L(\rho\to\mathcal{S}_0) + \tau_R(\rho\to\mathcal{S}_0) - 1,
\qquad
\forall\rho\in\mathcal{D}_{\mathcal{S}_0}(\mathcal{H}).
\eeq
Therefore, the following statements are equivalent:
\begin{enumerate}
\item $\mathcal{S}_0$ is positive recurrent for $\mathcal{E}$.
\item $\mathcal{S}_0$ is positive recurrent for $\mathcal{E}_L$ and  $\mathcal{E}_R$.
\item $f_{L,R}(1)$ are trace preserving and $f'_{L,R}(1)$ exist in the strong sense.
\end{enumerate}
\end{itemize}
\end{theorem}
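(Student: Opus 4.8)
The backbone of the argument is the additive splitting rule $f=f_L+f_R-\mathcal{E}_0$ recorded in \eqref{eq:facdec}, together with the fact (noted just before the statement) that for a sum of sites the Schur and reduced Schur functions coincide, so that Proposition~\ref{pro:rec-F} applies verbatim to $f$ and to $f_{L,R}$. The plan is to pair this operator identity with a fixed state $\rho\in\mathcal{D}_{\mathcal{S}_0}(\mathcal{H})$ and take traces, thereby collapsing the whole theorem to an elementary statement about three scalar power series. Concretely, for $x\in(0,1)$ I would set $g(x)=\operatorname{Tr}(f(x)(\rho))$, $g_L(x)=\operatorname{Tr}(f_L(x)(\rho))$, $g_R(x)=\operatorname{Tr}(f_R(x)(\rho))$, whose Taylor coefficients are the nonnegative first-return probabilities $\pi_n,\pi_n^L,\pi_n^R$. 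Taking the trace of the splitting rule against $\rho$ and using that $\mathcal{E}_0$ is a TOM, hence trace preserving, so that $\operatorname{Tr}(\mathcal{E}_0(\rho))=\operatorname{Tr}(\rho)=1$, produces the single scalar identity $g(x)=g_L(x)+g_R(x)-1$ for all $x\in(0,1)$.

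For part a) I would let $x\uparrow1$. By Proposition~\ref{pro:rec-F}(i) each of $g,g_L,g_R$ converges to the corresponding return probability, so the scalar identity immediately gives \eqref{eq:pdec}. The four implications are then pure bookkeeping: (i) and (ii) substitute $\pi_L=1$ (resp.\ $\pi_R=1$); (iii) follows from $\pi_L+\pi_R=\pi+1\ge1$; and in (iv) the equivalence $(1)\Leftrightarrow(2)$ uses that $\pi=1$ forces $\pi_L+\pi_R=2$ with each summand at most $1$, hence $\pi_L=\pi_R=1$ for every state, while $(2)\Leftrightarrow(3)$ is just Proposition~\ref{pro:rec-F}(ii) applied to $\mathcal{E}_{L,R}$.

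For part b) I would extract the expected return time from the same generating functions through the relation $\tau=\lim_{x\uparrow1}\frac{d}{dx}(xg(x))$ of \eqref{eq:ptf}, which, however, is only guaranteed when the return probability equals $1$; this is exactly why a case split is needed. When $\pi(\rho\to\mathcal{S}_0)=1$, part a) forces $\pi_L=\pi_R=1$ as well, so all three expected times are the boundary limits of $\frac{d}{dx}(xg(x))$, $\frac{d}{dx}(xg_L(x))$, $\frac{d}{dx}(xg_R(x))$; differentiating $xg(x)=xg_L(x)+xg_R(x)-x$ and letting $x\uparrow1$, the limits existing in $[1,\infty]$ by monotone convergence of the nonnegative series, yields \eqref{eq:tdec}. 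When $\pi(\rho\to\mathcal{S}_0)<1$ one has $\tau=\infty$ by definition, while part a) gives $\pi_L+\pi_R<2$, so at least one of $\mathcal{E}_{L,R}$ has return probability below $1$ for $\rho$ and the right-hand side of \eqref{eq:tdec} is likewise $\infty$; thus \eqref{eq:tdec} holds in both cases. The equivalences then drop out: positive recurrence of $\mathcal{E}$ means $\tau<\infty$ for every state, which by \eqref{eq:tdec} and nonnegativity is equivalent to $\tau_L,\tau_R<\infty$ for every state, and this in turn is Proposition~\ref{pro:rec-F}(iii) for $\mathcal{E}_{L,R}$, giving $(1)\Leftrightarrow(2)\Leftrightarrow(3)$.

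The one delicate point — and the main obstacle — is precisely this passage to the boundary $x=1$ for the derivative. One must be careful not to identify $\tau$ with $\sum_{n\ge1}n\pi_n$ unconditionally, since the two differ exactly when $\pi<1$ (where $\tau=\infty$ by convention while the series may converge); the case split on $\pi=1$ versus $\pi<1$, combined with the monotone-convergence justification that $\lim_{x\uparrow1}\frac{d}{dx}(xg(x))$ equals the nonnegative sum in $[1,\infty]$, is what lets the additive identity survive at the boundary and correctly track infinite return times. Everything else is a direct consequence of the scalar relation $g=g_L+g_R-1$ and of Proposition~\ref{pro:rec-F}.
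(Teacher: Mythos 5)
Your proposal is correct and follows essentially the same route as the paper: both rest on the additive splitting rule $f=f_L+f_R-\mathcal{E}_0$, the trace-preservation of $\mathcal{E}_0$, Proposition~\ref{pro:rec-F} applied to $f$ and $f_{L,R}$, and the same case split ($\pi=1$ versus $\pi<1$) to justify the boundary limit of the derivative in part b). The extra care you take with the monotone-convergence justification in $[1,\infty]$ is exactly the point the paper handles by "dealing with limits" rather than assuming $f'(1)$ exists.
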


{\bf Proof.}
Let $f$ be the Schur function of the sum of sites $\mathcal{S}_0$ with respect to $\mathcal{E}$. From the identity $f=f_L+f_R-\mathcal{E}_0$ and Proposition~\ref{pro:rec-F}.(i) we find that, for every $\rho\in\mathcal{D}_{\mathcal{S}_0}(\mathcal{H})$,
$$
\pi(\rho\to\mathcal{S}_0) = \operatorname{Tr}(f(1)(\rho)) = 
\operatorname{Tr}(f_L(1)(\rho)) + \operatorname{Tr}(f_R(1)(\rho)) - 
\operatorname{Tr}(\mathcal{E}_0(\rho)) =
\pi_L(\rho\to\mathcal{S}_0) + \pi_R(\rho\to\mathcal{S}_0) - 1.
$$
The results (i)-(iv) are direct implications of this identity and Proposition~\ref{pro:rec-F}.(ii).

As for the expected return time, we cannot assume the existence of $f'(1):=\lim_{x\uparrow1}f(x)$, thus we deal with limits. The second relation in \eqref{eq:ptf} yields
$$
\tau(\rho\to\mathcal{S}_0) = 1 + \lim_{x\uparrow1}\operatorname{Tr}(f'(x)(\rho)) = 
1 + \lim_{x\uparrow1}\operatorname{Tr}(f'_L(x)(\rho)) + 
\lim_{x\uparrow1}\operatorname{Tr}(f'_R(x)(\rho)) =
\tau_L(\rho\to\mathcal{S}_0) + \tau_R(\rho\to\mathcal{S}_0) - 1
$$
whenever $\pi(\rho\to\mathcal{S}_0)=1$, since this is equivalent to $\pi_{L,R}(\rho\to\mathcal{S}_0)=1$. If $\pi(\rho\to\mathcal{S}_0)<1$, then one of $\pi_{L,R}(\rho\to\mathcal{S}_0)$ is smaller than one, hence \eqref{eq:tdec} is trivially true because both sides {\color{black} are infinite}. The equivalences in b) are immediate from this identity and Proposition~\ref{pro:rec-F}.(iii).
\qed

Regarding the consequences of the overlapping factorization of TOMs, we will use the following simple result.

\begin{lemma} \label{lem:lim}
If $x_{n,m}\in\mathbb{R}$ are non-decreasing in $n$ and $m$, then $\lim_m\lim_nx_{m,n}=\lim_nx_{n,n}$.
\end{lemma}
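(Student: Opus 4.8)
The plan is to reduce every limit in sight to a supremum, exploiting that a monotone sequence always converges in the extended reals $[-\infty,+\infty]$ to its supremum. Write the double sequence as $x_{j,k}$, non-decreasing in each index separately (so $j$ plays the role of the outer index $m$ and $k$ the role of the inner index $n$). First I would observe that for each fixed $j$ the inner limit $\lim_k x_{j,k}=\sup_k x_{j,k}$ exists, that this quantity is itself non-decreasing in $j$ (monotonicity in the first index is preserved under taking suprema over the second), and hence the iterated limit exists and equals $\sup_j\sup_k x_{j,k}=\sup_{j,k}x_{j,k}$. Likewise the diagonal $x_{n,n}$ is non-decreasing, since $x_{n,n}\le x_{n+1,n}\le x_{n+1,n+1}$ by monotonicity first in the first index and then in the second, so $\lim_n x_{n,n}=\sup_n x_{n,n}$ exists as well. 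The lemma therefore amounts to the identity $\sup_{j,k}x_{j,k}=\sup_n x_{n,n}$ in $[-\infty,+\infty]$.

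For this identity, the inequality $\sup_n x_{n,n}\le\sup_{j,k}x_{j,k}$ is immediate, as the diagonal is a subfamily of the whole array. For the reverse, given an arbitrary pair $(j,k)$ I would set $N=\max(j,k)$; monotonicity in each index then gives $x_{j,k}\le x_{N,k}\le x_{N,N}\le\sup_n x_{n,n}$. Taking the supremum over all $(j,k)$ yields $\sup_{j,k}x_{j,k}\le\sup_n x_{n,n}$, and combining the two inequalities closes the argument. In essence, the $N=\max(j,k)$ sandwiching shows that the diagonal is cofinal in the array, which is exactly what makes the iterated and diagonal limits coincide.

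There is essentially no genuine obstacle here; the only point requiring care is the bookkeeping in the extended reals, so that the statement remains valid when some of the limits equal $+\infty$ — the relevant case in the intended application, where the $x_{n,m}$ will be partial sums of traces that may diverge. Working with suprema throughout handles this uniformly and avoids any $\varepsilon$-juggling. If one instead prefers to stay within $\mathbb{R}$ under a boundedness hypothesis, the very same cofinality argument delivers the conclusion without modification.
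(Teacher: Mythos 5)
Your proposal is correct and is essentially the paper's argument in supremum clothing: the paper uses the same sandwich $x_{m,m}\le x_{m,n}\le x_{n,n}$ for $m\le n$ (your cofinality of the diagonal via $N=\max(j,k)$) and passes to monotone limits in the extended reals, whereas you pass to suprema — the two are interchangeable here. No gap.
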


{\bf Proof.}
As non-decreasing sequences in $n$, both $x_{m,n}$ and $x_{n,n}$ have a finite or infinite limit when $n\to\infty$. The result follows from the inequality
$x_{m,m}\le x_{m,n}\le x_{n,n}$ for $m\le n$, which implies that $x_{m,m}\le \lim_nx_{m,n}\le \lim_nx_{n,n}$, and finally gives $\lim_mx_{m,m}\le \lim_m\lim_nx_{m,n}\le \lim_nx_{n,n}$.
\qed

Now we are ready to examine the recurrence implications of overlapping factorizations of TOMs. They are given by the following proposition, where, for any positive trace-class operator $\rho\ne0$, we denote by $\widehat\rho$ the density obtained by normalizing $\rho$, i.e.
$$
\widehat\rho = \frac{\rho}{\operatorname{Tr}(\rho)}.
$$

\begin{theorem}[TOM recurrence factorization rules for sums of sites] \label{thm:fac}
Suppose that the TOM of a quantum Markov chain on $\mathcal{H}\otimes\mathcal{S}$ has an overlapping factorization $\mathcal{E} = (\mathcal{E}_L \oplus I_+) (I_- \oplus \mathcal{E}_R)$ with overlapping site subspace $\mathcal{S}_0$, and let $f_{L,R}$ be the Schur functions of the sum of sites $\mathcal{S}_0$ with respect to $\mathcal{E}_{L,R}$. Then, we have the following consequences: 
\begin{itemize}
\item[a)] The return probability satisfies
\beq \label{eq:pfac}
\pi(\rho\to\mathcal{S}_0) =
\pi_L(\widehat{f_R(1)(\rho)}\to\mathcal{S}_0) \, \pi_R(\rho\to\mathcal{S}_0),
\qquad
\forall\rho\in\mathcal{D}_{\mathcal{S}_0}(\mathcal{H}).
\eeq
This has the following implications: 
\begin{itemize}
\item[(i)] If $\mathcal{S}_0$ is recurrent for $\mathcal{E}_L$, then $\pi(\rho\to\mathcal{S}_0)=\pi_R(\rho\to\mathcal{S}_0)$ for all $\rho\in\mathcal{D}_{\mathcal{S}_0}(\mathcal{H})$.
\item[(ii)] If $\mathcal{S}_0$ is recurrent for $\mathcal{E}_R$, then $\pi(\rho\to\mathcal{S}_0)=\pi_L(f_R(1)(\rho)\to\mathcal{S}_0)$ for all $\rho\in\mathcal{D}_{\mathcal{S}_0}(\mathcal{H})$.
\item[(iii)] We have the relations $(1)\Leftarrow(2)\Leftarrow(3)\Leftrightarrow(4)$ among the following statements:
\begin{enumerate}
\item $\mathcal{S}_0$ is recurrent for $\mathcal{E}_R$.
\item $\mathcal{S}_0$ is recurrent for $\mathcal{E}$.
\item $\mathcal{S}_0$ is recurrent for $\mathcal{E}_L$ and  $\mathcal{E}_R$.
\item $f_{L,R}(1)$ are trace preserving.
\end{enumerate}
\end{itemize}
\item[b)] If the sum of sites $\mathcal{S}_0$ is recurrent for $\mathcal{E}_L$, the expected return time satisfies
\beq \label{eq:tfac}
\tau(\rho\to\mathcal{S}_0) = 
\tau_L(\widehat{f_R(1)(\rho)}\to\mathcal{S}_0) + \tau_R(\rho\to\mathcal{S}_0) - 1,
\qquad
\forall\rho\in\mathcal{D}_{\mathcal{S}_0}(\mathcal{H}).
\eeq
 
\noindent Therefore, we have the relations $(1)\Leftarrow(2)\Leftrightarrow(3)$ among the following statements:
\begin{enumerate}
\item $\mathcal{S}_0$ is positive recurrent for $\mathcal{E}$.
\item $\mathcal{S}_0$ is positive recurrent for $\mathcal{E}_L$ and  $\mathcal{E}_R$.
\item $f_{L,R}(1)$ are trace preserving and $f'_{L,R}(1)$ exist in the strong sense.
\end{enumerate}
\end{itemize}
\end{theorem}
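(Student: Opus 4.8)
The plan is to exploit the factorization $f=f_Lf_R$ of Schur functions recorded in \eqref{eq:facdec}, combined with the characterizations of return probability and expected return time in Proposition~\ref{pro:rec-F}. Throughout I use that for a sum of sites the Schur and reduced Schur functions coincide, so that $\pi(\rho\to\mathcal{S}_0)=\operatorname{Tr}(f(1)(\rho))$ and, when $\pi(\rho\to\mathcal{S}_0)=1$, $\tau(\rho\to\mathcal{S}_0)=1+\lim_{x\uparrow1}\operatorname{Tr}(f'(x)(\rho))$ by \eqref{eq:ptf}.

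First I would establish the boundary product rule $f(1)=f_L(1)f_R(1)$ in the strong sense. Since $f_{L,R}(x)$ converge strongly to $f_{L,R}(1)$ as $x\uparrow1$ (Proposition~\ref{pro:rec-F}(i)) and the $f_L(x)$ are contractions, the splitting $f_L(x)f_R(x)(\rho)-f_L(1)f_R(1)(\rho)=f_L(x)\bigl([f_R(x)-f_R(1)](\rho)\bigr)+\bigl[f_L(x)-f_L(1)\bigr]\bigl(f_R(1)(\rho)\bigr)$ shows that both summands vanish in trace norm. Writing $\sigma=f_R(1)(\rho)$, which is positive with $\operatorname{Tr}(\sigma)=\pi_R(\rho\to\mathcal{S}_0)$, the identity $\pi(\rho\to\mathcal{S}_0)=\operatorname{Tr}(f_L(1)(\sigma))$ yields \eqref{eq:pfac} after normalizing $\sigma$ (the product being read as $0$ when $\sigma=0$). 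The implications (i)--(iii) then follow from Proposition~\ref{pro:rec-F}(ii): recurrence of $\mathcal{E}_L$ makes $f_L(1)$ trace preserving, so the first factor equals $1$; recurrence of $\mathcal{E}_R$ makes $\sigma$ a density, so $\widehat\sigma=f_R(1)(\rho)$; and since both factors lie in $[0,1]$, the value $\pi=1$ forces each to equal $1$, producing the chain $(1)\Leftarrow(2)\Leftarrow(3)\Leftrightarrow(4)$.

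For the expected return time I would differentiate, $f'(x)=f_L'(x)f_R(x)+f_L(x)f_R'(x)$, so that $\operatorname{Tr}(f'(x)(\rho))$ splits into two non-negative terms, and pass to the limit $x\uparrow1$ in each under the standing hypothesis that $\mathcal{S}_0$ is recurrent for $\mathcal{E}_L$. When $\pi_R(\rho\to\mathcal{S}_0)<1$ both sides of \eqref{eq:tfac} are infinite, since recurrence of $\mathcal{E}_L$ forces $\pi(\rho\to\mathcal{S}_0)=\pi_R(\rho\to\mathcal{S}_0)<1$ and $\tau_R=\infty$; thus the substance lies in the case $\pi_R(\rho\to\mathcal{S}_0)=1$, where $\sigma=\widehat\sigma$ is a density. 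Both limits are handled with the monotone double-limit Lemma~\ref{lem:lim}: the quantities $\operatorname{Tr}(f_L'(x)(f_R(y)(\rho)))$ and $\operatorname{Tr}(f_L(x)(f_R'(y)(\rho)))$ are non-decreasing in $x$ and in $y$ (the coefficient maps being completely positive with coefficients monotone in $x,y$), so each diagonal limit equals an iterated limit taken in whichever order is convenient. For the first term I take the inner limit in $y$ first, replacing $f_R(y)(\rho)$ by $\sigma$ inside the bounded map $f_L'(x)$, and then the outer limit in $x$ yields $\tau_L(\sigma\to\mathcal{S}_0)-1$. For the second term I take the inner limit in $x$ first, where trace preservation of $f_L(1)$ collapses $\operatorname{Tr}(f_L(x)(f_R'(y)(\rho)))$ to $\operatorname{Tr}(f_R'(y)(\rho))$, whose outer limit in $y$ is $\tau_R(\rho\to\mathcal{S}_0)-1$. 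Summing gives \eqref{eq:tfac}, and the equivalences $(1)\Leftarrow(2)\Leftrightarrow(3)$ follow from Proposition~\ref{pro:rec-F}(iii) together with the finiteness of $\tau_L(\widehat\sigma)$ and $\tau_R(\rho)$ when both pieces are positive recurrent.

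The main obstacle I anticipate is precisely the interchange of limits in part b): because $\mathcal{E}_L$ can be recurrent without being positive recurrent, $f_L'(x)$ may be unbounded as $x\uparrow1$, so one cannot naively replace $f_R(x)(\rho)$ by $\sigma$ inside $\operatorname{Tr}(f_L'(x)(\cdot))$. The joint monotonicity of all coefficient maps is exactly what licenses Lemma~\ref{lem:lim}, and the crucial asymmetry — resolving the two terms by taking the inner limit in different variables — is what allows trace preservation of $f_L(1)$ to do the work in the second term while sidestepping the blow-up of $f_L'(x)$ in the first. Some attention is also required for the degenerate case $\sigma=0$ and for the convention under which \eqref{eq:tfac} is to be read when both sides are infinite.
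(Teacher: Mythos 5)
Your proposal is correct and follows essentially the same route as the paper: part a) comes from evaluating $\operatorname{Tr}(f_L(1)f_R(1)(\rho))$ via Proposition~\ref{pro:rec-F}, and part b) uses exactly the paper's device of differentiating $f=f_Lf_R$, noting the joint monotonicity of $\operatorname{Tr}(f'_L(x)f_R(y)(\rho))$ and $\operatorname{Tr}(f_L(y)f'_R(x)(\rho))$, and applying Lemma~\ref{lem:lim} with the inner limit taken in different variables for the two terms. The extra care you give to the strong convergence of the product at $z=1$ and to the degenerate/infinite cases is consistent with (and slightly more explicit than) the paper's argument.
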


{\bf Proof.}
The Schur function $f$ of the sum of sites $\mathcal{S}_0$ with respect to $\mathcal{E}$ factorizes as $f=f_Lf_R$. Therefore, Proposition~\ref{pro:rec-F}.(i) implies that, for every $\rho\in\mathcal{D}_{\mathcal{S}_0}(\mathcal{H})$,
$$
\pi(\rho\to\mathcal{S}_0) = 
\operatorname{Tr}(f_L(1)f_R(1)(\rho)) =
\operatorname{Tr}(f_L(1)(\widehat{f_R(1)(\rho)})) \, \operatorname{Tr}(f_R(1)(\rho)) =
\pi_L(\widehat{f_R(1)(\rho)}\to\mathcal{S}_0) \, \pi_R(\rho\to\mathcal{S}_0).
$$
Then, (i)-(iii) follow directly from this identity and Proposition~\ref{pro:rec-F}.(ii).

On the other hand, using \eqref{eq:ptf} we find that
$$
\tau(\rho\to\mathcal{S}_0) = 
1 + \lim_{x\uparrow1}\operatorname{Tr}(f'_L(x)f_R(x)(\rho)) 
+ \lim_{x\uparrow1}\operatorname{Tr}(f_L(x)f'_R(x)(\rho)). 
$$
The functions $\operatorname{Tr}(f'_L(x)f_R(y)(\rho))$ and $\operatorname{Tr}(f_L(y)f'_R(x)(\rho))$ are non-decreasing in $x$ and $y$ for $x,y\in(0,1)$. Therefore, applying Lemma~\ref{lem:lim} and \eqref{eq:ptf} to the previous identity yields
$$
\begin{aligned}
\tau(\rho\to\mathcal{S}_0) 
& = 1 + \lim_{x\uparrow1}\operatorname{Tr}(f'_L(x)f_R(1)(\rho)) 
+ \lim_{x\uparrow1}\operatorname{Tr}(f_L(1)f'_R(x)(\rho))
\\
& = \tau_L(f_R(1)(\rho)\to\mathcal{S}_0) + \tau_R(\rho\to\mathcal{S}_0) - 1,
\kern70pt
\text{ if } \pi_R(\rho\to\mathcal{S}_0)=1,
\end{aligned}
$$
where we have taken into account that $\pi_R(\rho\to\mathcal{S}_0)=1$ yields that $f_R(1)(\rho)$ is a density, while $f_L(1)$ is trace preserving due to Proposition~\ref{pro:rec-F}.(ii) because we assume the recurrence of the sum of sites $\mathcal{S}_0$ with respect to $\mathcal{E}_L$.
The relation \eqref{eq:tfac} extends the one above to the case $\pi_R(\rho\to\mathcal{S}_0)<1$ since, in that case, it gives $\tau(\rho\to\mathcal{S}_0)=\tau_R(\rho\to\mathcal{S}_0)=\infty$, which agrees with the fact that \eqref{eq:pfac} implies that $\pi(\rho\to\mathcal{S}_0)<1$. The implications in b) follow directly from \eqref{eq:tfac}, a) and Proposition~\eqref{pro:rec-F}.
\qed

\begin{remark} \label{rem:indep}
The above splitting rules have some striking consequences on the return probability in quantum Markov chains. Consider an overlapping splitting (decomposition or factorization) of a TOM $\mathcal{E}$ and suppose that the overlapping sum of sites $\mathcal{S}_0$ is recurrent, for instance, for the left TOM $\mathcal{E}_L$. Then, from (i) in Theorems~\ref{thm:dec} and \ref{thm:fac} we conclude that no change in $\mathcal{E}_L$ preserving the recurrent character of $\mathcal{S}_0$ can modify the return probability to $\mathcal{S}_0$ with respect to $\mathcal{E}$ for the states supported on $\mathcal{S}_0$. According to Theorem~\ref{thm:rec-fi} and \ref{thm:rec-ui}, for example, this is the case when the left TOM acts on a finite number of sites with a finite-dimensional space of internal degrees of freedom, and the modifications in $\mathcal{E}_L$ keep it irreducible or unital. From Theorem~\ref{thm:dec}.(ii), a similar independence result holds for modifications in the right TOM $\mathcal{E}_R$ in case of overlapping decompositions. We will illustrate this phenomenon in the examples below. In particular, an application of this principle to an OQW on an infinite lattice, where such a kind of qualitative result seems hard to be derived directly, will show the power of the splitting rules.
\end{remark}

\section{Characterizations of overlapping splittings}
\label{sec:CHAR}

The usefulness of the splitting rules depends on our ability to detect overlapping splittings. In this section we will search for practical characterizations of overlapping decompositions and factorizations of TOMs which help in their identification and, in consequence, in the application of the recurrence splitting rules given in Theorems~\ref{thm:dec} and \ref{thm:fac}.   

An overlapping decomposition \eqref{eq:dec} of a TOM has an obvious implication, which becomes manifest when looking at its block structure,
\beq\label{tom_dec1}
\mathcal{E} = 
\left[\begin{array}{ccc|c}
 & & & \\[-1pt]
 & \mathcal{E}_L & & \\
 & & & \\[-1pt]
 \hline
 & & & \\[-10pt]
 & & & 0_+ \\[-9pt]
 & & &
\end{array}\right] +
\left[\begin{array}{c|ccc}
 & & & \\[-9pt]
 0_- & & & \\
 & & & \\[-10pt]
 \hline
 & & & \\[-1pt]
 & & \mathcal{E}_R & \\
 & & & \\[-1pt]
\end{array}\right] -
\left[\begin{array}{c|c@{\kern2pt}c@{\kern2pt}c|c}
 & & & & \\[-9pt]
 0_- & & & & \\
 & & & & \\[-10pt]
 \hline
 & & & & \\[-10pt]
 & & \mathcal{E}_0 & & \\
 & & & & \\[-10pt]
 \hline
 & & & & \\[-10pt]
 & & & & 0_+ \\[-9pt]
 & & & & 
\end{array}\right] = 
\left[\begin{array}{c@{\kern7pt}|c@{\kern3pt}c@{\kern3pt}c|@{\kern7pt}c}
 & & & & \\[-9pt]
 \times & & \times & & \\
 & & & & \\[-10pt]
 \hline
 & & & & \\[-10pt]
 \times & & \times & & \times \\
 & & & & \\[-10pt]
 \hline
 & & & & \\[-10pt]
 & & \times & & \times \\[-9pt]
 & & & & 
\end{array}\right]. 
\eeq
The blocks of $\mathcal{E}$ in the upper-right and lower-left corner must vanish, i.e. $\mathcal{E}_i^j=0$ when $i\in V_-$, $j\in V_+$ or $i\in V_+$, $j\in V_-$. Actually, this simple condition is not only necessary, but also sufficient for the existence of an overlapping decomposition.

\begin{theorem} \label{thm:cardec}
A TOM $\mathcal{E}=[\mathcal{E}_i^j]_{i,j\in V}$ has a decomposition $\mathcal{E} = (\mathcal{E}_L \oplus 0_+) + (0_- \oplus \mathcal{E}_R) - (0_- \oplus \mathcal{E}_0 \oplus 0_+)$ which overlaps on the sum of sites labelled by the subset of vertices $V_0\subset V$ iff for some partition $V=V_-\cup V_0\cup V_+$, 
\beq\label{eq:condec}
\mathcal{E}_i^j=0 \text{ for } 
i\in V_-, j\in V_+ \text{ and } i\in V_+, j\in V_-.
\eeq
\end{theorem}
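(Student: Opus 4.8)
The statement is an equivalence, so the plan is to treat the two implications separately; necessity is a bookkeeping remark while the converse carries the real content. For necessity I would read off the block structure displayed in \eqref{tom_dec1}: in $\mathcal{E} = (\mathcal{E}_L \oplus 0_+) + (0_- \oplus \mathcal{E}_R) - (0_- \oplus \mathcal{E}_0 \oplus 0_+)$, the first summand has non-zero blocks only for $i,j\in V_L$, the second only for $i,j\in V_R$, and the third only for $i,j\in V_0$. None of these sub-blocks contains a position with $i\in V_-$, $j\in V_+$ or with $i\in V_+$, $j\in V_-$, so $\mathcal{E}_i^j=0$ at every such position, which is precisely \eqref{eq:condec}.

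For sufficiency I would assume \eqref{eq:condec} for a fixed partition $V=V_-\cup V_0\cup V_+$ and construct $\mathcal{E}_L$, $\mathcal{E}_R$, $\mathcal{E}_0$ column by column. The columns $j\in V_-$ are immediate: by \eqref{eq:condec} the $j$-th column of $\mathcal{E}$ is already supported on $V_L$ and has trace preserving column sum, so I assign it unchanged to $\mathcal{E}_L$; symmetrically the columns $j\in V_+$ go to $\mathcal{E}_R$. The whole difficulty sits in the columns $j\in V_0$, whose support is spread over all of $V$: the rows $i\in V_-$ are forced into $\mathcal{E}_L$ and the rows $i\in V_+$ into $\mathcal{E}_R$, while the rows $i\in V_0$ must obey the overlap relation $(\mathcal{E}_L)_i^j+(\mathcal{E}_R)_i^j-(\mathcal{E}_0)_i^j=\mathcal{E}_i^j$. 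The obstacle is that the naive restriction of such a column to $V_L$ is only trace non-increasing, because part of its weight leaks to the $V_+$ rows; to obtain a genuine TOM this leaked weight must be reinjected into the overlap while staying completely positive.

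I would resolve this by concentrating the deficits on a single reference site $i_0\in V_0$ (if $V_0=\emptyset$, then \eqref{eq:condec} forces $\mathcal{E}$ to be block diagonal and the decomposition is just the direct sum $\mathcal{E}_L\oplus\mathcal{E}_R$). For $j\in V_0$, writing $\Sigma_-^j=\sum_{k\in V_-}\mathcal{E}_k^j$, $\Sigma_0^j=\sum_{k\in V_0}\mathcal{E}_k^j$, $\Sigma_+^j=\sum_{k\in V_+}\mathcal{E}_k^j$, I keep the $V_-$ rows of $\mathcal{E}_L$ and the $V_+$ rows of $\mathcal{E}_R$ equal to the corresponding blocks of $\mathcal{E}$, set $(\mathcal{E}_L)_i^j=\mathcal{E}_i^j$ and $(\mathcal{E}_R)_i^j=(\mathcal{E}_0)_i^j=0$ for $i\in V_0\setminus\{i_0\}$, and at the site $i_0$ put $(\mathcal{E}_L)_{i_0}^j=\mathcal{E}_{i_0}^j+\Sigma_+^j$, $(\mathcal{E}_R)_{i_0}^j=\Sigma_-^j+\Sigma_0^j$, $(\mathcal{E}_0)_{i_0}^j=\Sigma_-^j+\Sigma_0^j+\Sigma_+^j$. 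Every block is a sum of CP maps, hence CP, and the overlap relation holds block-by-block; in particular $(\mathcal{E}_L)_i^j+(\mathcal{E}_R)_i^j-(\mathcal{E}_0)_i^j=\mathcal{E}_i^j$ is CP, so the compatibility condition that makes \eqref{eq:dec} a TOM is automatic. It remains to check trace preservation: each of the three $V_0$-column sums equals $\Sigma_-^j+\Sigma_0^j+\Sigma_+^j=\sum_{k\in V}\mathcal{E}_k^j$, which is trace preserving because $\mathcal{E}$ is a TOM. Together with the easy $V_-$ and $V_+$ columns this shows $\mathcal{E}_L$, $\mathcal{E}_R$, $\mathcal{E}_0$ are TOMs, and substituting them into \eqref{eq:dec} recovers $\mathcal{E}$ on every block. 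The only real obstacle is therefore the trace bookkeeping on the overlap columns, which the single-site reinjection settles without disturbing complete positivity.
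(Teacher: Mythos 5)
Your proof is correct and follows essentially the same strategy as the paper's: necessity is read off from the block supports of the three summands, and sufficiency is an explicit construction whose only nontrivial point is restoring trace preservation of the $V_0$ columns by reinjecting the column sums over the missing rows ($\Sigma_\pm^j$) into the overlap block. The only difference is cosmetic: you concentrate the compensating CP maps at a single reference site $i_0\in V_0$ and zero out the rest of the $V_0\times V_0$ block of $\mathcal{E}_R$, whereas the paper keeps the $V_0\times V_0$ block of $\mathcal{E}$ in both factors and splits the deficits arbitrarily over the rows of $V_0$, compensating through $\mathcal{E}_0$ -- both allocations verify the same overlap identity and the same column-sum bookkeeping.
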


{\bf Proof.}
The condition \eqref{eq:condec} is necessary because any of the three terms in \eqref{eq:dec} has null $(i,j)$ coefficients for $i\in V_-$, $j\in V_+$ and $i\in V_+$, $j\in V_-$.

Let us prove the sufficiency of \eqref{eq:condec}, a condition which we assume in what follows. If $V_L=V_-\cup V_0$ and $V_R=V_0\cup V_+$, then an overlapping decomposition related to the partition $V=V_-\cup V_0\cup V_+$ determines $(\mathcal{E}_L)_i^j=\mathcal{E}_i^j$ for $(i,j)\in V_L^2\setminus V_0^2$ and $(\mathcal{E}_R)_i^j=\mathcal{E}_i^j$ for $(i,j)\in V_R^2\setminus V_0^2$. 

Consider first the case in which $V_0$ has a single vertex, denoted by 0. The truncations 
$$
\widetilde{\mathcal{E}}_L = 
\sum_{i,j\in V_L} \mathcal{E}_i^j\otimes 
|i\rangle\langle j|\cdot|j\rangle\langle i|,
\qquad
\widetilde{\mathcal{E}}_R = 
\sum_{i,j\in V_R} \mathcal{E}_i^j \otimes 
|i\rangle\langle j|\cdot|j\rangle\langle i|,
$$ 
fail to be TOMs only because their 0 column sum may not be trace preserving. We can take care of this problem  by adding CP maps to the $(0,0)$ coefficient in the following way
$$
\mathcal{E}_L = \widetilde{\mathcal{E}}_L 
+ \sum_{i\in V_+} \mathcal{E}_i^0 \otimes 
|0\rangle\langle0|\cdot|0\rangle\langle0|,
\qquad
\mathcal{E}_R = \widetilde{\mathcal{E}}_R 
+ \sum_{i\in V_-} \mathcal{E}_i^0 \otimes 
|0\rangle\langle0|\cdot|0\rangle\langle0|.
$$
Then, 
$$
\sum_{i\in V}(\mathcal{E}_L)_i^0 
=\sum_{i\in V}\mathcal{E}_i^0
=\sum_{i\in V}(\mathcal{E}_R)_i^0
$$
are trace preserving because $\mathcal{E}$ is a TOM, hence $\mathcal{E}_{L,R}$ are also TOMs. Besides,
$$
\mathcal{E}_0 = 
\sum_{i\in V}\mathcal{E}_i^0 \otimes 
|0\rangle\langle0|\cdot|0\rangle\langle0|
$$
is a TOM such that $\mathcal{E} = (\mathcal{E}_L \oplus 0_+) + (0_- \oplus \mathcal{E}_R) - (0_- \oplus \mathcal{E}_0 \oplus 0_+)$ because
$$
\mathcal{E}_0^0 = 
(\mathcal{E}_L)_0^0 + (\mathcal{E}_R)_0^0 - \sum_{i\in V}\mathcal{E}_i^0.
$$ 

If $V_0$ has more than one vertex, it suffices to split 
$$
\sum_{i\in V_+}\mathcal{E}_i^k = \sum_{j\in V_0}\mathcal{A}_j^k,
\qquad
\sum_{i\in V_-}\mathcal{E}_i^k = \sum_{j\in V_0}\mathcal{B}_j^k,
\qquad
k\in V_0,
$$ 
arbitrarily into CP maps $\mathcal{A}_j^k$ and $\mathcal{B}_j^k$. Then, 
$$
\begin{gathered}
\mathcal{E}_L = \widetilde{\mathcal{E}}_L 
+ \sum_{j,k\in V_0} \mathcal{A}_j^k \otimes 
|j\rangle\langle k|\cdot|k\rangle\langle j|,
\qquad
\mathcal{E}_R = \widetilde{\mathcal{E}}_R 
+ \sum_{j,k\in V_0} \mathcal{B}_j^k \otimes 
|j\rangle\langle k|\cdot|k\rangle\langle j|,
\\
\mathcal{E}_0 = 
\sum_{j,k\in V_0} (\mathcal{A}_j^k+\mathcal{B}_j^k+\mathcal{E}_j^k) \otimes 
|j\rangle\langle k|\cdot|k\rangle\langle j|,
\end{gathered}
$$
are TOMs giving an overlapping decomposition of $\mathcal{E}$.
\qed

Pictorially, the above theorem states that overlapping decompositions are linked to TOMs whose transition diagrams have the structure below.
\begin{center}
\begin{tikzpicture}[->,>=stealth',shorten >=1pt,auto,node distance=1.7cm,
                    semithick]
  \tikzstyle{every state}=[ellipse,minimum height=1.5cm, 
    		minimum width=2cm,fill=blue!20,draw,inner sep=0pt,
			node distance=5cm]
  
  \node[state] (L)  			{$V_-$};
  \node[state] (R)  at (7,0) 	{$V_+$};
  \node[state,minimum height=1.5cm,minimum width=1.5cm] (0)  
  								at (3.5,0)  {$V_0$};
  
  \path
        (0)  edge   [loop below,line width=1mm]   	node {} (0)
        (L)  edge   [loop left,line width=1mm]   	node {} (L)
        (R)  edge   [loop right,line width=1mm]   	node {} (R)
        (0)  edge	[bend left=20,line width=1mm] 	node {} (L)
        (0)  edge   [bend left=20,line width=1mm] 	node {} (R)
        (L)  edge   [bend left=20,line width=1mm] 	node {} (0)
        (R)  edge   [bend left=20,line width=1mm] 	node {} (0);
        
\end{tikzpicture}
\end{center}
That is, a decomposition overlapping on a sum of sites corresponding to a certain subset of vertices occurs iff the remaining vertices split into two uncoupled subsets. 

Regarding overlapping factorizations \eqref{eq:fac}, they impose also a simple constraint on the shape of the whole TOM, namely,
$$
\mathcal{E} = 
\left[\begin{array}{ccc|c}
 & & & \\[-1pt]
 & \mathcal{E}_L & & \\
 & & & \\[-1pt]
\hline
 & & & \\[-10pt]
 & & & I_+ \\[-9pt]
 & & &
\end{array}\right] 
\left[\begin{array}{c|ccc}
 & & & \\[-9pt]
 I_- & & & \\
 & & & \\[-10pt]
 \hline
 & & & \\[-1pt]
 & & \mathcal{E}_R & \\
 & & & \\[-1pt]
\end{array}\right] = 
\left[\begin{array}{c@{\kern7pt}|c@{\kern3pt}c@{\kern3pt}c@{\kern7pt}c}
 & & & & \\[-9pt]
 \times & & \times & & \times \\
 & & & & \\[-10pt]
 & & & & \\[-10pt]
 \times & & \times & & \times \\
 & & & & \\[-10pt]
 \hline
 & & & & \\[-10pt]
 & & \times & & \times \\[-9pt]
 & & & & 
\end{array}\right]. 
$$
Now, only the block of $\mathcal{E}$ in the lower-left corner must vanish, i.e. $\mathcal{E}_i^j=0$ when $i\in V_+$, $j\in V_-$. Nevertheless, this condition alone is not sufficient for the existence of an overlapping factorization, which requires an additional constraint on the upper-right block. To formulate this constraint we introduce the notion of {\bf CP vector}, $\mathcal{V}=[\mathcal{V}_i]$, as a vector constituted by CP maps $\mathcal{V}_i$ on a common space $\mathcal{I}(\mathcal{H})$, and such that $\sum_i\mathcal{V}_i$ converges in the strong sense. When $\sum_i\mathcal{V}_i$ is trace preserving, $\mathcal{V}$ will be called a {\bf CPTP vector}. With this terminology, TOMs are square matrices whose columns are CPTP vectors. Any CP vector $\mathcal{V}$ may be composed with a CP map $\Phi$ on the same space $\mathcal{I}(\mathcal{H})$,
$$
\mathcal{V}\Phi := [\mathcal{V}_i\Phi].
$$
Given CP vectors $\mathcal{V}^k$ and CP maps $\Phi_k$, a CP vector given by a strongly convergent sum
$$
\sum_k \mathcal{V}^k \Phi_k 
$$
will be called a {\bf combination} of $\mathcal{V}^k$ with coefficients $\Phi_k$. We refer to the set of combinations of a given set of CP vectors as the {\bf cone} spanned by such CP vectors, which is closed under sums and products by CP maps, thus also under multiplications by non-negative numbers, but not so by arbitrary complex or real numbers. The {\bf rank} of a set of CP vectors $\mathcal{V}^k$ is the minimal number of CP vectors spanning a cone which contains all $\mathcal{V}^k$.  

A combination of CP vectors may be always rewritten as a combination of the same number of CPTP vectors, a fact that is a direct consequence of the following result.

\begin{lemma} \label{lem:CPTPcomb}
Any CP vector $\mathcal{V}$ may be expressed as $\mathcal{V}=\mathcal{U}\Phi$, with $\mathcal{U}$ a CPTP vector and $\Phi$ a CP map.
\end{lemma}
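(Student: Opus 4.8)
The plan is to build the factor $\Phi$ from the single positive operator that controls the total trace of $\mathcal{V}$. Writing each component in Kraus form $\mathcal{V}_i=\sum_k B_{i,k}\cdot B_{i,k}^*$, the relevant operator is
\[
C=\sum_{i,k}B_{i,k}^*B_{i,k}=\Big(\textstyle\sum_i\mathcal{V}_i\Big)^{\!*}(I),
\]
which is a bounded positive operator because $\mathcal{V}$ is a CP vector, so $\sum_i\mathcal{V}_i$ is a bounded CP map (its norm being finite by \eqref{eq:CPnorm}, whence $\|C\|=\|\sum_i\mathcal{V}_i\|$). The candidate for the CP factor is $\Phi=C^{1/2}\cdot C^{1/2}$, i.e. $\Phi(\rho)=C^{1/2}\rho C^{1/2}$, which is manifestly CP (one Kraus operator) and is in general different from $\sum_i\mathcal{V}_i$. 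The $\mathcal{U}_i$ will then be produced by ``dividing'' the Kraus operators of $\mathcal{V}_i$ by $C^{1/2}$ on the right.

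First I would record two elementary facts. If $\phi\in\ker C$ then $\langle\phi|C\phi\rangle=\sum_{i,k}\|B_{i,k}\phi\|^2=0$, so every $B_{i,k}$ annihilates $\ker C$, i.e. $B_{i,k}=B_{i,k}P$ with $P$ the orthogonal projection onto $(\ker C)^\bot=\overline{\operatorname{ran}}\,C$. Moreover $B_{i,k}^*B_{i,k}\le C$, so Douglas' factorization lemma supplies bounded operators $W_{i,k}$, with $\|W_{i,k}\|\le1$ and $W_{i,k}=W_{i,k}P$, such that $B_{i,k}=W_{i,k}C^{1/2}$ (concretely $W_{i,k}=B_{i,k}C^{-1/2}$ with $C^{-1/2}$ the pseudoinverse; Douglas' lemma is precisely what guarantees boundedness when $C$ is not invertible). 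Setting $\mathcal{U}_i=\sum_k W_{i,k}\cdot W_{i,k}^*$, a one-line computation gives $\mathcal{U}_i\Phi(\rho)=\sum_k W_{i,k}C^{1/2}\rho C^{1/2}W_{i,k}^*=\sum_k B_{i,k}\rho B_{i,k}^*=\mathcal{V}_i(\rho)$, so $\mathcal{V}=\mathcal{U}\Phi$ holds componentwise for any CP vector $\mathcal{U}$ built from these $W_{i,k}$.

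It then remains to make $\mathcal{U}=[\mathcal{U}_i]$ trace preserving. Evaluating on $\eta=C^{1/2}\xi$ yields $\sum_{i,k}\|W_{i,k}\eta\|^2=\sum_{i,k}\|B_{i,k}\xi\|^2=\langle\xi|C\xi\rangle=\|\eta\|^2$, and since the $W_{i,k}$ vanish on $\ker C$ this identifies
\[
\sum_{i,k}W_{i,k}^*W_{i,k}=P,
\]
the computation on the dense subspace $\operatorname{ran}C^{1/2}$ fixing the operator on $\overline{\operatorname{ran}}\,C$ while it is zero on $\ker C$. Thus $\mathcal{U}$ is a priori only trace non-increasing, with trace defect equal to $Q:=I-P$, the projection onto $\ker C$. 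I would repair this by appending $Q$ as an extra Kraus operator to a single component $\mathcal{U}_{i_0}$; this adds $Q\rho Q$ to $\mathcal{U}_{i_0}$ and makes $\sum_{i,k}W_{i,k}^*W_{i,k}+Q^*Q=P+Q=I$, so the augmented $\mathcal{U}$ is a CPTP vector. Crucially, the modification does not spoil the factorization: since $Q$ projects onto $\ker C^{1/2}$ we have $C^{1/2}Q=0$, so the new term contributes $QC^{1/2}\rho C^{1/2}Q=0$ to $\mathcal{U}_{i_0}\Phi$ and the identity $\mathcal{U}_{i_0}\Phi=\mathcal{V}_{i_0}$ survives.

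The main obstacle is exactly the possible non-invertibility of $C$: were $C^{-1/2}$ bounded the whole argument would be immediate, and the two technical points both stem from $\ker C\ne\{0\}$ --- namely (i) ensuring the ``divided'' operators $W_{i,k}=B_{i,k}C^{-1/2}$ remain bounded, which follows from Douglas' lemma together with $B_{i,k}|_{\ker C}=0$, and (ii) correcting the trace defect $I-P$ without disturbing $\mathcal{V}=\mathcal{U}\Phi$, which is achieved by the single extra Kraus operator $Q=I-P$. In finite dimension both points are transparent; in the separable infinite-dimensional setting one needs only the elementary boundedness and strong-convergence bookkeeping indicated above.
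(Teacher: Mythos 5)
Your proof is correct, and its core is the same as the paper's: both factor each Kraus operator as $B_{i,k}=W_{i,k}C^{1/2}$ with $C=\sum_{i,k}B_{i,k}^*B_{i,k}=\big(\sum_i\mathcal{V}_i\big)^*(I)$ and contractions $W_{i,k}$, and both take $\Phi$ to be conjugation by the positive square root of $C$; the boundedness argument the paper carries out by hand from the inequality $\|B_{i,k}\psi\|\le\|C^{1/2}\psi\|$ is exactly the instance of Douglas' lemma you invoke. Where the two arguments genuinely diverge is in the treatment of $\ker C$. The paper first proves the statement under the assumption $C>0$ and, in the general case, perturbs every component, setting $\widetilde{\mathcal{V}}_i=\mathcal{V}_i+\lambda_i\,\mathbb{P}$ with $\mathbb{P}$ conjugation by the orthogonal projection onto $\ker C$ and $\sum_i\lambda_i>0$, so that the perturbed total operator is strictly positive; it then applies the nondegenerate case to $\widetilde{\mathcal{V}}$ and recovers $\mathcal{V}=\widetilde{\mathcal{V}}\mathbb{Q}$ by composing with one further CP map on the right, so the final $\Phi$ is a composition. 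You instead keep $\Phi=C^{1/2}\cdot C^{1/2}$ fixed, identify the defect of the trace-preserving condition as exactly the projection $Q$ onto $\ker C$ via the identity $\sum_{i,k}W_{i,k}^*W_{i,k}=P$ (correctly justified by the quadratic-form computation on the dense subspace $\operatorname{ran}C^{1/2}$), and absorb it by appending the single Kraus operator $Q$ to one component; since $QC^{1/2}=0$ this does not disturb the factorization. Your repair is somewhat more economical, and the only bookkeeping left implicit --- that the index $k$ may be taken common to all $i$, and that the monotone bounded partial sums $\sum_{i,k\le n}W_{i,k}^*W_{i,k}\le I$ guarantee strong convergence of the relevant series --- is harmless and handled identically in the paper.
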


{\bf Proof.}
Let $\mathcal{V}_i=\sum_kB_{i,k}\cdot B_{i,k}^*$ be the Kraus decompositions of the components of $\mathcal{V}$, which we can assume without loss that run over a common set of indices by enlarging such decompositions with null terms if necessary. Consider the CP map $\boldsymbol{\mathcal{V}}=\sum_i\mathcal{V}_i$ and the positive square root $X=\sqrt{\boldsymbol{\mathcal{V}}^*(I)}$ of the positive operator $\boldsymbol{\mathcal{V}}^*(I)=\sum_{i,k}B_{i,k}^*B_{i,k}$. The identity $X^2=\sum_{i,k}B_{i,k}^*B_{i,k}$ yields $\|X\psi\|^2=\sum_{i,k}\|B_{i,k}\psi\|^2$, which implies that 
\beq\label{eq:kerX}
\ker(X)=\bigcap_{i,k}\ker(B_{i,k}).
\eeq 

Assume first that $\boldsymbol{\mathcal{V}}^*(I)>0$, i.e. $\ker(X)=\{0\}$. Then, there exists $X^{-1}\colon\operatorname{ran}(X)\to\mathcal{H}$, which is densely defined on $\mathcal{H}$ because $\ker(X)=\operatorname{ran}(X)^\bot$. While $X^{-1}$ could be unbounded, the operator $A_{i,k}=B_{i,k}X^{-1}$ is bounded because $\|A_{i,k}X\psi\|^2=\|B_{i,k}\psi\|^2\le\|X\psi\|^2$ for all $\psi\in\mathcal{H}$, so that $\|A_{i,k}\|\le1$. In consequence, $A_{i,k}$ has a bounded extension to $\mathcal{H}$ with the same norm, which we also denote by $A_{i,k}$. Since $B_{i,k}=A_{i,k}X$, we have that $\sum_{i,k\le n}A_{i,k}^*A_{i,k} \le I$ due to the inequality
\beq\label{eq:AB}
\sum_{i,k\le n}\langle X\psi|A_{i,k}^*A_{i,k}X\psi\rangle 
= \sum_{i,k\le n}\|B_{i,k}\psi\|^2 \le \|X\psi\|^2 = \langle X\psi|X\psi\rangle, 
\qquad
\psi\in\mathcal{H},
\eeq
which implies that $\sum_{i,k\le n}\langle\psi|A_{i,k}^*A_{i,k}\psi\rangle \le \langle\psi|\psi\rangle$ for every $\psi\in\operatorname{ran}(X)$ and, by continuity, for  every $\psi\in\mathcal{H}$. Bearing in mind Lemma~\ref{lem:monconv}, this proves that $\mathcal{U}_i = \sum_kA_{i,k}\cdot A_{i,k}^*$ and $\sum_i\mathcal{U}_i$ define CP maps, so that $\mathcal{U}=[\mathcal{U}_i]$ is a CP vector. Indeed, $\mathcal{U}$ is a CPTP vector, as follows from the equality
$$
\sum_{i,k}\langle X\psi|A_{i,k}^*A_{i,k}X\psi\rangle 
= \sum_{i,k}\|B_{i,k}\psi\|^2 = \|X\psi\|^2 = \langle X\psi|X\psi\rangle,
\qquad
\psi\in\mathcal{H}.
$$
This CPTP vector satisfies $\mathcal{V} = \mathcal{U}\Phi$ with $\Phi = X \cdot X$.

Taking into account the previous result, the proposition is proved if we show that in the general case $\mathcal{V}=\widetilde{\mathcal{V}}\widetilde\Phi$ for some CP map $\widetilde\Phi$ and some CP vector $\widetilde{\mathcal{V}}$ with $\widetilde{\boldsymbol{\mathcal{V}}}=\sum_i\widetilde{\mathcal{V}}_i$ satisfying $\widetilde{\boldsymbol{\mathcal{V}}}^*(I)>0$. 
Suppose now that $\boldsymbol{\mathcal{V}}^*(I)$ is not strictly positive, i.e. $\ker(X)$ is non-trivial. If $P$ is the orthogonal projection of $\mathcal{H}$ onto $\ker(X)$ and $Q=I-P$, we have that $XP=0=PX$ and $XQ=X=QX$. Define a CP vector $\widetilde{\mathcal{V}}$ by 
$$
\widetilde{\mathcal{V}}_i = \mathcal{V}_i + \lambda_i \mathbb{P},
\qquad
\mathbb{P} = P \cdot P,
\qquad
\lambda_i\ge0,
\qquad
\sum_i\lambda_i=\lambda>0.
$$
Then, $\widetilde{\boldsymbol{\mathcal{V}}}^*(I) = \boldsymbol{\mathcal{V}}^*(I) + \lambda P = X^2 + \lambda P$ is such that $\ker(\widetilde{\boldsymbol{\mathcal{V}}}^*(I)) = \ker(X) \cap \ker(P) = \{0\}$, i.e. $\widetilde{\boldsymbol{\mathcal{V}}}^*(I)>0$. Also, 
$$
\widetilde{\mathcal{V}}\mathbb{Q} = \mathcal{V}\mathbb{Q},
\qquad
\mathbb{Q}=Q\cdot Q.
$$
From \eqref{eq:kerX} we get $B_{i,k}P=0=PB_{i,k}^*$, $B_{i,k}Q=B_{i,k}$ and $QB_{i,k}^*=B_{i,k}^*$, which yield $\mathcal{V}\mathbb{Q}=\mathcal{V}$. We conclude that $\mathcal{V}=\widetilde{\mathcal{V}}\widetilde\Phi$ with $\widetilde\Phi=\mathbb{Q}$.
\qed

With this result and the terminology above in mind, we can state the characterization of overlapping factorizations of TOMs.

\begin{theorem} \label{thm:carfac}
A TOM $\mathcal{E}=[\mathcal{E}_i^j]_{i,j\in V}$ has a factorization $\mathcal{E} = (\mathcal{E}_L \oplus I_+) (I_- \oplus \mathcal{E}_R)$ which overlaps on the sum of sites labelled by the subset of vertices $V_0\subset V$ iff for some partition $V=V_-\cup V_0\cup V_+$, denoting $V_L=V_-\cup V_0$ and $V_R=V_0\cup V_+$: 
\begin{itemize}
\item[(i)] $\mathcal{E}_i^j=0$ for $i\in V_+$, $j\in V_-$. 
\item[(ii)] The columns of $\mathcal{E}_{LR}:=[\mathcal{E}_i^j]_{i\in V_L,j\in V_R}$ have at most rank $|V_0|$.
\end{itemize}
\end{theorem}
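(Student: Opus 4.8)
The plan is to read off the block structure of a factored TOM, use it to obtain necessity at once, and then reverse-engineer the two factors for sufficiency, with Lemma~\ref{lem:CPTPcomb} as the tool that turns the reconstructed factors into genuine TOMs. First I would fix the partition $V=V_-\cup V_0\cup V_+$ and write everything in $3\times3$ block form. Computing the product $(\mathcal{E}_L\oplus I_+)(I_-\oplus\mathcal{E}_R)$ blockwise, the $I_\pm$ summands freeze the $V_-$ columns of $\mathcal{E}_L$ and the $V_+$ rows of $\mathcal{E}_R$, the lower-left block is forced to vanish, and the overlap at $V_0$ produces, for every $i\in V_L$ and $j\in V_R$, the identity
$$
\mathcal{E}_i^j=\sum_{k\in V_0}(\mathcal{E}_L)_i^k\,(\mathcal{E}_R)_k^j .
$$
The vanishing corner is exactly (i). For (ii), this identity says that each column $[\mathcal{E}_i^j]_{i\in V_L}$ of $\mathcal{E}_{LR}$ (indexed by $j\in V_R$) is the combination $\sum_{k\in V_0}\mathcal{W}^{(k)}\,(\mathcal{E}_R)_k^j$ of the $|V_0|$ CP vectors $\mathcal{W}^{(k)}:=[(\mathcal{E}_L)_i^k]_{i\in V_L}$ with CP coefficients $(\mathcal{E}_R)_k^j$, so these columns lie in a cone of rank at most $|V_0|$. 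This settles necessity.

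For sufficiency I would run this backwards. Assuming (ii), the columns of $\mathcal{E}_{LR}$ form a combination of at most $|V_0|$ CP vectors, and by Lemma~\ref{lem:CPTPcomb} — which lets any combination of CP vectors be rewritten as a combination of the same number of CPTP vectors — these spanning vectors may be taken to be CPTP vectors $\mathcal{W}^{(k)}$, $k\in V_0$, padding with arbitrary CPTP vectors carrying zero coefficients if the rank is strictly smaller than $|V_0|$. Thus $[\mathcal{E}_i^j]_{i\in V_L}=\sum_{k\in V_0}\mathcal{W}^{(k)}\Phi_k^j$ for CP maps $\Phi_k^j$. I then define $\mathcal{E}_L$ on $V_L$ by copying the $V_-$ columns of $\mathcal{E}$ and setting its $V_0$ columns equal to the $\mathcal{W}^{(k)}$, and $\mathcal{E}_R$ on $V_R$ by copying the $V_+$ rows of $\mathcal{E}$ and setting its $V_0$ rows equal to $[\Phi_k^j]$. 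Reinserting these into the product and comparing with the block formula above reproduces $\mathcal{E}$ entry by entry, using (i) for the vanishing corner.

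The substantive step — and the main obstacle — is verifying that $\mathcal{E}_L$ and $\mathcal{E}_R$ are honest TOMs, i.e. that each of their columns is a CPTP vector. For $\mathcal{E}_L$ the $V_0$ columns are CPTP by construction (this is precisely why Lemma~\ref{lem:CPTPcomb} is indispensable), while its $V_-$ columns are trace preserving because (i) annihilates the $V_+$ entries, so the $V_L$-restricted column sum coincides with the full $\mathcal{E}$-column sum. For $\mathcal{E}_R$ the check is less immediate and is where I expect the bookkeeping to require care: using that each $\mathcal{W}^{(k)}$ is CPTP one obtains $\operatorname{Tr}\big(\sum_{i\in V_L}\mathcal{E}_i^j(\rho)\big)=\operatorname{Tr}\big(\sum_{k\in V_0}\Phi_k^j(\rho)\big)$, and feeding this into the trace preservation of the full column $j$ of $\mathcal{E}$ yields $\operatorname{Tr}\big(\sum_{i\in V_0}\Phi_i^j(\rho)+\sum_{i\in V_+}\mathcal{E}_i^j(\rho)\big)=\operatorname{Tr}(\rho)$, which is exactly the trace-preservation of column $j$ of $\mathcal{E}_R$. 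This closes the argument; the only delicate point is keeping track of which index sets the various sums run over and invoking the CPTP upgrade of the spanning vectors at the right moment.
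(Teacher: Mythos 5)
Your proof is correct and follows essentially the same route as the paper's: the same block identity $\mathcal{E}_i^j=\sum_{k\in V_0}(\mathcal{E}_L)_i^k(\mathcal{E}_R)_k^j$ for necessity, the same invocation of Lemma~\ref{lem:CPTPcomb} to upgrade the spanning CP vectors to CPTP vectors, the same construction of $\mathcal{E}_L$ and $\mathcal{E}_R$ by copying the $V_-$ columns and $V_+$ rows of $\mathcal{E}$, and the same trace-preservation checks (including using (i) for the $V_-$ columns of $\mathcal{E}_L$ and the CPTP property of the spanning vectors for the $V_0$ rows of $\mathcal{E}_R$). The only addition is your explicit remark about padding with zero-coefficient CPTP vectors when the rank is strictly below $|V_0|$, a detail the paper leaves implicit.
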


{\bf Proof.}
The factorization \eqref{eq:fac} implies (i) because the $(i,k)$ coefficients of the left factor vanish for $i\in V_+$, $k\in V_L$, while the $(k,j)$ coefficients of the right factor are null for $k\in V_R$, $j\in V_-$. On the other hand, (ii) follows from the fact that the $(i,k)$ coefficients of the left factor vanish for $i\in V_L$, $k\in V_+$, while the $(k,j)$ coefficients of the right factor are null for $k\in V_-$, $j\in V_R$, leading to 
$$
\mathcal{E}_i^j=\sum_{k\in V_0}(\mathcal{E}_L)_i^k(\mathcal{E}_R)_k^j,
\qquad
i\in V_L, \quad j\in V_R.
$$
Denoting by $\mathcal{U}^k:=[(\mathcal{E}_L)_i^k]_{i\in V_L}$, $k\in V_L$, and $\mathcal{V}^j:=[\mathcal{E}_i^j]_{i\in V_L}$, $j\in V_R$, the columns of $\mathcal{E}_L$ and $\mathcal{E}_{LR}$ respectively, the above identity reads as
$$
\mathcal{V}^j = \sum_{k\in V_0} \mathcal{U}^k (\mathcal{E}_R)_k^j,
$$
which proves (ii).

To see the converse, assume (i) and (ii). The second condition means that the columns $\mathcal{V}^j$ of $\mathcal{E}_{LR}$ may be expressed as
$$
\mathcal{V}^j = \sum_{k\in V_0} \mathcal{U}^k \Phi_k^j,
\qquad
j\in V_R,
$$
for some CP vectors $\mathcal{U}^k$ and CP maps $\Phi_k^j$. Bearing in mind Lemma~\ref{lem:CPTPcomb}, we can assume without loss that $\mathcal{U}^k$ are CPTP maps. Let us introduce the left and right operators
$$
\begin{aligned}
& \mathcal{E}_L = 
\sum_{i\in V_L,k\in V_-} 
\mathcal{E}_i^k \otimes |i\rangle\langle k|\cdot|k\rangle\langle i| +
\sum_{i\in V_L,k\in V_0} 
\mathcal{U}^k_i \otimes |i\rangle\langle k|\cdot|k\rangle\langle i|,
\\
& \mathcal{E}_R = 
\sum_{k\in V_0,j\in V_R} 
\Phi_k^j \otimes |k\rangle\langle j|\cdot|j\rangle\langle k| +
\sum_{k\in V_+,j\in V_R} 
\mathcal{E}_k^j \otimes |k\rangle\langle j|\cdot|j\rangle\langle k|.
\end{aligned}
$$
Both are TOMs because $\mathcal{U}^k$ are CPTP vectors, $\sum_{i\in V_L}\mathcal{E}_i^j=\sum_{i\in V}\mathcal{E}_i^j$ for $j\in V_-$, while 
$\sum_{k\in V_0}\Phi_k^j + \sum_{k\in V_+}\mathcal{E}_k^j$ is trace preserving as a consequence of the same property of $\sum_{i\in V_L} \mathcal{U}^k_i$ and
$$
\sum_{i\in V} \mathcal{E}_i^j = 
\sum_{i\in V_L} \mathcal{V}^j_i + \sum_{i\in V_+} \mathcal{E}_i^j = 
\sum_{k\in V_0} \bigg(\sum_{i\in V_L} \mathcal{U}^k_i\bigg) \Phi_k^j +
\sum_{i\in V_+} \mathcal{E}_ij,
\qquad
j \in V_R.
$$
The TOMs $\mathcal{E}_{L,R}$ yield an overlapping factorization \eqref{eq:fac} of $\mathcal{E}$, as follows from
$$
\begin{aligned}
(\mathcal{E}_L \oplus I_+)(I_- \oplus \mathcal{E}_R) 
& = 
\sum_{i\in V_L,k\in V_-} 
\mathcal{E}_i^k \otimes |i\rangle\langle k|\cdot|k\rangle\langle i| +
\sum_{k\in V_+,j\in V_R} 
\mathcal{E}_k^j \otimes |k\rangle\langle j|\cdot|j\rangle\langle k| 
\\ 
& +
\sum_{i\in V_L,k\in V_0,j\in V_R} 
\mathcal{U}^k_i\Phi_k^j \otimes |i\rangle\langle j|\cdot|j\rangle\langle i|,
\end{aligned}
$$
and the fact that $\sum_{k\in V_0}\mathcal{U}^k_i\Phi_k^j=\mathcal{V}^j_i=\mathcal{E}_i^j$ for $i\in V_L$ and $j\in V_R$.
\qed

The condition (i) in the above theorem has a simple diagrammatic depiction summarized by the following transition structure.    
\begin{center}
\begin{tikzpicture}[->,>=stealth',shorten >=1pt,auto,node distance=1.7cm,
                    semithick]
  \tikzstyle{every state}=[ellipse,minimum height=1.5cm, 
    		minimum width=2cm,fill=blue!20,draw,inner sep=0pt,
			node distance=5cm]
  
  \node[state] (L)  			{$V_-$};
  \node[state] (R)  at (4.4,0) 	{$V_+$};
  \node[state,minimum height=1.5cm,minimum width=1.5cm] (0)  
  								at (2.2,-2.2)  {$V_0$};
  
  \path
        (0)  edge   [loop below,line width=1mm]   	node {} (0)
        (L)  edge   [loop left,line width=1mm]   	node {} (L)
        (R)  edge   [loop right,line width=1mm]   	node {} (R)
        (0)  edge	[bend left=20,line width=1mm] 	node {} (L)
        (0)  edge   [bend left=20,line width=1mm] 	node {} (R)
        (L)  edge   [bend left=20,line width=1mm] 	node {} (0)
        (R)  edge   [bend left=20,line width=1mm] 	node {} (0)
        (R)  edge   [line width=1mm] 				node {} (L);
        
\end{tikzpicture}
\end{center}
In other words, (i) states that, for a certain subset of vertices, the remaining ones split into two subsets with no transitions from one of them to the other. On the other hand, the CP vectors affected by the rank condition (ii) are those alluding to the transitions represented in the above diagram by the {\color{black}{four arrows}} which go from the right to the left (arrows from $V_+$ to $V_0$ and to $V_-$, from $V_0$ to $V_-$ and the loop on $V_0$).

\bex
Consider the TOM
$$
\begin{gathered}
\mathcal{E} = 
\begin{bmatrix} 
\mathcal{E}_1^1 & \mathcal{E}_1^2 & \mathcal{E}_1^3 \\[2pt] 
\mathcal{E}_2^1 & \mathcal{E}_2^2 & \mathcal{E}_2^3 \\[2pt] 
0 & \mathcal{E}_3^2 & \mathcal{E}_3^3
\end{bmatrix},
\qquad
\mathcal{E}_i^j = 
\begin{cases}
B^1_{i,1}\cdot B^{1*}_{i,1} + B^1_{i,2}\cdot B^{1*}_{i,2}, 
\quad & i=1,2,
\\[2pt]
B_i^j\cdot B_i^{j*}, 
& j=2,3,
\end{cases} 
\\
B_{1,1}^1=\sqrt{\frac{5}{8}}I_2,
\quad 
B_{1,2}^1=\sqrt{\frac{1}{8}}\begin{bmatrix}0&1\\1&0\end{bmatrix},
\quad
B_{2,1}^1=\sqrt{\frac{1}{8}}\begin{bmatrix}0&-i\\i&0\end{bmatrix},
\quad
B_{2,2}^1=\sqrt{\frac{1}{8}}\begin{bmatrix}1&0\\0&-1\end{bmatrix},
\\
B_2^1=\frac{1}{\sqrt{6}}\begin{bmatrix}0&0\\1&0\end{bmatrix},
\quad
B_2^2=\begin{bmatrix}\sqrt{\frac{1}{6}}&\sqrt{\frac{2}{3}}\\0&0\end{bmatrix},
\quad
B_2^3=\frac{1}{\sqrt{3}}\begin{bmatrix}1&0\\-1&1\end{bmatrix},
\\
B_3^1=\begin{bmatrix}0&0\\\sqrt{\frac{2}{3}}&-\sqrt{\frac{1}{6}}\end{bmatrix},
\quad
B_3^2=\frac{1}{\sqrt{6}}\begin{bmatrix}0&1\\0&0\end{bmatrix},
\quad
B_3^3=\frac{1}{\sqrt{3}}\begin{bmatrix}1&1\\0&1\end{bmatrix}.
\end{gathered}
$$
The fact that $\mathcal{E}_3^1=0$ points to the possible existence of a factorization of $\mathcal{E}$ overlapping on site 2, corresponding to the partition of the vertices $V=\{1,2,3\}$ given by $V_-=\{1\}$, $V_0=\{2\}$, $V_+=\{3\}$. The left and right TOMs must be associated with graphs with vertices $V_L=\{1,2\}$ and $V_R=\{2,3\}$ respectively, so that the overlapping factorization should look like
\beq \label{eq:ex3-of}
\mathcal{E} = 
\left[\begin{array}{ccc|c}
 & & & \\[-4pt]
 & \mathcal{E}_L & & \\
 & & & \\[-4pt]
\hline
 & & & \\[-10pt]
 & & & I_+ \\[-9pt]
 & & &
\end{array}\right] 
\left[\begin{array}{c|ccc}
 & & & \\[-9pt]
 I_- & & & \\
 & & & \\[-10pt]
 \hline
 & & & \\[-2pt]
 & & \mathcal{E}_R & \\[8pt]
\end{array}\right] = 
\left[\begin{array}{cc|c}
 & & \\[-11pt]
 \mathcal{E}_1^1 & (\mathcal{E}_L)_1^2 & 0 \\[4pt]
 \mathcal{E}_2^1 & (\mathcal{E}_L)_2^2 & 0 \\[4pt]
\hline
 & & \\[-10pt]
 0 & 0 & I_+ \\[-10pt] 
 & & 
\end{array}\right] 
\left[\begin{array}{c|cc}
 & & \\[-10pt]
 I_- & 0 & 0 \\[2pt]
 \hline
 & & \\[-9pt]
 0 & (\mathcal{E}_R)_2^2 & (\mathcal{E}_R)_2^3 \\[4pt]
 0 & \mathcal{E}_3^2 & \mathcal{E}_3^3 \\[-12pt]
 & &
\end{array}\right]. 
\eeq
According to Theorem \ref{thm:carfac}, to analyze the existence of such a factorization only requires to inspect the rank of the columns of the upper right block
$$
\mathcal{E}_{LR}=\begin{bmatrix} \mathcal{E}_1^2 & \mathcal{E}_1^3 
\\[2pt] 
\mathcal{E}_2^2 & \mathcal{E}_2^3
\end{bmatrix}.
$$
Let us show that such rank equals 1 by applying the reasoning of Lemma \ref{lem:CPTPcomb}. Denote the columns of $\mathcal{E}_{LR}$ by
$$
\mathcal{V}_\alpha = 
\begin{bmatrix} 
\mathcal{E}_1^2 \\[2pt] \mathcal{E}_2^2 
\end{bmatrix} =
\begin{bmatrix} 
B_1^2\cdot {B_1^2}^* \\[2pt] B_2^2\cdot {B_2^2}^* \end{bmatrix},
\qquad
\mathcal{V}_\beta = 
\begin{bmatrix} 
\mathcal{E}_1^3 \\[2pt] \mathcal{E}_2^3 \end{bmatrix} =
\begin{bmatrix} B_1^3\cdot {B_1^3}^* \\[2pt] B_2^3\cdot {B_2^3}^* \end{bmatrix},
$$
and consider the CP maps given by the sum of their components,
$\boldsymbol{\mathcal{V}}_\alpha=\mathcal{E}_1^2+\mathcal{E}_2^2$ and
$\boldsymbol{\mathcal{V}}_\beta=\mathcal{E}_1^3+\mathcal{E}_2^3$.
The positive square roots
$$
X_\alpha = \sqrt{\boldsymbol{\mathcal{V}}_\alpha^*(I_2)} =
\sqrt{{B_1^2}^*B_1^2+{B_2^2}^*B_2^2} = 
\frac{1}{\sqrt{15}} \begin{bmatrix} 2 & 1 \\ 1 & 3 \end{bmatrix},
\quad
X_\beta = \sqrt{\boldsymbol{\mathcal{V}}_\beta^*(I_2)} ={\color{black}
\sqrt{{B_1^3}^*B_1^3+{B_2^3}^*B_2^3} }= 
\frac{1}{\sqrt{15}} \begin{bmatrix} 3 & \kern-5pt -1 \\ -1 & \kern-5pt 2 \end{bmatrix},
$$
allow us to define $A_i^2 = B_i^2 X_\alpha^{-1}$ and $A_i^3 = B_i^3 X_\beta^{-1}$ for $i=1,2$. It turns out that
$$
A_1^2 = A_1^3 =
\frac{1}{\sqrt{10}} \begin{bmatrix} 0 & 0 \\ 3 & -1 \end{bmatrix}
=: A_1,
\qquad
A_2^2 = {\color{black}A_2^3} =
\frac{1}{\sqrt{10}} \begin{bmatrix} 1 & 3 \\ 0 & 0 \end{bmatrix}
=: A_2.
$$
Therefore, $\mathcal{V}_\alpha=\mathcal{U}\Phi_\alpha$ and $\mathcal{V}_\beta=\mathcal{U}\Phi_\beta$ with
$$
\mathcal{U} = 
\begin{bmatrix} \mathcal{U}_1 \\ \mathcal{U}_2 \end{bmatrix} =
\begin{bmatrix} A_1\cdot A_1^* \\ A_2\cdot A_2^* \end{bmatrix},
\qquad 
\Phi_\alpha = X_\alpha\cdot X_\alpha,
\qquad
\Phi_\beta = X_\beta\cdot X_\beta,
$$
and $A_1^*A_1+A_2^*A_2=I_2$, so that $\mathcal{U}$ is a CPTP vector. Hence, the columns of $\mathcal{E}_{LR}$ have rank 1, which in view of Theorem~\ref{thm:carfac} proves that the overlapping factorization \eqref{eq:ex3-of} holds. Theorem~\ref{thm:carfac} also shows that such a factorization follows by setting $(\mathcal{E}_L)_1^2=\mathcal{U}_1$, $(\mathcal{E}_L)_2^2=\mathcal{U}_2$, $(\mathcal{E}_R)_2^2=\Phi_\alpha$ and $(\mathcal{E}_R)_2^3=\Phi_\beta$ in \eqref{eq:ex3-of}, i.e.
$$
\mathcal{E}_L = 
\begin{bmatrix} 
\mathcal{E}_1^1 & \mathcal{U}_1 
\\[2pt]
\mathcal{E}_2^1 & \mathcal{U}_2 
\end{bmatrix},
\qquad
\mathcal{E}_R = 
\begin{bmatrix} 
\Phi_\alpha & \Phi_\beta 
\\[2pt]
\mathcal{E}_3^2 & \mathcal{E}_3^3
\end{bmatrix}.
$$
In terms of block matrix representations, this overlapping factorization reads as
$$
\begin{aligned}
\widehat{\mathcal{E}} & = 
\left[\begin{array}{cccc|cccc|cccc}
\frac{5}{8} & 0 & 0 & \frac{1}{8} & 0 & 0 & 0 & 0 & 0 & 0 & 0 & 0 \\[2pt]
0 & \frac{5}{8} & \frac{1}{8} & 0 & 0 & 0 & 0 & 0 & 0 & 0 & 0 & 0 \\[2pt]
0 & \frac{1}{8} & \frac{5}{8} & 0 & 0 & 0 & 0 & 0 & 0 & 0 & 0 & 0 \\[2pt]
\frac{1}{8} & 0 & 0 & \frac{5}{8} & \frac{1}{6} & 0 & 0 & 0 & 
\frac{2}{3} & \frac{-1}{3} & \frac{-1}{3} & \frac{1}{6} \\[2pt]
\hline 
& & & & & & & & & & & \\[-11pt]
\frac{1}{8} & 0 & 0 & \frac{1}{8} & 
\frac{1}{6} & \frac{1}{3} & \frac{1}{3} & \frac{2}{3} & 
0 & 0 & 0 & \frac{1}{6} \\[2pt]
0 & \frac{-1}{8} & \frac{-1}{8} & 0 & 0 & 0 & 0 & 0 & 0 & 0 & 0 & 0 \\[2pt]
0 & \frac{-1}{8} & \frac{-1}{8} & 0 & 0 & 0 & 0 & 0 & 0 & 0 & 0 & 0 \\[2pt]
\frac{1}{8} & 0 & 0 & \frac{1}{8} & 0 & 0 & 0 & 0 & 0 & 0 & 0 & 0 \\[2pt]
\hline
& & & & & & & & & & & \\[-11pt]
& & & & \frac{1}{3} & 0 & 0 & 0 & 
\frac{1}{3} & \frac{1}{3} & \frac{1}{3} & \frac{1}{3} \\[2pt]
& & & & \frac{-1}{3} & \frac{1}{3} & 0 & 0 & 
0 & \frac{1}{3} & 0 & \frac{1}{3} \\[2pt]
& & & & \frac{-1}{3} & 0 & \frac{1}{3} & 0 & 
0 & 0 & \frac{1}{3} & \frac{1}{3} \\[2pt]
& & & & \frac{1}{3} & \frac{-1}{3} & \frac{-1}{3} & \frac{1}{3} & 
0 & 0 & 0 & \frac{1}{3}
\end{array}\right]
\\
& =
\left[\begin{array}{cccc|cccc|cccc}
\frac{5}{8} & 0 & 0 & \frac{1}{8} & 0 & 0 & 0 & 0 \\[2pt]
0 & \frac{5}{8} & \frac{1}{8} & 0 & 0 & 0 & 0 & 0 \\[2pt]
0 & \frac{1}{8} & \frac{5}{8} & 0 & 0 & 0 & 0 & 0 \\[2pt]
\frac{1}{8} & 0 & 0 & \frac{5}{8} & \frac{9}{10} & \frac{-3}{10} & 
\frac{-3}{10} & \frac{1}{10} \\[2pt]
\hline 
& & & & & & & & & & & \\[-11pt]
\frac{1}{8} & 0 & 0 & \frac{1}{8} & \frac{1}{10} & \frac{3}{10} & 
\frac{3}{10} & \frac{9}{10} \\[2pt]
0 & \frac{-1}{8} & \frac{-1}{8} & 0 & 0 & 0 & 0 & 0 \\[2pt]
0 & \frac{-1}{8} & \frac{-1}{8} & 0 & 0 & 0 & 0 & 0 \\[2pt]
\frac{1}{8} & 0 & 0 & \frac{1}{8} & 0 & 0 & 0 & 0 \\[2pt]
\hline
& & & & & & & & & & & \\[-11pt]
& & & & & & & & 1 & & & \\[2pt]
& & & & & & & & & 1 & & \\[2pt]
& & & & & & & & & & 1 & \\[2pt]
& & & & & & & & & & & 1
\end{array}\right]
\kern-2pt
\left[\begin{array}{cccc|cccc|cccc}
1 & & & & & & & & & & & \\[2pt]
& 1 & & & & & & & & & & \\[2pt]
& & 1 & & & & & & & & & \\[2pt]
& & & 1 & & & & & & & & \\[2pt]
\hline 
& & & & & & & & & & & \\[-11pt]
& & & & \frac{4}{15} & \frac{2}{15} & \frac{2}{15} & \frac{1}{15} & 
\frac{3}{5} & \frac{-1}{5} & \frac{-1}{5} & \frac{1}{15} \\[2pt]
& & & & \frac{2}{15} & \frac{2}{5} & \frac{1}{15} & \frac{1}{5} & 
\frac{-1}{5} & \frac{2}{5} & \frac{1}{15} & \frac{-2}{15} \\[2pt]
& & & & \frac{2}{15} & \frac{1}{15} & \frac{2}{5} & \frac{1}{5} & 
\frac{-1}{5} & \frac{1}{15} & \frac{2}{5} & \frac{-2}{15} \\[2pt]
& & & & \frac{1}{15} & \frac{1}{5} & \frac{1}{5} & \frac{3}{5} & 
\frac{1}{15} & \frac{-2}{15} & \frac{-2}{15} & \frac{4}{15} \\[2pt]
\hline
& & & & & & & & & & & \\[-11pt]
& & & & \frac{1}{3} & 0 & 0 & 0 & 
\frac{1}{3} & \frac{1}{3} & \frac{1}{3} & \frac{1}{3} \\[2pt]
& & & & \frac{-1}{3} & \frac{1}{3} & 0 & 0 & 
0 & \frac{1}{3} & 0 & \frac{1}{3} \\[2pt]
& & & & \frac{-1}{3} & 0 & \frac{1}{3} & 0 & 
0 & 0 & \frac{1}{3} & \frac{1}{3} \\[2pt]
& & & & \frac{1}{3} & \frac{-1}{3} & \frac{-1}{3} & \frac{1}{3} & 
0 & 0 & 0 & \frac{1}{3}
\end{array}\right]\kern-1pt,
\end{aligned}
$$
giving the following matrix representations of the Schur functions $f$, $f_{L/R}$ for site 2 with respect to $\mathcal{E}$, $\mathcal{E}_{L/R}$,
$$
\begin{aligned}
& \widehat{f}(z) = 
\begin{pmatrix}
\frac{8z^4-29z^3+62z^2-93z+72}{4(z-3)^3(3z-4)} & 
\frac{z^4-46z^3+181z^2-276z+144}{4(z-3)^3(3z-4)} & 
\frac{z^4-46z^3+181z^2-276z+144}{4(z-3)^3(3z-4)} & 
\frac{9z^4-70z^3+281z^2-480z+288}{4(z-3)^3(3z-4)}
\\ 
0 & 0 & 0 & 0
\\ 
0 & 0 & 0 & 0
\\ 
\frac{z(8z^3-23z^2+18z+9)}{4(z-3)^3(3z-4)} & 
-\frac{z^2(5z^2-10z+9)}{4(z-3)^3(3z-4)} & 
-\frac{z^2(5z^2-10z+9)}{4(z-3)^3(3z-4)} & 
\frac{z^2(3z^2-2z+3)}{4(z-3)^3(3z-4)}
\end{pmatrix},
\\
& \widehat{f}_L(z) = 
\begin{pmatrix}
-\frac{3z+8}{20(3z-4)} & \frac{3(7z-8)}{20(3z-4)} & 
\frac{3(7z-8)}{20(3z-4)} & \frac{53z-72}{20(3z-4)}
\\ 
0 & 0 & 0 & 0
\\ 
0 & 0 & 0 & 0
\\ 
-\frac{9z}{20(3z-4)} & \frac{3z}{20(3z-4)} & 
\frac{3z}{20(3z-4)} & -\frac{z}{20(3z-4)}
\end{pmatrix},
\\
& \widehat{f}_R(z) =
\begin{pmatrix}
-\frac{15z^3-35z^2+12z+36}{5(z-3)^3} & \frac{2(5z^3-10z^2+15z-9)}{5(z-3)^3} & 
\frac{2(5z^3-10z^2+15z-9)}{5(z-3)^3} & -\frac{5z^3+4z^2-6z+9}{5(z-3)^3}
\\[3pt] 
\frac{2(5z^3-20z^2+27z-9)}{5(z-3)^3} & -\frac{5z^3-5z^2-30z+54}{5(z-3)^3} & 
-\frac{5z^3-10z^2+9}{5(z-3)^3} & \frac{5z^3-17z^2+33z-27}{5(z-3)^3}
\\[3pt] 
\frac{2(5z^3-20z^2+27z-9)}{5(z-3)^3} & -\frac{5z^3-10z^2+9}{5(z-3)^3} & 
-\frac{5z^3-5z^2-30z+54}{5(z-3)^3} & \frac{5z^3-17z^2+33z-27}{5(z-3)^3}
\\[3pt] 
-\frac{5z^3-20z^2+18z+9}{5(z-3)^3} & \frac{5z^3-25z^2+45z-27}{5(z-3)^3} & 
\frac{5z^3-25z^2+45z-27}{5(z-3)^3} & -\frac{16z^2-69z+81}{5(z-3)^3}
\end{pmatrix}.
\end{aligned}
$$
The factorization $\widehat{f}=\widehat{f}_L\widehat{f}_R$, which follows from \eqref{eq:facdec}, may be patiently checked.

All the TOMs involved are irreducible since they have a unique invariant state which is faithful. Hence, site 2 is positive recurrent for all of them. Concerning the TOM $\mathcal{E}$, the expected return time to site 2 for an arbitrary state $\rho$ supported on it is given by
\beq \label{eq:ex3-t}
\tau(\rho\to|2\rangle) = 1 + \operatorname{Tr}(f'(1)(\rho)) = 
1+({\widehat{f}}{\,'}(1)\,vec(\rho))_1+({\widehat{f}}{\,'}(1)\,vec(\rho))_4 =
3+2\rho_{11}-\frac{7}{2}\operatorname{Re}(\rho_{12}).
\eeq
A similar calculation for the left/right TOMs $\mathcal{E}_{L/R}$ yields
\beq \label{eq:ex3-tLR}
\tau_L(\rho\to|2\rangle) =
\frac{1}{5}(7+16\rho_{11}-12\operatorname{Re}(\rho_{12})),
\qquad
\tau_R(\rho\to|2\rangle) =
2-\frac{3}{2}\operatorname{Re}(\rho_{12}).
\eeq
The expected return time for the whole TOM may be recovered from those of the left and right ones via \eqref{eq:tfac}. To check this, note that the trace preserving map $f_R(1)$ transforms the state $\rho$ into a state $\sigma$ concentrated on site 2 given by $vec(\sigma) = \widehat{f}_R(1) \, vec(\rho)$, thus determined by
$$
\sigma_{11} = \frac{1}{10}(3+4\rho_{11}-\operatorname{Re}(\rho_{12})),
\qquad
\sigma_{12} = \frac{1}{20}(3-6\rho_{11}+12\rho_{12}+2\overline{\rho_{12}}).
$$
Combined with \eqref{eq:ex3-tLR}, this yields
$\tau_L(\sigma\to|2\rangle) = 2(1+\rho_{11}-\operatorname{Re}(\rho_{12}))$. Inserting these results into the identity $\tau(\rho\to|2\rangle)=\tau_L(\sigma\to|2\rangle)+\tau_R(\rho\to|2\rangle)-1$ arising from \eqref{eq:tfac} leads again to \eqref{eq:ex3-t}.
\eex
\qee

\section{Nearest neighbour TOMs in 1D}
\label{sec:1D}

Although the previous theoretical results are valid even for infinite-dimensional CPTP maps and TOMs, the examples illustrating them have been {\color{black}limited, until now, to} the finite-dimensional setting. In this section we will discuss a special class of infinite-dimensional TOMs where the Schur techniques are particularly useful and reveal striking results: TOMs on the line ($V=\mathbb{Z}$) and the half-line ($V=\{i\in\mathbb{Z}:i\ge0\}$) with only nearest neighbour transitions, i.e. given by doubly infinite or semi-infinite block tridiagonal matrix representations. 

In what follows we will consider TOMs $\mathcal{E}$ with a Hilbert space $\mathcal{H}$ of internal degrees of freedom and an infinite-dimensional space of sites which will be either $\mathcal{S}=\operatorname{span}\{|i\rangle\}_{i\in\mathbb{Z}}$ (line) or $\mathcal{S}=\operatorname{span}\{|i\rangle\}_{i\ge0}$ (half-line). For convenience, we will distinguish the identity $I$ on $\mathcal{I}(\mathcal{H})$ from the direct sum of infinitely many copies of $I$, denoted by $\mathbb{I}$, while $\mathbb{I}_n$ will stand for the sum of a finite number $n$ of copies of $I$.

\subsection{Nearest neighbour TOMs on the half-line}
\label{ssec:HALF-LINE}

Consider a TOM $\mathcal{E}$ with a Hilbert space $\mathcal{H}$ of internal degrees of freedom and a site space $\mathcal{S}=\operatorname{span}\{|i\rangle\}_{i\ge0}$. We will assume that $\mathcal{E}$ has the tridiagonal shape
\beq \label{eq:3d-HL}
\mathcal{E} = 
\begin{bmatrix} 
\mathcal{E}_0^0 & \mathcal{E}_0^1 & \\[2pt] 
\mathcal{E}_1^0 & \mathcal{E}_1^1 & \mathcal{E}_1^2 \\[2pt]  
& \mathcal{E}_2^1 & \mathcal{E}_2^2 & \mathcal{E}_2^3 \\[-2pt]
& & \ddots & \ddots & \ddots
\end{bmatrix},
\qquad
\mathcal{E}^{i\pm1}_i \; \text{invertible}.
\eeq
The recurrence properties of site $i$ are codified by the Schur function
\beq \label{eq:HL-SCHUR}
f^{(i)}(z) = \mathbb{P}_i \mathcal{E} 
(\mathbb{I}-z\mathbb{Q}_i\mathcal{E})^{-1} \mathbb{P}_i,
\eeq
where $\mathbb{Q}_i=\mathbb{I}-\mathbb{P}_i$ and $\mathbb{P}_i=P_i\cdot P_i$ is the projection of $\mathcal{I}_\mathcal{S}(\mathcal{H})$ onto $\mathcal{I}(\mathcal{H})\otimes|i\rangle\langle i|$ given by the orthogonal projection $P_i$ of $\mathcal{H}\otimes\mathcal{S}$ onto $\mathcal{H}\otimes|i\rangle$. A general relation for FR-functions --which generalizes the so-called renewal equation for random walks-- implies that \cite[Theorem~2.5]{gvfr}
\beq \label{eq:ST-SC}
\mathbb{P}_i(\mathbb{I}-z\mathcal{E})^{-1}\mathbb{P}_i = (I-zf^{(i)}(z))^{-1},
\qquad |z|<1,
\eeq
{\color{black} a relation which will} be useful to identify $f^{(i)}$.

{\color{black}We note that} the matrix calculations carried out in the finite-dimensional case cannot be extended to this situation. Nevertheless, other useful techniques are available. Let us look for the moment at the Schur function \eqref{eq:HL-SCHUR} for site 0, i.e. $f=f^{(0)}$. Consider the block decompositions
$$
\mathcal{E} = 
\left[\begin{array}{c|cccc} 
\mathcal{E}_0^0 & \mathcal{E}_0^1 & \\[2pt]
\hline 
\\[-10pt]
\mathcal{E}_1^0 & \mathcal{E}_1^1 & \mathcal{E}_1^2 \\[2pt]  
& \mathcal{E}_2^1 & \mathcal{E}_2^2 & \mathcal{E}_2^3 \\[-2pt]
& & \ddots & \ddots & \ddots
\end{array}\right] =
\left[\begin{array}{c|cccc} 
\mathcal{E}_0^0 & \mathcal{E}_0^1 & \\[2pt]
\hline 
\\[-10pt]
\mathcal{E}_1^0 & & \\  
& & \mathcal{E}_1 & \\[-2pt]
& & & &
\end{array}\right],
\qquad 
\mathbb{Q}_0 = 
\left[\begin{array}{c|ccc}
\kern5pt & & & \\[2pt]
\hline 
& & & \\[-11pt]
& & & \\  
& & \kern3pt \mathbb{I} & \\[-2pt]
& & &
\end{array}\right],
$$
where $\mathcal{E}_1=\mathbb{Q}_0\mathcal{E}\mathbb{Q}_0$ is trace non-increasing on $\mathcal{I}_{\mathcal{S}_1}(\mathcal{H})$, with 
$\mathcal{S}_1=\operatorname{span}\{|i\rangle\}_{i\ge1}$. 
They lead to the block structures
$$
\mathbb{I}-z\mathbb{Q}_0\mathcal{E} = 
\left[\begin{array}{c|cccc}
I & & \\[2pt]
\hline 
\\[-10pt]
-z\mathcal{E}_1^0 & & \\  
& & \mathbb{I}-z\mathcal{E}_1 \kern-14pt & \\[-2pt]
& & & &
\end{array}\right],
\qquad
(\mathbb{I}-z\mathbb{Q}_0\mathcal{E})^{-1} = 
\left[\begin{array}{c|cccc}
I \kern2pt & & \\[2pt]
\hline 
\\[-11pt]
& & \\  
& & \kern-5pt (\mathbb{I}-z\mathcal{E}_1)^{-1} \kern-20pt & \\[-2pt]
& & & &
\end{array}\right]
\left[\begin{array}{c|cccc}
I & & \\[2pt]
\hline 
\\[-10pt]
z\mathcal{E}_1^0 & & \\  
& & \mathbb{I} \kern-12pt & \\[-2pt]
& & & &
\end{array}\right].
$$
Since $f(z)$ is precisely the upper-left block of $\mathcal{E} (\mathbb{I}-z\mathbb{Q}_0\mathcal{E})^{-1}$, the previous calculations yield 
$$
f(z) = \mathcal{E}_0^0 + z \mathcal{E}_0^1 
\mathbb{P}_1 (\mathbb{I}-z\mathcal{E}_1)^{-1} \mathbb{P}_1 \mathcal{E}_1^0.
$$
A relation similar to \eqref{eq:ST-SC} shows that 
$\mathbb{P}_1 (\mathbb{I}-z\mathcal{E}_1)^{-1} \mathbb{P}_1 = (I-zf_1(z))^{-1}$ for $|z|<1$, where $f_1$ is the Schur function of site 1 with respect to the trace non-increasing map $\mathcal{E}_1$ (see Remark~\ref{SCHUR-GEN}). Therefore,
\beq \label{eq:f-f1}
f(z) = \mathcal{E}_0^0 + 
z \mathcal{E}_0^1 (I-zf_1(z))^{-1} \mathcal{E}_1^0,
\eeq
which may be solved {\color{black}for} $f_1$ to give
\beq \label{eq:f1-f}
f_1(z) = z^{-1}I - 
\mathcal{E}_1^0(f(z)-\mathcal{E}_0^0)^{-1}\mathcal{E}_0^1.
\eeq
The iteration of the relation $f_0=f \mapsto f_1$ extends it to $f_i \mapsto f_{i+1}$, where $f_i$ is the Schur function for site $i$ with respect to the operator $\mathcal{E}_i$ obtained by removing the first $i$ block rows and columns in $\mathcal{E}$. We will refer to the Schur functions $f_i$ as the iterates of $f$ with respect to $\mathcal{E}$. In some situations, the relations among the iterates $f_i$ may help in the analysis of the Schur function $f$ (see the example below).

Regarding the Schur function $f^{(i)}$ of site $i$ for the original operator $\mathcal{E}$, we can resort to using the splitting rules \eqref{eq:facdec} to relate them to the iterates $f_i$ of $f=f^{(0)}$. Consider the overlapping decomposition given by
$$
\mathcal{E} = 
\left[\begin{array}{c|c}
& \\
\kern10pt \mathcal{E}_i^- \kern10pt & \\
& \\
\hline
& \\
& \kern10pt 0 \kern7pt \\
& 
\end{array}\right] +
\left[\begin{array}{c|c}
& \\[-5pt]
\kern7pt 0 \kern9pt & \\[-5pt]
& \\
\hline
& \\[5pt]
& \kern18pt \mathcal{E}_i \kern14pt \\[5pt]
&  
\end{array}\right],
\qquad
\mathcal{E}_i^- =
\begin{bmatrix}
\mathcal{E}_0^0 & \kern7pt \mathcal{E}_0^1 & & & 
\\[2pt]
\mathcal{E}_1^0 & \kern7pt \mathcal{E}_1^1 & \kern7pt \mathcal{E}_1^2 & & 
\\[-2pt]
& \kern-15pt \ddots & \kern-10pt \ddots & \ddots & 
\\
& & \kern-25pt \mathcal{E}_{i-1}^{i-2} & \kern-10pt \mathcal{E}_{i-1}^{i-1} & \mathcal{E}_{i-1}^{i} 
\\[2pt]
& & & \kern-10pt \mathcal{E}_{i}^{i-1} & 0
\end{bmatrix}, 
$$
which has site $i$ as overlapping subspace. According to Remark~\ref{rem:SR-GEN}, 
$$
f^{(i)} = f_i^- + f_i, 
$$
where $f_i$ is the $i$-th iterate of $f$ and $f_i^-$ is the Schur function of site $i$ with respect to the trace non-increasing map $\mathcal{E}_i^-$. Since $\mathcal{E}_i^-$ lives on a finite graph, $f_i^-$ may be obtained by standard matrix calculations when $\dim\mathcal{H}<\infty$.

The above overlapping decomposition is not of the type analyzed in Theorem~\ref{thm:dec} since neither $\mathcal{E}_i$ nor $\mathcal{E}_i^-$ are  TOMs. If we wish to use the results of that theorem we should refer instead to a decomposition such as $\mathcal{E} = (\mathcal{E}_L^{(i)}\oplus\mathbb{I}) +(\mathbb{I}_i\oplus\mathcal{E}_R^{(i)})-(\mathbb{I}_i\oplus\mathcal{E}_0^{(i)}\oplus\mathbb{I})$, with $\mathcal{E}_0^{(i)}=\mathcal{E}_{i-1}^{i}+\mathcal{E}_{i}^{i}+\mathcal{E}_{i+1}^{i}$ and
\beq \label{eq:od-nn}
\mathcal{E}_L^{(i)} = 
\begin{bmatrix}
\mathcal{E}_0^0 & \kern10pt \mathcal{E}_0^1 & & & 
\\[2pt]
\mathcal{E}_1^0 & \kern10pt \mathcal{E}_1^1 & \kern10pt \mathcal{E}_1^2 & & 
\\[-2pt]
& \kern-15pt \ddots & \kern-5pt \ddots & \kern5pt \ddots & 
\\
& & \kern-35pt \mathcal{E}_{i-1}^{i-2} & \kern-15pt \mathcal{E}_{i-1}^{i-1} &  \kern-12pt \mathcal{E}_{i-1}^{i} 
\\[2pt]
& & & \kern-15pt \mathcal{E}_{i}^{i-1} & \kern-7pt \mathcal{E}_{i}^{i}\!+\mathcal{E}_{i+1}^{i}
\end{bmatrix},
\qquad
\mathcal{E}_R^{(i)} = 
\begin{bmatrix}
\mathcal{E}_{i-1}^{i}\!+\mathcal{E}_{i}^{i} & \mathcal{E}_{i}^{i+1}
\\[2pt]
\mathcal{E}_{i+1}^{i} & \mathcal{E}_{i+1}^{i+1} & \kern5pt \mathcal{E}_{i+1}^{i+2}
\\[2pt]
& \mathcal{E}_{i+2}^{i+1} & \kern5pt \mathcal{E}_{i+2}^{i+2} & \kern5pt \mathcal{E}_{i+2}^{i+3}
\\[-2pt]
& & \kern-20pt \ddots & \kern-20pt \ddots & \kern-5pt \ddots
\end{bmatrix}.
\eeq
The splitting rules \eqref{eq:facdec} imply that the corresponding Schur functions for site $i$ satisfy
\beq \label{eq:fi}
f^{(i)} = f^{(i)}_L + f^{(i)}_R -\mathcal{E}^{(i)}_0.
\eeq
This alternative approach {\color{black}gives rise to} the following result, which reveals a surprising independence of the return probability to a site with respect to the details of the transition CP maps connecting to the previous sites. 

\begin{theorem}
Suppose that a nearest neighbour TOM $\mathcal{E}$ on the half-line, with a finite-dimensional Hilbert space of internal degrees of freedom, is symmetric --i.e. $\mathcal{E}_i^j=\mathcal{E}_j^i$ for every $i,j$-- and unital. Then, for any state $\rho$ supported on site $i$, the return probability $\pi(\rho\to|i\rangle)$ is invariant under any change in the first $i$ columns of $\mathcal{E}$ which keeps it as a unital symmetric nearest neighbour TOM.   
\end{theorem}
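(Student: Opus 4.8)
The plan is to exploit the overlapping decomposition at site $i$ introduced in \eqref{eq:od-nn}, namely $\mathcal{E} = (\mathcal{E}_L^{(i)}\oplus\mathbb{I}) + (\mathbb{I}_i\oplus\mathcal{E}_R^{(i)}) - (\mathbb{I}_i\oplus\mathcal{E}_0^{(i)}\oplus\mathbb{I})$, whose associated Schur splitting $f^{(i)} = f_L^{(i)} + f_R^{(i)} - \mathcal{E}_0^{(i)}$ is recorded in \eqref{eq:fi}. Combining this identity with Proposition~\ref{pro:rec-F}(i) exactly as in the proof of Theorem~\ref{thm:dec} yields $\pi(\rho\to|i\rangle) = \pi_L(\rho\to|i\rangle) + \pi_R(\rho\to|i\rangle) - 1$ for every state $\rho$ supported on site $i$, where $\pi_{L,R}$ refer to the finite left TOM $\mathcal{E}_L^{(i)}$ and the semi-infinite right TOM $\mathcal{E}_R^{(i)}$. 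The whole argument then reduces to two independent claims: that $\pi_L(\rho\to|i\rangle)=1$ always, and that $\pi_R(\rho\to|i\rangle)$ is left untouched by the admissible modifications.

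First I would show that $\mathcal{E}_L^{(i)}$ is a finite-dimensional \emph{unital} TOM. Its columns are trace preserving by construction, so it is a TOM; for unitality one checks $\sum_j(\mathcal{E}_L^{(i)})_k^j(I)=I$ row by row. For the interior rows $k<i$ this is just the unitality of $\mathcal{E}$, and for the boundary row $i$ the entry $(\mathcal{E}_L^{(i)})_i^i=\mathcal{E}_i^i+\mathcal{E}_{i+1}^i$ together with the symmetry $\mathcal{E}_{i+1}^i=\mathcal{E}_i^{i+1}$ turns $\mathcal{E}_i^{i-1}(I)+\mathcal{E}_i^i(I)+\mathcal{E}_{i+1}^i(I)$ into $\mathcal{E}_i^{i-1}(I)+\mathcal{E}_i^i(I)+\mathcal{E}_i^{i+1}(I)=I$, again by the unitality of $\mathcal{E}$. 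Once unitality of $\mathcal{E}_L^{(i)}$ is established, Theorem~\ref{thm:rec-ui} gives positive recurrence of site $i$ for $\mathcal{E}_L^{(i)}$, so that $\pi_L(\rho\to|i\rangle)=1$ and hence $\pi(\rho\to|i\rangle)=\pi_R(\rho\to|i\rangle)$.

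The decisive observation is that $\mathcal{E}_R^{(i)}$ is built exclusively from the columns of $\mathcal{E}$ indexed by $j\ge i$: its entries are $\mathcal{E}_{i-1}^i,\mathcal{E}_i^i,\mathcal{E}_{i+1}^i,\mathcal{E}_i^{i+1}$ and the blocks $\mathcal{E}_k^j$ with $k,j\ge i+1$. An admissible modification changes only the first $i$ columns (those indexed by $0,\dots,i-1$) while keeping the columns $j\ge i$ fixed; in particular $\mathcal{E}_{i-1}^i$ is fixed. Here symmetry plays a subtle but essential role: although $\mathcal{E}_{i-1}^i=\mathcal{E}_i^{i-1}$ and the latter sits in the modifiable column $i-1$, the requirement that column $i$ remain unchanged pins down $\mathcal{E}_{i-1}^i$, and therefore forces $\mathcal{E}_i^{i-1}$ to stay fixed as well. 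Consequently the entire right TOM $\mathcal{E}_R^{(i)}$ is invariant under the allowed perturbations, and so is $\pi_R(\rho\to|i\rangle)$; since $\pi(\rho\to|i\rangle)=\pi_R(\rho\to|i\rangle)$, the return probability is invariant, as claimed. Note that $\pi_L$ itself is \emph{not} invariant (the left TOM genuinely changes), which is why the precise value $\pi_L=1$, rather than mere invariance of $\pi_L$, is indispensable.

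The main obstacle I anticipate is the step $\pi_L=1$: Theorem~\ref{thm:rec-ui} is phrased in terms of the restriction of the channel to the relevant Hilbert space $\widetilde{\mathcal H}_0$ of site $i$, so I must check that the unitality of $\mathcal{E}_L^{(i)}$ descends to that restriction. This follows from the general fact that every enclosure projection $R$ of a unital channel $\Phi$ is a fixed point: from $B_kR=RB_kR$ one gets $\Phi(R)=R\Phi(R)R$, which is supported on $\operatorname{ran}R$ and satisfies $\Phi(R)\le\Phi(I)=I$ there, i.e. $\Phi(R)\le R$, while trace preservation gives $\operatorname{Tr}\Phi(R)=\operatorname{Tr}R$, forcing $\Phi(R)=R$; thus the restricted Kraus operators $B_kR$ again sum to the identity of $\widetilde{\mathcal H}_0$ and Theorem~\ref{thm:rec-ui} applies. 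The only remaining point is bookkeeping: confirming that the $\mathbb{I}$-extended decomposition \eqref{eq:od-nn} produces the same Schur splitting $f^{(i)}=f_L^{(i)}+f_R^{(i)}-\mathcal{E}_0^{(i)}$, and hence the same recurrence relation, as the null-extended decompositions of Theorem~\ref{thm:dec}, which holds because the Schur function of site $i$ depends only on the sub-TOMs $\mathcal{E}_{L,R}^{(i)}$.
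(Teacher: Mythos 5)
Your proposal is correct and follows essentially the same route as the paper: the overlapping decomposition \eqref{eq:od-nn} together with Theorem~\ref{thm:dec}, unitality of the finite left TOM $\mathcal{E}_L^{(i)}$ via symmetry, Theorem~\ref{thm:rec-ui} to get $\pi_L=1$, and the observation that $\mathcal{E}_R^{(i)}$ involves only columns $j\ge i$. Your extra check that unitality descends to the restriction on the relevant Hilbert space (via $\Phi(R)=R$ for the enclosure projection) is a correct filling-in of a step the paper leaves implicit when invoking Theorem~\ref{thm:rec-ui}.
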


\begin{proof}
According to Theorem~\ref{thm:dec}, 
\beq \label{eq:sym-od}
\pi(\rho\to|i\rangle) = 
\pi_L^{(i)}(\rho\to|i\rangle)+\pi_R^{(i)}(\rho\to|i\rangle)-1,
\eeq
where the subscripts $L/R$ and the superscripts $(i)$ refer to the TOMs in \eqref{eq:od-nn}. If $\mathcal{E}$ is unital and symmetric,
$$
\mathcal{E}_k^{k-1}(I) + \mathcal{E}_k^k(I) + \mathcal{E}_{k+1}^k(I) =
\mathcal{E}_k^{k-1}(I) + \mathcal{E}_k^k(I) + \mathcal{E}^{k+1}_k(I) = I.
$$ 
This implies that the finite-dimensional TOM $\mathcal{E}_L^{(i)}$ in \eqref{eq:od-nn} is unital. Then, from Theorem~\ref{thm:rec-ui} we know that any subspace is positive recurrent for $\mathcal{E}_L^{(i)}$, so that \eqref{eq:sym-od} becomes the identity $\pi(\rho\to|i\rangle)=\pi_R^{(i)}(\rho\to|i\rangle)$, which proves the result because $\pi_R^{(i)}(\rho\to|i\rangle)$ only depends on the CP maps in $\mathcal{E}_R^{(i)}$.
\end{proof} 

The following examples illustrate the use of these techniques to study site recurrence for a nearest neighbour TOM on the half-line.

\bex \label{ex:HL}
Suppose that a TOM on the half-line with a finite-dimensional Hilbert space $\mathcal{H}$ of internal degrees of freedom has the structure
\beq \label{eq:ex4-3dqc}
\mathcal{E} = 
\begin{bmatrix}
\Phi_0 & \lambda\Phi_+
\\
\Phi_- & 0 & \kern-5pt \lambda\Phi
\\
& (1-\lambda)\Phi & \kern-5pt 0 & \kern5pt \lambda\Phi
\\
& & \kern-5pt (1-\lambda)\Phi & \kern5pt 0 & \kern15pt \lambda\Phi
\\
& & & \kern5pt \ddots & \kern15pt \ddots & \kern15pt \ddots
\end{bmatrix},
\qquad 
\begin{aligned}
& \lambda\in(0,1),
\\
& \Phi, \Phi_\pm \; \text{invertible},
\end{aligned}
\eeq
where $\Phi_+$ and $\Phi$ are CPTP maps, while $\Phi_0$ and $\Phi_-$ are CP maps such that $\Phi_0+\Phi_-$ is trace preserving. In view of \eqref{eq:f-f1}, the Schur function $f=f^{(0)}$ for site 0 is given by
\beq \label{eq:ex4-eqf}
f(z) = \Phi_0 + \lambda z\Phi_+(I-zf_1(z))^{-1}\Phi_-,
\eeq
where $f_1$ is the Schur function of site 1 with respect to the TOM on $\mathcal{I}_{\mathcal{S}_1}(\mathcal{H})$ given by
$$
\mathcal{E}_1 = 
\begin{bmatrix}
0 & \kern-5pt \lambda\Phi \\
(1-\lambda)\Phi & \kern-5pt 0 & \kern5pt \lambda\Phi \\
& \kern-5pt (1-\lambda)\Phi & \kern5pt 0 & \kern15pt \lambda\Phi \\
& & \kern5pt \ddots & \kern15pt \ddots & \kern15pt \ddots
\end{bmatrix}.
$$ 
Therefore, the iterates $f_1$ and $f_2$ coincide, which, according to \eqref{eq:f1-f}, implies that
\beq \label{eq:ex4-eqf1}
f_1(z) = z^{-1}I-\lambda(1-\lambda)\Phi f_1(z)^{-1}\Phi.
\eeq
This may be rewritten as
$$
\Phi^{-1} = z\left(f_1(z)\Phi^{-1} + 
\lambda(1-\lambda)\left(f_1(z)\Phi^{-1}\right)^{-1}\right),
$$
showing that $\Phi^{-1}$ commutes with $f_1(z)\Phi^{-1}$, which means that $\Phi$ commutes with $f_1(z)$.
Therefore, \eqref{eq:ex4-eqf1} is equivalent to the quadratic equation
$$
f_1(z)^2 - z^{-1}f_1(z) + \lambda(1-\lambda)\Phi^2 = 0,
$$
whose solution has the form
\beq \label{eq:ex4-f1}
f_1(z) = \frac{1}{2z} \left(I-\sqrt{I-4\lambda(1-\lambda)z^2\Phi^2}\right).
\eeq
Since $f_1$ must be analytic at the origin, in the above expression the square root corresponds to the analytic branch determined by $\sqrt{I-\lambda(1-\lambda)z^2\Phi^2} \xrightarrow{z\to0} I$, so that
\beq \label{eq:ex4-f1pot}
f_1(z) = -\frac{1}{2} \sum_{n\ge1} \binom{1/2}{n} z^{2n-1} \Psi^n,
\qquad
\Psi = 4\lambda(1-\lambda)\Phi^2.
\eeq
This square root is indeed analytic on the open unit disk because $\lambda(1-\lambda)\le1/4$ and $\|\Phi\|=1$ because $\Phi$ is CPTP. This is in agreement with the analyticity domain for the Schur function $f_1$. Inserting into \eqref{eq:ex4-eqf} the relation \eqref{eq:ex4-eqf1}, rewritten as $ \lambda(1-\lambda)z(I-zf_1(z))^{-1} = \Phi^{-1}f_1(z)\Phi^{-1} $, we finally find that    
\beq \label{eq:ex4-f}
f(z) = \Phi_0 + 
\frac{1}{1-\lambda} \Phi_+\Phi^{-1} f_1(z) \Phi^{-1}\Phi_- = 
\Phi_0 + \frac{1}{2(1-\lambda)z} \Phi_+\Phi^{-1} 
\left(I-\sqrt{I-4\lambda(1-\lambda)z^2\Phi^2}\right) \Phi^{-1}\Phi_-.
\eeq
This yields the Schur function $f=f^{(0)}$ of site 0. From \eqref{eq:ex4-f1pot}, \eqref{eq:ex4-f} and the fact that $\Phi$ and $\Phi_+$ are trace preserving, we conclude that, for any state $\rho$ supported on site 0, 
\beq \label{eq:ex4-f1rho}
\begin{aligned}
\operatorname{Tr}(f_1(z)(\rho)) 
& = \frac{1}{2z} \left(1-\sqrt{1-4\lambda(1-\lambda)z^2}\right),
\\
\operatorname{Tr}(f(z)(\rho)) 
& = \operatorname{Tr}(\Phi_0(\rho)) + 
\frac{1-\sqrt{1-4\lambda(1-\lambda)z^2}}{2(1-\lambda)z}
\operatorname{Tr}(\Phi_-(\rho)) 
\\
& = 1 - 
\left(1 - \frac{1-\sqrt{1-4\lambda(1-\lambda)z^2}}{2(1-\lambda)z} \right)
\operatorname{Tr}(\Phi_-(\rho)).
\end{aligned}
\eeq
This result leads to the return probability
$$
\begin{aligned}
\pi(\rho\to|0\rangle) 
& = \operatorname{Tr}(f(1)(\rho)) 
= 1 - \left[1 - \frac{1-|1-2\lambda|}{2(1-\lambda)} \right]
\operatorname{Tr}(\Phi_-(\rho)) 
= 1 - \frac{1-2\lambda+|1-2\lambda|}{2(1-\lambda)} 
\operatorname{Tr}(\Phi_-(\rho))
\\
& = \begin{cases}
1 - \frac{1-2\lambda}{1-\lambda} \operatorname{Tr}(\Phi_-(\rho)), 
& \lambda<\frac{1}{2},
\\[2pt]
1, & \lambda\ge\frac{1}{2}.
\end{cases}
\end{aligned}
$$
Since $\Phi_-$ is invertible, $\operatorname{Tr}(\Phi_-(\rho))\ne0$, so we find that $\pi(\rho\to|0\rangle)=1$ iff $\lambda\ge1/2$, in which case,
$$
\tau(\rho\to|0\rangle) = 1 + \lim_{x\uparrow1}\operatorname{Tr}(f'(x)(\rho)) 
= 1 + \frac{1}{2(1-\lambda)} \left(\frac{1}{|1-2\lambda|}-1\right) \operatorname{Tr}(\Phi_-(\rho)) = 
1 + \frac{\operatorname{Tr}(\Phi_-(\rho))}{2\lambda-1}.
$$
Hence, site 0 is recurrent iff $\lambda\ge1/2$, otherwise no state returns with probability one to site 0. Also, site 0 is positive recurrent iff $\lambda>1/2$, the expected return time being infinite for every state otherwise.   

Bearing in mind that $f_i=f_1$ for $i\ge1$, we conclude that the Schur function of a site $i\ge1$ is given by $f^{(i)}=f_i^-+f_1$, where $f_i^-$ is the Schur function of site $i$ for the map $\mathcal{E}_i^-$ given by the principal $(i+1)\times(i+1)$ submatrix of $\mathcal{E}$. To illustrate this result consider the case of site 1, for which we find that
$$
\mathcal{E}_1^- = 
\begin{bmatrix}
\Phi_0 & \lambda\Phi_+ \\[2pt] \Phi_- & 0
\end{bmatrix},
\qquad
f_1^-(z) = \lambda z\Phi_-(1-z\Phi_0)^{-1}\Phi_+.
$$
The CP maps $\Phi_0$ and $\Phi_-$ must be trace non-increasing because $\Phi_0+\Phi_-$ is trace preserving, thus $\|\Phi_0\|,\|\Phi_-\|\le1$ according to \eqref{eq:CPnorm}. Since we assume $\dim\mathcal{H}<\infty$, the relation \eqref{eq:CPnorm} also shows that $\|\Phi_0\|=1$ would imply that $\operatorname{Tr}(\Phi_0(\rho_*)) = 1$ for some state $\rho_*$. Combined with the fact that $\Phi_0+\Phi_-$ is trace preserving, this yields $\operatorname{Tr}(\Phi_-(\rho_*)) = 0$, so that $\Phi_-(\rho_*)=0$ in contradiction with the invertibility of $\Phi_-$. We conclude that $\|\Phi_0\|<1$, hence $I-\Phi_0$ is invertible and $\rho \mapsto \sigma=(I-\Phi_0)(\rho)$ maps $\mathcal{I}(\mathcal{H})$ one to one onto itself. Therefore, the trace preserving condition
$\operatorname{Tr}(\Phi_-(\rho)) = \operatorname{Tr}((I-\Phi_0)(\rho))$ 
also reads as
$\operatorname{Tr}(\Phi_-(I-\Phi_0)^{-1}(\sigma)) = \operatorname{Tr}(\sigma)$, i.e. $\Phi_-(I-\Phi_0)^{-1}$ is trace preserving. Using this result and the first equality in \eqref{eq:ex4-f1rho}, bearing in mind that $f^{(1)}=f_1^-+f_1$ and  $\Phi_+$ is trace preserving, we get for any state $\rho$ supported on site 1,
\beq \label{eq:ex4-piHL}
\pi(\rho\to|1\rangle) = 
\operatorname{Tr}(f_1^-(1)(\rho)) + \operatorname{Tr}(f_1(1)(\rho))
= \lambda + \frac{1}{2} \left(1-|1-2\lambda|\right) =
\begin{cases}
2\lambda, & \lambda<\frac{1}{2},
\\[2pt]
1, & \lambda\ge\frac{1}{2}.
\end{cases}
\eeq
Therefore, $\pi(\rho\to|1\rangle)=1$ iff $\lambda\ge1/2$. Then,
$$
\begin{aligned}
\tau(\rho\to|1\rangle)  
& = 1 + \operatorname{Tr}((f_1^-)'(1)(\rho)) + \operatorname{Tr}(f'_1(1)(\rho))
= 1 + \lambda \operatorname{Tr}((I-\Phi_0)^{-1}\Phi_+(\rho)) + 
\frac{1}{2} \left(\frac{1}{|1-2\lambda|}-1\right)
\\ 
& = \lambda\left(\frac{1}{2\lambda-1} + 
\operatorname{Tr}((I-\Phi_0)^{-1}\Phi_+(\rho))\right),
\end{aligned}
$$
so that $\tau(\rho\to|1\rangle)<\infty$ when $\lambda>1/2$. We conclude that site 1 is recurrent iff $\lambda\ge1/2$, the return probability being independent of the sate in any case. In contrast, the mean return time depends on the sate when $\lambda>1/2$, the only case which makes site 1 to be positive recurrent  since otherwise the expected return time is infinite for every state.

A difference between the recurrence properties of sites 0 and 1 is that in the later case the return probabilities do not depend on the first column of $\mathcal{E}$, i.e. any choice of $\Phi_0$ and $\Phi_-$ --with $\Phi_-$ invertible and $\Phi_0+\Phi_-$ trace preserving-- yields the same probability $\pi(\rho\to|1\rangle)$ for each state $\rho$. The splitting rules shed light on this invariance (see Remark~\ref{rem:indep}). In the present case, the overlapping decomposition \eqref{eq:od-nn} is given by
$$
\mathcal{E}_L^{(1)} = 
\begin{bmatrix}
\Phi_0 & \lambda\Phi_+ \\[2pt] \Phi_- & (1-\lambda)\Phi
\end{bmatrix},
\kern15pt
\mathcal{E}_R^{(1)} = 
\begin{bmatrix}
\lambda\Phi_+ & \kern-5pt \lambda\Phi \\
(1-\lambda)\Phi & \kern-5pt 0 & \kern5pt \lambda\Phi \\
& \kern-5pt (1-\lambda)\Phi & \kern5pt 0 & \kern15pt \lambda\Phi \\
& & \kern5pt \ddots & \kern15pt \ddots & \kern15pt \ddots
\end{bmatrix},
\kern15pt
\mathcal{E}_0^{(1)} = \lambda\Phi_+ + (1-\lambda)\Phi.
$$
From Theorem~\ref{thm:dec} we know that the corresponding return probabilities are related by 
$$
\pi(\rho\to|1\rangle) = 
\pi_L^{(1)}(\rho\to|1\rangle) + \pi_R^{(1)}(\rho\to|1\rangle) - 1.
$$
Let us have a look at the return probabilities for the left and right TOMs $\mathcal{E}_{L/R}^{(1)}$. From their definitions, \eqref{eq:f-f1} and \eqref{eq:ex4-eqf1} it follows that the Schur functions of site 1 for the left and right TOMs are given by 
$$
f_L^{(1)}(z) = (1-\lambda)\Phi + \lambda z\Phi_-(I-z\Phi_0)^{-1}\Phi_+,
\qquad
f_R^{(1)}(z) = \lambda\Phi_+ + \lambda(1-\lambda) z\Phi(I-zf_1(z))^{-1}\Phi =
\lambda\Phi_+ + f_1(z).
$$
Note that $f^{(1)}=f_L^{(1)}+f_R^{(1)}-\mathcal{E}_0^{(1)}$, in agreement with \eqref{eq:fi}. Since $\Phi_-(I-\Phi_0)^{-1}$ and $\Phi_+$ are trace preserving,  so is $f_L^{(1)}(1)$, hence site 1 is recurrent with respect to $\mathcal{E}_L^{(1)}$. In consequence, 
$$
\pi(\rho\to|1\rangle) = \pi_R^{(1)}(\rho\to|1\rangle) = 
\operatorname{Tr}(f_R^{(1)}(1)(\rho)), 
$$
which not only reproduces \eqref{eq:ex4-piHL}, but also explains that $\pi(\rho\to|1\rangle)$ is independent of $\Phi_0$ and $\Phi_-$ because of the recurrence of site 1 with respect to the left TOM (see Remark~ \ref{rem:indep}). Note however that this independence does not hold for the expected return time, for which the identity 
$$
\tau(\rho\to|1\rangle) = 
\tau_L^{(1)}(\rho\to|1\rangle) + \tau_R^{(1)}(\rho\to|1\rangle) - 1,
$$ 
predicted by Theorem~\ref{thm:dec}, can be easily checked starting from the previous expressions of $f_{L/R}^{(1)}$.

\eex
\qee

\subsection{Nearest neighbour TOMs on the line}
\label{ssec:LINE}
The study of TOMs on the line with the structure
\beq \label{eq:3d-L}
\mathcal{E} = 
\begin{bmatrix}
\ddots& \ddots & \ddots \\  
& \mathcal{E}_{-1}^{-2} & \mathcal{E}_{-1}^{-1} & \mathcal{E}_{-1}^0 \\[2pt] 
& & \mathcal{E}_0^{-1} & \mathcal{E}_0^0 & \kern2pt \mathcal{E}_0^1 \\[2pt]  
& & & \mathcal{E}_1^0 & \kern2pt \mathcal{E}_1^1 & \kern2pt \mathcal{E}_1^2 \\
& & & & \kern2pt \ddots & \kern2pt \ddots & \kern2pt \ddots
\end{bmatrix},
\qquad
\mathcal{E}^{i\pm1}_i \; \text{invertible},
\eeq
may be reduced to the half-line case \eqref{eq:3d-HL} via a classical folding trick \cite{berez} which combines sites $i$ and $-i-1$ into a single one by doubling the corresponding site space $\mathcal{I}(\mathcal{H})$ to $\mathcal{I}(\mathcal{H})\oplus\mathcal{I}(\mathcal{H})$. This amounts to rewriting the TOM \eqref{eq:3d-L} as map 
$$
\boldsymbol{\mathcal{E}} = 
\begin{bmatrix} 
\boldsymbol{\mathcal{E}}_0^0 & \boldsymbol{\mathcal{E}}_0^1 & \\[2pt] 
\boldsymbol{\mathcal{E}}_1^0 & \boldsymbol{\mathcal{E}}_1^1 & \boldsymbol{\mathcal{E}}_1^2 \\[2pt]  
& \boldsymbol{\mathcal{E}}_2^1 & \boldsymbol{\mathcal{E}}_2^2 & \boldsymbol{\mathcal{E}}_2^3 \\[-2pt]
& & \ddots & \ddots & \ddots
\end{bmatrix},
\qquad
\boldsymbol{\mathcal{E}}_i^j = 
\begin{bmatrix}
\mathcal{E}_{-i-1}^{-j-1} & \mathcal{E}_{-i-1}^{j}
\\[3pt]
\mathcal{E}_{i}^{-j-1} & \mathcal{E}_{i}^{j}
\end{bmatrix},
$$
acting on column vectors
$$
\boldsymbol{\rho} = 
\begin{bmatrix}
\boldsymbol{\rho}_0 \\ \boldsymbol{\rho}_1 \\ \vdots
\end{bmatrix},
\qquad \boldsymbol{\rho}_i =
\begin{bmatrix}
\rho_{-i-1} \\ \rho_i
\end{bmatrix}.
$$

Instead of following this folding approach, we will see that the half-line reduction may be simplified for recurrence problems by using splitting techniques. Consider the decomposition which overlaps at site 0 given by
$\mathcal{E} = (\mathcal{E}^-\oplus\mathbb{I}) + (\mathbb{I}\oplus\mathcal{E}^+)$, where
\beq \label{eq:EpmL}
\mathcal{E}^- = 
\begin{bmatrix}
\ddots & \ddots & \ddots & 
\\
& \mathcal{E}_{-1}^{-2} & \mathcal{E}_{-1}^{-1} & \mathcal{E}_{-1}^{0} 
\\[2pt]
& & \mathcal{E}_{0}^{-1} & 0
\end{bmatrix},
\qquad
\mathcal{E}^+ = 
\begin{bmatrix}
\mathcal{E}_{0}^{0}  & \mathcal{E}_{0}^{1}
\\[2pt]
\mathcal{E}_{1}^{0} & \mathcal{E}_{1}^{1} & \mathcal{E}_{1}^{2}
\\[-2pt]
& \ddots & \ddots & \ddots
\end{bmatrix}.
\eeq
The splitting rules \eqref{eq:facdec} imply that the corresponding Schur functions for site 0 are related by 
$$
f=f^-+f^+.
$$
Under the relabelling $i\to-i$ of the sites, the map $\mathcal{E}^-$ reads as 
\beq \label{eq:reord}
\begin{bmatrix}
0  & \mathcal{E}_{0}^{-1}
\\[2pt]
\mathcal{E}_{-1}^{0} & \mathcal{E}_{-1}^{-1} & \mathcal{E}_{-1}^{-2}
\\[-2pt]
& \ddots & \ddots & \ddots
\end{bmatrix},
\eeq
hence $f^\pm$ may be analyzed using the tricks developed for the half-line.

Alternatively, one can use an overlapping decomposition into true TOMs on the half-line, namely, $\mathcal{E} = (\mathcal{E}_L\oplus\mathbb{I}) + (\mathbb{I}\oplus\mathcal{E}_R)-(\mathbb{I}\oplus\mathcal{E}_0\oplus\mathbb{I})$, with $\mathcal{E}_0=\mathcal{E}_{-1}^{0}+\mathcal{E}_{0}^{0}+\mathcal{E}_{1}^{0}$ and
\beq \label{eq:od-nn-L}
\mathcal{E}_L = 
\begin{bmatrix}
\ddots & \ddots & \ddots & 
\\
& \mathcal{E}_{-1}^{-2} & \mathcal{E}_{-1}^{-1} & \kern-5pt \mathcal{E}_{-1}^{0} 
\\[2pt]
& & \mathcal{E}_{0}^{-1} & \kern-5pt \mathcal{E}_{0}^{0}\!+\mathcal{E}_{1}^{0}
\end{bmatrix},
\qquad
\mathcal{E}_R = 
\begin{bmatrix}
\mathcal{E}_{-1}^{0}\!+\mathcal{E}_{0}^{0} & \kern-5pt \mathcal{E}_{0}^{1}
\\[2pt]
\mathcal{E}_{1}^{0} & \kern-5pt \mathcal{E}_{1}^{1} & \kern7pt \mathcal{E}_{1}^{2}
\\[-2pt]
& \kern-3pt \ddots & \kern7pt \ddots & \kern7pt \ddots
\end{bmatrix}.
\eeq
This not only permits us to express the Schur function $f$ for site 0 on the line in terms of those $f_{L/R}$ of the left and right TOMs on the half-line,  
$$
f = f_L + f_R - \mathcal{E}_0,
$$
but also allows us to use directly the results of Theorem~\ref{thm:dec}.

\bex \label{ex:L}
Assume that a TOM on the line with a finite-dimensional Hilbert space $\mathcal{H}$ of internal degrees of freedom has the shape
\beq \label{eq:ex4-3dqc-L}
\mathcal{E} = 
\begin{bmatrix}
\kern-20pt \ddots & \kern-50pt \ddots & \kern-80pt \ddots
\\
(1-\lambda)\Phi & \kern-5pt 0 & \kern-5pt \lambda\Phi
\\
& \kern-5pt (1-\lambda)\Phi & \kern-5pt 0 & \kern-5pt \lambda\Phi_-
\\
& & \kern-5pt (1-\lambda)\Phi_+ & \kern-5pt \Phi_0 & \kern-5pt \lambda\Phi_+
\\
& & & \kern-5pt (1-\lambda)\Phi_- & \kern-5pt 0 & \lambda\Phi
\\
& & & & \kern-5pt (1-\lambda)\Phi & \kern5pt 0 & \kern15pt \lambda\Phi
\\
& & & & & \ddots & \ddots & \ddots
\end{bmatrix},
\qquad 
\begin{aligned}
& \lambda\in(0,1),
\\
& \Phi, \Phi_\pm \; \text{invertible},
\end{aligned}
\eeq
with $\Phi$, $\Phi_+$ CPTP maps and $\Phi_0$, $\Phi_-$ CP maps such that $\Phi_0+\Phi_-$ is trace preserving. Labelling the sites so that $\Phi_0$ is the self-transition CP map for site 0, \eqref{eq:EpmL} becomes
$$
\mathcal{E}^- = 
\begin{bmatrix}
\kern-20pt \ddots & \kern-50pt \ddots & \kern-80pt \ddots
\\
(1-\lambda)\Phi & \kern-5pt 0 & \kern-5pt \lambda\Phi
\\
& \kern-5pt (1-\lambda)\Phi & \kern-5pt 0 & \lambda\Phi_-
\\
& & \kern-5pt (1-\lambda)\Phi_+ & 0
\end{bmatrix},
\qquad
\mathcal{E}^+ = 
\begin{bmatrix}
\Phi_0 & \kern-5pt \lambda\Phi_+
\\
(1-\lambda)\Phi_- & \kern-5pt 0 & \lambda\Phi
\\
& \kern-5pt (1-\lambda)\Phi & \kern5pt 0 & \kern15pt \lambda\Phi
\\
& & \ddots & \ddots & \ddots
\end{bmatrix}.
$$
Resorting to the techniques developed for the half-line and performing the reordering which transforms $\mathcal{E}^-$ into \eqref{eq:reord}, we find the following Schur functions for site 0,
$$
f^-(z) =  \lambda(1-\lambda)z \Phi_+(I-zf_1(z))^{-1}\Phi_- =
\Phi_+\Phi^{-1}f_1(z)\Phi^{-1}\Phi_-,
\qquad
f^+ =  \Phi_0 + f^{-},
\qquad
f = \Phi_0 + 2f^-,
$$
with $f_1$ given by \eqref{eq:ex4-f1}. Therefore, for any state $\rho$ at site 0,
$$
\operatorname{Tr}(f(z)(\rho)) = \operatorname{Tr}(\Phi_0(\rho)) + 
\frac{1-\sqrt{1-4\lambda(1-\lambda)z^2}}{z} \operatorname{Tr}(\Phi_-(\rho)) = 
1 - \left(1-\frac{1-\sqrt{1-4\lambda(1-\lambda)z^2}}{z}\right) 
\operatorname{Tr}(\Phi_-(\rho)) 
$$
which leads to the return probability
\beq \label{eq:ex4-pi-L}
\pi(\rho\to|0\rangle) = \operatorname{Tr}(f(1)(\rho)) =
1 - |1-2\lambda| \operatorname{Tr}(\Phi_-(\rho)).
\eeq
Now, site 0 is recurrent only for $\lambda=1/2$, otherwise no state returns to site 0 with probability one due to the invertibility hypothesis for $\Phi_-$. The mean return time for $\lambda=1/2$ is
$$
\pi(\rho\to|0\rangle) = 1 + \lim_{x\uparrow1}\operatorname{Tr}(f'(x)(\rho)) =
\infty,
$$
thus no state has a finite expected return time to site 0.

Decomposing the TOM on the line into true TOMs on the half-line overlapping at site 0, leads to the left and right TOMs
$$
\mathcal{E}_L = 
\begin{bmatrix}
\kern-20pt \ddots & \kern-50pt \ddots & \kern-80pt \ddots
\\
(1-\lambda)\Phi & \kern-5pt 0 & \kern-5pt \lambda\Phi
\\
& \kern-5pt (1-\lambda)\Phi & \kern-5pt 0 & \kern-5pt \lambda\Phi_-
\\
& & \kern-5pt (1-\lambda)\Phi_+ & \kern-5pt \Phi_0+(1-\lambda)\Phi_-
\end{bmatrix},
\qquad
\mathcal{E}_R = 
\begin{bmatrix}
\Phi_0+\lambda\Phi_- & \kern-5pt \lambda\Phi_+
\\
(1-\lambda)\Phi_- & \kern-5pt 0 & \lambda\Phi
\\
& \kern-5pt (1-\lambda)\Phi & \kern5pt 0 & \kern15pt \lambda\Phi
\\
& & \ddots & \ddots & \ddots
\end{bmatrix},
$$
with Schur functions for site 0 given by
$$
f_L = \Phi_0 + (1-\lambda)\Phi_- + f^{-},
\qquad
f_R = \Phi_0 + \lambda\Phi_- + f^{-},
\qquad
f = f_L + f_R - (\Phi_0+\Phi_-).
$$
This yields for any state $\rho$ at site 0,
$$
\pi_L(\rho\to|0\rangle) = 1 - \lambda\operatorname{Tr}(\Phi_-(\rho)) + \operatorname{Tr}(f^-(1)(\rho)),
\qquad
\pi_R(\rho\to|0\rangle) = 1 - (1-\lambda)\operatorname{Tr}(\Phi_-(\rho)) + \operatorname{Tr}(f^-(1)(\rho)),
$$
which reproduces \eqref{eq:ex4-pi-L} via the identity $\pi(\rho\to|0\rangle)=\pi_L(\rho\to|0\rangle)+\pi_R(\rho\to|0\rangle)-1$ from Theorem~\ref{thm:dec}. The expected return time is also recovered from the relation $\tau(\rho\to|0\rangle)=\tau_L(\rho\to|0\rangle)+\tau_R(\rho\to|0\rangle)-1$ in this theorem since $\tau_L(\rho\to|0\rangle)=\tau_R(\rho\to|0\rangle)=\infty$ for any state $\rho$ at site 0.

\eex
\qee

{\bf Acknowledgments.} CFL acknowledges financial support by PROPESQ/UFRGS 16535-2017 and Programa de Apoio \`a P\'os-Gradua\c c\~ao CAPES/PROAP 2018 (PPGMAT/UFRGS).  LVC has been supported in part by the research project MTM2017-89941-P from Ministerio de Econom\'{\i}a, Industria y Competitividad of Spain and the European Regional Development Fund (ERDF), and by Project E26\_17R of Diputaci\'{o}n General de Arag\'{o}n (Spain) and the ERDF 2014-2020 ``Construyendo Europa desde Arag\'on". The authors are grateful to J. Zubelli for his hospitality and financial support regarding our visit to IMPA, to the organizers of the Quantum Walks, Open Quantum Walks, Quantum Computation Session at the MCA 2017 Montr\'eal and to Departamento de Matem\'atica Aplicada and IUMA from Universidad de Zaragoza, where part of this work was carried out. 

\medskip

\end{document}